\newtheorem{theorem}{Theorem}[section]
\newtheorem{lemma}[theorem]{Lemma}
\newtheorem{proposition}[theorem]{Proposition}
\newtheorem{fact}[theorem]{Fact}
\newtheorem{corollary}[theorem]{Corollary}
\theoremstyle{definition}
\newtheorem{definition}[theorem]{Definition}
\newcommand{\ds}{\text{\textcircled{s}}}
\newcommand{\R}{\mathbb{R}}
\newcommand{\Q}{\mathbb{Q}}
\newcommand{\class}[1]{\mathbf{#1}}
\newcommand{\BPP}{\class{BPP}}
\newcommand{\RP}{\class{RP}}
\newcommand{\RL}{\class{RL}}
\newcommand{\SL}{\class{SL}}
\renewcommand{\L}{\class{L}}
\newcommand{\tL}{\tilde{\class{L}}}
\newcommand{\BPL}{\class{BPL}}
\newcommand{\SPACE}{\class{SPACE}}
\renewcommand{\P}{\class{P}}
\newcommand{\eqdef}{\mathbin{\stackrel{\rm def}{=}}}
\newcommand{\tO}{\tilde{O}}
\newcommand{\poly}{\mathrm{poly}}
\newcommand{\polylog}{\mathrm{polylog}}
\def\textprob#1{\textmd{\textsc{#1}}}
\newcommand{\USTConn}{\textprob{Undirected S-T Connectivity}}
\newcommand{\ULapSys}{\textprob{Undirected Laplacian Systems}}
\title{Derandomization Beyond Connectivity:\\Undirected Laplacian Systems in Nearly Logarithmic Space}
\date{\today}
\author{Jack Murtagh\thanks{Supported by NSF grant CCF-1420938} \\School of Engineering \& Applied Sciences\\
Harvard University\\ 
Cambridge, MA USA\\
\texttt{jmurtagh@g.harvard.edu}\\
\url{http://scholar.harvard.edu/jmurtagh} \and Omer Reingold \\ Computer Science Department\\ Stanford University\\ Stanford, CA USA \\ \texttt{reingold@stanford.edu}  \and Aaron Sidford \\ Management Science \& Engineering \\ Stanford University \\ Stanford, CA USA \\ \texttt{sidford@stanford.edu} \\ \url{http://www.aaronsidford.com} \and 
Salil Vadhan\thanks{Supported by NSF grant CCF-1420938 and a Simons Investigator Award.}\\School of Engineering \& Applied Sciences\\
Harvard University\\ 
Cambridge, MA USA\\
\texttt{salil@seas.harvard.edu}\\
\url{http://seas.harvard.edu/~salil}}
\begin{document}

\begin{titlepage}
\maketitle

\begin{abstract}
We give a deterministic $\tO(\log n)$-space algorithm for approximately solving linear systems given by Laplacians of undirected graphs, and consequently also approximating hitting times, commute times, and escape probabilities for undirected graphs.  Previously, such systems were known to be solvable by randomized algorithms using $O(\log n)$ space (Doron, Le Gall, and Ta-Shma, 2017) and hence by deterministic algorithms using  $O(\log^{3/2} n)$ space (Saks and Zhou, FOCS 1995 and JCSS 1999).  

Our algorithm combines ideas from time-efficient Laplacian solvers (Spielman and Teng, STOC `04; Peng and Spielman, STOC `14) with ideas used to show that \USTConn \ is in deterministic logspace (Reingold, STOC `05 and JACM `08; Rozenman and Vadhan, RANDOM `05).
\end{abstract}

\vfill
\textbf{Keywords:} space complexity, derandomization, expander graphs, spectral sparsification, random walks, linear systems
\end{titlepage}

\begin{section}{Introduction}

\subsection{The RL vs. L Problem}

One of the central problems in computational complexity is to understand the power of randomness for efficient computation.  There is strong evidence that randomness does not provide a substantial savings for standard algorithmic problems, where the input is static and the algorithm has time to access it in entirety.   Indeed, under widely believed complexity assumptions (e.g. that SAT has circuit complexity $2^{\Omega(n)}$), it is known that every randomized algorithm can be made deterministic with only a polynomial slowdown ($\BPP=\P$, $\RP=\P$) and a constant-factor increase in space ($\BPL=\L$, $\RL=\L$)~\cite{ImpagliazzoWi97,KlivansMe02}.  

A major challenge is to prove such derandomization results unconditionally, without relying on unproven complexity assumptions.  In the time-bounded case, it is known that proving that $\RP=\P$ requires progress on circuit lower bounds~\cite{ImpagliazzoKaWi02,KabanetsIm04}.
But for the space-bounded case, there are no such barriers, and we can hope for an
unconditional proof that $\RL=\L$.   Indeed, Nisan~\cite{Nisan92} gave an unconditional construction of a pseudorandom generator for space-bounded computation with seed length $O(\log^2 n)$, which was used
by Saks and Zhou~\cite{SaksZh99} to prove that 
$\RL\subseteq \SPACE(\log^{3/2} n) \eqdef \L^{3/2}$.  Unfortunately,
despite much effort over the past two decades, this remains the best known upper bound on the deterministic space complexity of $\RL$.

For many years, the most prominent example of a problem in $\RL$ not known to be in $\L$ was \USTConn, which can be solved in randomized logarithmic space by performing a polynomial-length random walk from the start vertex $s$ and accepting if the destination vertex $t$ is ever visited~\cite{AleliunasKaLiLoRa79}.   In 2005, Reingold~\cite{Reingold08} gave a deterministic logspace algorithm for this problem.  Since \USTConn\ is complete for $\SL$ (symmetric logspace)~\cite{LewisPa80}, this also implies deterministic logpsace algorithms for several other natural problems, such as \textprob{Bipartiteness}~\cite{JonesLiLa76,AlvarezGr00}. 

Reingold's algorithm provided hope for renewed progress on the general $\RL$ vs. $\L$ problem.  Indeed, it was shown in \cite{ReingoldTrVa06} that solving S-T connectivity on {\em directed} graphs where the random walk is promised to have polynomial mixing time (and where $t$ has noticeable stationary probability) is complete for $\RL$ (generalized to promise problems).   While \cite{ReingoldTrVa06} was also able to generalize Reingold's methods to Eulerian directed graphs, the efforts to handle all of $\RL$ with this approach stalled.  Thus, researchers turned back to constructing pseudorandom generators for more restricted models of space-bounded computation, such as various types of constant-width branching programs~\cite{BravermanRaRaYe10,BrodyVe10, KouckyNiPu11,SimaZa11,GopalanMeReTrVa12,ImpagliazzoMeZu12,Steinke12,ReingoldStVa13,SteinkeVaWa14}.

\subsection{Our Work}
In this paper, we restart the effort to obtain deterministic logspace algorithms for increasingly rich graph-theoretic problems beyond those known to be in $\SL$.  In particular, we provide a nearly logarithmic space algorithm for approximately solving \ULapSys, which also implies nearly logarithmic-space algorithms for approximating hitting times, commute times, and escape probabilities for random walks on undirected graphs.  

Our algorithms are obtained by combining the techniques from recent
nearly linear-time Laplacian solvers~\cite{ST04,PS13} with methods related to Reingold's algorithm~\cite{RozenmanVa05}.  The body of work of time-efficient Laplacian solvers has recently been extended to Eulerian directed graphs~\cite{CKPPRSV16}, which in turn was used to obtain algorithms to approximate stationary probabilities for arbitrary directed graphs with polynomial mixing time through recent reductions \cite{CKPPSV16}.

This raises the tantalizing possibility of extending our nearly logarithmic-space Laplacian solvers in a similar way to prove that $\RL\subseteq \tL \eqdef \SPACE(\tO(\log n))$, since approximating stationary probabilities on digraphs with polynomial mixing time suffices to solve all of $\RL$~\cite{ReingoldTrVa06,ChungReVa11}, and Reingold's algorithm has been extended to Eulerian digraphs~\cite{ReingoldTrVa06,RozenmanVa05}.

\begin{subsection}{Laplacian system solvers} 
Given an undirected multigraph $G$ with adjacency matrix $A$ and diagonal degree matrix $D$, the {\em Laplacian} of $G$ is the matrix $L=D-A$. Solving systems of equations $Lx=b$ in the Laplacians of graphs (and in general symmetric diagonally dominant systems) arises in many applications throughout computer science. In a seminal paper, Spielman and Teng gave a nearly linear time algorithm for solving such linear systems \cite{ST04}, which was improved in a series of follow up works \cite{KMP10, KMP11, KOSZ13, PS13, LS13} including recent extensions to directed graphs \cite{CKPPSV16, CKPPRSV16}. These methods have sparked a large body of work using Laplacian solvers as an algorithmic primitive for problems such as max flow \cite{CKMST11, KLOS14}, randomly sampling spanning trees \cite{KM09}, sparsest cut \cite{S09}, graph sparsification \cite{SS11}, as well as problems in computer vision \cite{KMT09}.

Recent works have started to study the space complexity of solving Laplacian systems. Ta-Shma~\cite{T13} gave a {\em quantum} logspace algorithm for approximately solving general linear systems that are suitably well-conditioned.  Doron, Le Gall, and Ta-Shma~\cite{DLT17} showed that there is a {\em randomized} logspace algorithm for approximately solving Laplacian systems on digraphs with polynomial mixing time. 
\end{subsection}

\subsection{Main Result}

We give a nearly logarithmic-space algorithm for approximately solving undirected Laplacian systems:

\begin{theorem} \label{thm:main}
There is a deterministic algorithm that, given an undirected multigraph $G=(V,E)$ specified as a list of edges, a vector $b\in \Q^{|V|}$ in the image of $G$'s Laplacian $L$, and an approximation parameter $\epsilon>0$, finds a vector $x\in \Q^{|V|}$, such that $$\|x-x^*\|\leq \epsilon\cdot \|x^*\|$$ for some $x^*$ such that $Lx^*=b$, in space
$$O\left(\log n\cdot\log\log (n/\epsilon)\right),$$
where $n$ is the bitlength of the input $(G,b,\epsilon)$. 
\end{theorem}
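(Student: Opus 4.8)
The plan is to reduce the problem to repeatedly applying two primitives that are each implementable in nearly logarithmic space: (i) spectral sparsification / approximation of a graph's Laplacian by a structurally simpler one, and (ii) a space-efficient analogue of Reingold's iterated derangement-and-expansion used to turn a graph into an expander. The starting point is the recursive framework of Peng--Spielman~\cite{PS13}: to approximately invert $L = D - A$, write the normalized Laplacian as $I - M$ where $M = D^{-1/2} A D^{-1/2}$ is the transition matrix of a lazy-ified reversible walk, and use the identity expressing $(I-M)^{-1}$ in terms of $(I-M^2)^{-1}$, namely $(I-M)^{-1} = \tfrac12 (I + (I+M)(I-M^2)^{-1}(I+M))$. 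Squaring $M$ corresponds to taking the two-step walk, which doubles the spectral gap; after $O(\log n)$ squarings the graph is an expander whose Laplacian is trivially invertible to good accuracy (a constant number of terms of the Neumann series suffices). So $L^{-1}b$ can be computed by unwinding this recursion of depth $O(\log n)$, provided we can (a) represent and multiply by the relevant matrices, and (b) control the blowup in the number of edges when we square.

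First I would set up the graph-squaring-with-sparsification step so that it runs in space $O(\log n \cdot \log\log(n/\epsilon))$. Squaring a multigraph squares its edge count, so after $k$ squarings we would have $n^{2^k}$ edges — far too many to write down. The fix, exactly as in the time-efficient solvers, is to spectrally sparsify after each squaring, keeping the number of edges near-linear; but sparsification is a randomized procedure, so instead I would use the \emph{derandomized} squaring of Rozenman--Vadhan~\cite{RozenmanVa05}, which replaces the full two-step walk by a pseudorandom subset of two-step walks selected via an expander on the edge-labels, and which provably still improves the spectral gap (losing only a small additive term governed by the expansion of the auxiliary expander). Crucially, derandomized squaring is computable in logarithmic space per level via the composition trick from Reingold's theorem, and composing $O(\log n)$ levels costs an extra $O(\log\log(n/\epsilon))$ factor (this is the same accounting that puts \USTConn\ in $\L$ and its approximate/weighted refinements in $\tL$). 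I would need to verify that derandomized squaring interacts correctly with the Peng--Spielman identity: the identity tolerates replacing $M^2$ by a matrix $\widetilde{M}$ with $I - \widetilde{M} \approx_\delta I - M^2$ in the spectral (Loewner) sense, and derandomized squaring yields exactly such a $\widetilde{M}$, with $\delta$ small enough that the errors accumulated over $O(\log n)$ levels stay below $\epsilon$.

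Next I would assemble the solver: express the approximate inverse as a product/sum of $O(\log n)$ matrices, each of which is (a power of) a derandomized-squared, sparsified transition matrix at some level of the recursion, interleaved with the $(I+M)$ factors from the identity. Evaluating this expression on the input vector $b$ is a matter of $O(\log n)$ matrix-vector multiplications, each done in composition with the space-bounded routine that generates the relevant matrix entry on demand; standard space-composition ($\mathrm{SPACE}(s_1)$ composed with $\mathrm{SPACE}(s_2)$ costing $O(s_1 + s_2)$ with reusable work tape, iterated $O(\log n)$ times with a $\log$-size index stack) gives the claimed $O(\log n \cdot \log\log(n/\epsilon))$ bound. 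Numerical precision is handled by carrying $O(\log(n/\epsilon))$ bits throughout, which only affects the space by constant factors inside the $\log\log$. Finally I would translate back from the normalized Laplacian to $L$ itself (conjugating by $D^{\pm 1/2}$, which is diagonal and trivially space-cheap), and convert the guarantee $\|x - x^*\|$ small in the $L$-norm / spectral norm into the stated relative error, using that $b$ lies in the image of $L$ so the pseudoinverse is the right object.

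The main obstacle I expect is the error analysis across the recursion: showing that the \emph{multiplicative} spectral approximations introduced by derandomized squaring at each of the $O(\log n)$ levels compose into an overall approximation good enough that the final relative error is $\epsilon$, while keeping the auxiliary expanders' parameters (hence the space) at $O(\log\log(n/\epsilon))$ overhead rather than, say, $O(\log n)$. This requires a careful quantitative version of "derandomized squaring improves the gap" that tracks both the multiplicative distortion and the additive slack simultaneously, and matching it against the tolerance of the Peng--Spielman recursion — essentially proving a robust, space-aware version of their convergence theorem. A secondary subtlety is that the recursion as stated assumes the graph is connected (or restricts to the relevant connected component / the image of $L$); handling disconnected inputs and the kernel of $L$ cleanly, in logspace, needs the connectivity machinery of Reingold's theorem as a preprocessing step.
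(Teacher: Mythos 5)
Your overall architecture matches the paper's (Peng--Spielman recursion, derandomized squaring in place of squaring-plus-sparsification, connectivity preprocessing via Reingold, space-efficient evaluation by composition), but there is a genuine gap in how you obtain the dependence on $\epsilon$. You propose to make the final accuracy come out of the recursion itself: choose the per-level spectral error $\delta$ so that the $O(\log n)$ levels accumulate to at most $\epsilon$, and you assert that composing the $O(\log n)$ levels then costs only an extra $O(\log\log(n/\epsilon))$ factor. That accounting fails. To make each level contribute error $O(\epsilon/\log n)$ you need auxiliary expanders with second eigenvalue $O(\epsilon/\log n)$, hence degree $c=\poly((\log n)/\epsilon)$, and the iterated derandomized square costs space $O(\log n + k\cdot\log c)$ with $k=O(\log n)$, i.e. $O\left(\log n\cdot(\log\log n+\log(1/\epsilon))\right)$ --- singly, not doubly, logarithmic in $1/\epsilon$; for $\epsilon=1/\poly(n)$ this is $\Theta(\log^2 n)$, no better than repeated true squaring. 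The paper sidesteps exactly this: the derandomized-squaring recursion is run only to a \emph{constant} spectral approximation of $L^+$ (expanders of degree $\polylog(n)$ with $\mu=1/O(\log n)$), and the accuracy is then boosted to $\epsilon$ by preconditioned Richardson iteration (Lemma \ref{lem:boostapprox}, Corollary \ref{cor:boostapprox}): $O(\log(1/\epsilon))$ terms, each a product of $O(\log(1/\epsilon))$ matrices, evaluated by the recursive matrix-product routine in space $O(\log n\cdot\log\log(1/\epsilon))$. This boosting step is entirely absent from your proposal, and without it (or an equivalent) the claimed bound $O(\log n\cdot\log\log(n/\epsilon))$ does not follow from the construction you describe.

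Two smaller points. Since the normalized Laplacian $I-M$ is singular, the identity you invoke needs the rank-one correction, $(I-M)^+=\frac{1}{2}\left(I-J+(I+M)(I-M^2)^+(I+M)\right)$, and all the error analysis must be carried out orthogonal to $\vec{1}$ (equivalently, per connected component); you flag this but treat it as routine, whereas it is what makes spectral approximation of pseudoinverses (Part 7 of Proposition \ref{prop:psdfacts}) applicable. Also, the theorem's guarantee is relative error in $\ell_2$, while the natural guarantee from an $\epsilon$-spectral approximation of $L^+$ is in the $L$-norm; translating between the two loses a $\poly(n,d)$ factor (Lemmas \ref{lem:normtranslation} and \ref{lem:pseudoinv_implies_sol}), so the solver must be run at accuracy $\epsilon/\poly(n)$ --- harmless for the stated space bound precisely because the dependence is $\log\log$, but it needs to be said, and it again presupposes the boosting step.
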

In particular, for $\epsilon=1/\poly(n)$, the algorithm uses space
$\tO(\log n)$.  Note that since the algorithm applies to multigraphs (represented as edge lists), we can handle polynomially large integer edge weights. Known reductions~\cite{CKPPSV16} imply the same space bounds for estimating hitting times, commute times, and escape probabilities for random walks on undirected graphs.  (See Section~\ref{sec:applications}.)

\subsection{Techniques}

The starting point for our algorithm is the undirected Laplacian solver of Peng and Spielman~\cite{PS13}, which is a randomized algorithm that uses polylogarithmic parallel time  and a nearly linear amount of work.   (Interestingly,
concurrently and independently of \cite{Reingold08}, Trifonov~\cite{trifonov05} gave a $\tO(\log n)$-space algorithm for \USTConn\ by importing techniques from parallel algorithms, like we do.)  It implicitly\footnote{Nearly linear-time algorithms for Laplacian system solvers do not have enough time to write done a full approximate pseudoinverse, which may be dense; instead they compute the result of applying the approximate matrix to a given vector.} 
computes an approximate {\em pseudoinverse} $L^{+}$ of a graph Laplacian (formally defined in Section~\ref{sec:pseudoinverse}), which is equivalent to approximately solving Laplacian systems. Here we will sketch their algorithm and how we obtain a space-efficient analogue of it.

By using the deterministic logspace algorithm for \USTConn~\cite{Reingold08}, we may assume without loss of generality that our input graph $G$ is connected (else we find the connected components and work on each separately).  By adding self-loops (which does not change the Laplacian), we may also assume that $G$ is regular and nonbipartite.  For notational convenience, here and through most of the paper we will work with the {\em normalized Laplacian}, which in the case that $G$ is regular,\footnote{In an irregular graph, the normalized Laplacian is defined to be $I-D^{-1/2}AD^{-1/2}$, where $D$ is the diagonal matrix of vertex degrees.  When $G$ is $d$-regular, we have $D=dI$, so $D^{-1/2}AD^{-1/2} = A/d=M$.} equals $L=L(G)=I-M$, where  $M=M(G)$ is the $n\times n$ transition matrix for the random walk on $G$ and now we are using $n$ for the number of vertices in $G$. 
Since the uniform distribution is stationary for the random walk on a regular graph, the all-ones vector $\vec{1}$ is in the kernel of $L$.  Because $G$ is connected, there is no other stationary distribution and $L$ is of rank $n-1$. Thus computing $L^+$ amounts to inverting $L$ on the space orthogonal to $\vec{1}$. 

The Peng--Spielman algorithm is based on the following identity: 
$$(I-M)^+ = \frac{1}{2}\cdot\left(I-J + (I+M)(I-M^2)^+(I+M)\right),$$
where $J$ is the matrix with all entries $1/n$.   That is,
\begin{equation}
\label{eqn:pengspielman}
L(G)^+ = \frac{1}{2}\cdot\left(I-J+(I+M(G))L(G^2)^+(I+M(G))\right),
\end{equation}
where $G^2$ is the multigraph whose edges correspond to walks of length 2 in $G$.

This gives rise to a natural algorithm for computing $L(G)^+$, by recursively computing $L(G^2)^+$ and applying Equation (\ref{eqn:pengspielman}).  
After squaring the graph $k=O(\log n)$ times, we are considering all walks of length $2^k$, which is beyond the $O(n^2)$ mixing time of regular graphs, and hence $M(G^{2^k})$ is approximately $J$, and we can easily approximate the pseudoinverse as $L(G^{2^k})^+ 
\approx (I-J)^+ = I-J$, the Laplacian of the complete graph with self loops on every vertex.

However, each time we square the graph, the degree also squares, which is too costly in terms of computational complexity.  Thus, to obtain nearly linear time, Peng and Spielman~\cite{PS13} sparsify $G^2$ at each step of recursion.   Specifically, they carefully combine elements of randomized sparsification procedures from \cite{ST11,OV11,ST13} that retains $\tO(n)/\epsilon^2$ edges from $G^2$ and provides a spectral $\epsilon$-approximation $G'$ to $G^2$ in the sense that for every vector $v$, we have $v^T L(G') v = (1\pm \epsilon) v^T L(G^2) v$. The recursion of Equation~(\ref{eqn:pengspielman}) behaves nicely with respect to spectral approximation, so that if we replace $G^2$ with such an $\epsilon$-approximation $G'$ at each step, we obtain a $O(k\epsilon)$-approximation over the $k=O(\log n)$ levels of recursion provided $\epsilon\leq 1/O(k)$. Thus, we can
take $\epsilon=1/O(\log n)$ and obtain a good spectral approximation overall, using graphs with $\tO(n)$ edges throughout and thereby maintaining a nearly linear amount of work.

There are two issues with trying to obtain a deterministic $\tO(\log n)$-space algorithm from this approach.  First, we cannot perform random sampling to sparsify.  Second, even if we derandomize the sampling step in deterministic logarithmic space, a direct recursive implementation would cost $O(\log n)$ space for each of the $k=O(\log n)$ levels of recursion, for a space bound of $O(\log^2 n)$.

For the first issue, we show that the sparsification can be done deterministically using the {\em derandomized square} of Rozenman and Vadhan~\cite{RozenmanVa05}, which was developed in order to give a simpler proof that \USTConn\ is in $\L$.  If $G$ is a $d$-regular graph, a derandomized square of $G$ is obtained by using an expander $H$ on $d$ vertices to select a subset of the walks of length 2 in $G$.  If $H$ has degree $c\ll d$, then the corresponding derandomized square of $G$ will have degree $d\cdot c \ll d^2$.   In \cite{RozenmanVa05}, it was shown that the derandomized square improves expansion (as measured by spectral gap) nearly as well actual squaring, to within a factor of $1-\epsilon$, where $\epsilon$ is the second largest eigenvalue of $H$ in absolute value.   Here we show that derandomized squaring actually provides a spectral $O(\epsilon)$-approximation to $G^2$. Consequently, we can use derandomized squaring with expanders of second eigenvalue $1/O(\log n)$ and hence degree $c=\polylog(n)$ in the Peng--Spielman algorithm. 

Next, to obtain a space-efficient implementation, we consider what happens when we repeatedly apply Equation~(\ref{eqn:pengspielman}), but with true squaring replaced by derandomized squaring.  We obtain a sequence of graphs $G_0, G_1,\ldots,G_k$, where $G_0=G$ and $G_i$ is the derandomized square of $G_{i-1}$, a graph of degree $d\cdot c^i$.  We recursively compute spectral approximations  of the pseudoinverses $L(G_i)^+$ as follows:
\begin{equation}
\label{eqn:ourrecursion}
\widetilde{L(G_{i-1})^+} = \frac{1}{2}\cdot\left(I-J+(I+M(G_{i-1}))\widetilde{L(G_i)^+}(I+M(G_{i-1}))\right).
\end{equation}
Opening up all $k=O(\log n)$ levels of this recursion, there is
an explicit {\em quadratic} matrix polynomial $p$ where each of the (noncommuting) variables appears at most twice in each term, such that
$$\widetilde{L(G_0)} = p(I+M(G_0),I+M(G_1),\ldots,I+M(G_k)).$$ 
Our task is to evaluate this polynomial in space $\tO(\log n)$.  
First, we use the fact, following \cite{Reingold08,RozenmanVa05}, that a careful implementation of $k$-fold derandomized squaring using explicit expanders of degree $c$ can be evaluated in space $O(\log n+k\cdot\log c)=\tO(\log n)$.  (A naive evaluation would take space $O(k\cdot \log n)=O(\log^2 n)$.)  This means that we can construct each of the matrices $I+M(G_{i})$ in space $\tO(\log n)$.\footnote{We follow the usual model of space-bounded computation, where algorithms can have output (to write-only memory) that is larger than their space bound.  We review the standard composition lemma for such algorithms in Section~\ref{sect:spacemodel}.}  Next, we observe that each term in the multilinear polynomial multiplies at most $2k+1$ of these matrices together, and hence can be computed recursively in space $O(\log k)\cdot O(\log n)=\tO(\log n)$.   Since iterated addition is also in logspace, we can then sum to evaluate the entire matrix polynomial in space $\tO(\log n)$.

To obtain an arbitrarily good final approximation error $\epsilon>0$, we could use expanders of degree $c=\poly((\log n)/\epsilon)$ for our derandomized squaring, but this would yield a space complexity of at least $k\cdot \log n \geq \log(1/\epsilon)\cdot \log n$.  To obtain the doubly-logarithmic dependence on $1/\epsilon$ claimed in Theorem~\ref{thm:main}, we follow the same approach as \cite{PS13}, computing a constant factor spectral approximation as above, and then using ``Richardson iterations'' at the end to improve the approximation factor.  Interestingly, this doubly-logarithmic dependence on $\epsilon$ is even better than what is achieved by the randomized algorithm of \cite{DLT17}, which uses space $O(\log(n/\epsilon))$.  
\end{section}

\begin{section}{Preliminaries}
\begin{subsection}{Graph Laplacians}
Let $G=(V,E)$ be an undirected multigraph on $n$ vertices. The {\em adjacency matrix} $A$ is the $n\times n$ matrix such that entry $A_{ij}$ contains the number of edges from vertex $j$ to vertex $i$ in $G$. The {\em degree matrix} $D$ is the diagonal $n\times n$ matrix such that $D_{ii}$ is the degree of vertex $i$ in $G$. The {\em Laplacian} of $G$ is defined to be
\[
L=D-A.
\]
$L$ is a symmetric, positive semidefinite matrix with non-positive off-diagonal entries and diagonal entries $L_{ii}=\sum_{j\not=i}|L_{ij}|$ equaling the sum of the absolute values of the off-diagonal entries in that row. Each row of a Laplacian sums to 0 so the all-ones vector is in the kernel of every Laplacian.  

We will also work with the normalized Laplacian of regular graphs $LD^{-1}=I-M$ where $M=AD^{-1}$ is the transition matrix of the random walk on the graph $G$.  

\begin{definition}
Let $G$ be a regular undirected graph with transition matrix $M$. Then we define 
\[
\lambda(G)=\max_{\substack{v\perp\vec{1}\\ v\neq 0}}\frac{\|Mv\|}{\|v\|}=\mathrm{2nd~largest ~absolute~value ~of ~the~ eigenvalues~ of ~}M.
\]
The {\em spectral gap} of $G$ is $\gamma(G)=1-\lambda(G)$
\end{definition}
The spectral gap of a multigraph is a good measure of its expansion properties. It is shown in \cite{RozenmanVa05} that if $G$ is a $d$-regular multigraph on $n$ vertices with a self loop on every vertex then $\lambda(G)\leq 1-1/2d^{2}n^{2}$.

Throughout the paper we let $J$ denote the $n\times n$ matrix with $1/n$ in every entry. So $I-J$ is the normalized Laplacian of the complete graph with a self loop on every vertex. We use $\vec{1}$ to denote the all ones vector.
\end{subsection}

\begin{subsection}{Moore-Penrose pseudoinverse} \label{sec:pseudoinverse}
Since a normalized Laplacian $L=I-M$ is not invertible, we will consider the Moore-Penrose pseudoinverse of $L$. 
\begin{definition}
Let $L$ be a real-valued matrix. The {\em Moore-Penrose pseudoinverse} $L^{+}$ of $L$ is the unique matrix satisfying the following:
\begin{enumerate}
\item $LL^{+}L=L$
\item $L^{+}LL^{+}=L^{+}$
\item $L^{+}L = (L^{+}L)^{\intercal}$
\item $LL^{+} = (LL^{+})^{\intercal}$.
\end{enumerate}
\end{definition}

\begin{fact}
For a real-valued matrix $L$, $L^+$ has the following properties:
\begin{enumerate}
\item $\mathrm{ker}(L^{+})=\mathrm{ker}(L^{\intercal})$
\item $\mathrm{im}(L^{+})=\mathrm{im}(L^{\intercal})$
\item If $c$ is a nonzero scalar then $(cL)^{+}=c^{-1}L^{+}$
\item $L^+$ is real-valued
\item If $L$ is symmetric with eigenvalues $\lambda_1,\ldots,\lambda_n$, then $L^{+}$ has the same eigenvectors as $L$ and eigenvalues $\lambda_1^{+},\ldots,\lambda_n^{+}$ where
\[
\lambda_i^{+}=\begin{dcases}
\frac{1}{\lambda_i} ~&\mathrm{if~} \lambda_i\not=0\\ 
0 ~&\mathrm{if~} \lambda_i=0.
\end{dcases}
\]
\end{enumerate}
\end{fact}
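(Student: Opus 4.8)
\emph{Proof plan.} The plan is to deduce all five properties from the singular value decomposition (SVD) together with the uniqueness of $L^{+}$ guaranteed by its definition. First I would write a real SVD $L = U\Sigma V^{\intercal}$, with $U,V$ real orthogonal and $\Sigma$ rectangular diagonal with entries $\sigma_1,\dots,\sigma_r>0$ followed by zeros, where $r=\mathrm{rank}(L)$, and let $\Sigma^{+}$ be the matrix obtained from $\Sigma^{\intercal}$ by inverting each nonzero entry. I would then check, by a short diagonal computation using $\Sigma\Sigma^{+}\Sigma=\Sigma$, $\Sigma^{+}\Sigma\Sigma^{+}=\Sigma^{+}$, and the symmetry of the diagonal matrices $\Sigma^{+}\Sigma$ and $\Sigma\Sigma^{+}$, that $V\Sigma^{+}U^{\intercal}$ satisfies the four defining equations of the pseudoinverse; by uniqueness, $L^{+}=V\Sigma^{+}U^{\intercal}$. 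Every claim then follows by inspecting this formula (and, for part~5, the analogous spectral formula).

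For parts~1 and~2: since $U$ and $V$ are invertible, $\mathrm{ker}(L^{+})=\mathrm{ker}(\Sigma^{+}U^{\intercal})$ is the span of the last $n-r$ columns of $U$, and likewise $\mathrm{ker}(L^{\intercal})=\mathrm{ker}(\Sigma^{\intercal}U^{\intercal})$ is the span of the same columns, so the two kernels coincide; dually, $\mathrm{im}(L^{+})=\mathrm{im}(V\Sigma^{+})$ and $\mathrm{im}(L^{\intercal})=\mathrm{im}(V\Sigma^{\intercal})$ are both the span of the first $r$ columns of $V$. (Alternatively, one can take orthogonal complements and use defining property~4.)

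For part~3 I would observe that, for $c\neq 0$, the matrix $c^{-1}L^{+}$ satisfies each of the four defining equations with $L$ replaced by $cL$ (each equation is homogeneous of degree one in the pair $(cL,\,c^{-1}L^{+})$), so uniqueness gives $(cL)^{+}=c^{-1}L^{+}$. Part~4 is immediate from $L^{+}=V\Sigma^{+}U^{\intercal}$ with $U,V,\Sigma$ real; this is the one spot where it is cleanest to use the real SVD rather than to argue from the axioms alone, since the defining equations a priori make sense over $\mathbb{C}$. Finally, for part~5, when $L=L^{\intercal}$ the real spectral theorem gives $L=Q\Lambda Q^{\intercal}$ with $Q$ orthogonal and $\Lambda=\mathrm{diag}(\lambda_1,\dots,\lambda_n)$; the scalar identities $\lambda_i\lambda_i^{+}\lambda_i=\lambda_i$ and $\lambda_i^{+}\lambda_i\lambda_i^{+}=\lambda_i^{+}$, together with the fact that $\Lambda\Lambda^{+}=\Lambda^{+}\Lambda$ is diagonal, show that $Q\Lambda^{+}Q^{\intercal}$ with $\Lambda^{+}=\mathrm{diag}(\lambda_1^{+},\dots,\lambda_n^{+})$ obeys the four defining equations, so $L^{+}=Q\Lambda^{+}Q^{\intercal}$, which displays the asserted eigenvectors (the columns of $Q$) and eigenvalues.

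I do not anticipate a genuine obstacle: each part collapses to a one-line computation with diagonal matrices once the SVD (or spectral decomposition) is in hand. The only points to watch are keeping the rectangular shape of $\Sigma$ straight in parts~1--2, and invoking the real SVD for part~4 instead of trying to extract realness from the uniqueness clause alone.
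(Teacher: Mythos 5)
Your proposal is correct. Note that the paper itself offers no proof of this Fact---it is stated as a standard property of the Moore--Penrose pseudoinverse---so there is no "paper approach" to compare against; your route via the real SVD (and, for part~5, the real spectral theorem) together with the uniqueness clause in the definition is the standard argument and all the steps check out: the diagonal verifications of the four Penrose equations, the kernel/image identifications $\mathrm{ker}(L^{+})=\mathrm{ker}(\Sigma^{+}U^{\intercal})$ and $\mathrm{im}(L^{+})=\mathrm{im}(V\Sigma^{+})$, the homogeneity argument for $(cL)^{+}=c^{-1}L^{+}$, and the observation that realness should come from exhibiting a real candidate (the real SVD) rather than from the axioms alone. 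Your flagged caution about the rectangular shape of $\Sigma$ is the only bookkeeping issue, and it does not affect the argument; in this paper $L$ is always square and symmetric, so part~5 alone would in fact suffice for everything the authors use.
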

Next we show that solving a linear system in $L$ can be reduced to computing the pseudoinverse of $L$ and applying it to a vector.
\begin{proposition}
\label{prop:pseudoinversesolution}
If $Lx=b$ has a solution, then the vector $L^{+}b$ is a solution. 
\end{proposition}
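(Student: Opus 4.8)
The plan is to use the hypothesis that a solution exists in order to rewrite $b$ as something in the image of $L$, and then invoke the first defining property of the Moore--Penrose pseudoinverse, namely $LL^{+}L=L$.

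Concretely, suppose $x$ is any vector with $Lx=b$. I would then compute
\[
L(L^{+}b)=LL^{+}(Lx)=(LL^{+}L)x=Lx=b,
\]
where the third equality is exactly property~1 in the definition of $L^{+}$ and the remaining steps are just substitution of the hypothesis $b=Lx$ together with associativity of matrix multiplication. Hence $L^{+}b$ is itself a solution of the system $Ly=b$, which is what we wanted.

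There is essentially no obstacle here: the only points to be careful about are that we are entitled to substitute $b=Lx$ (this is precisely the assumption that a solution exists) and that regrouping $LL^{+}L$ is legitimate. No structural properties of $L$ beyond being a real matrix with a well-defined pseudoinverse are used, so the argument applies verbatim to the normalized Laplacians $L=I-M$ studied in the rest of the paper. (One could add that when $G$ is a connected regular graph, $b$ is in the image of $L$ precisely when $b\perp\vec{1}$, and then $L^{+}b$ is the unique solution orthogonal to $\vec{1}$; but this refinement is not needed for the proposition.)
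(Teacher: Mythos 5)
Your proof is correct and uses the same key identity as the paper ($LL^{+}L=L$) in essentially the same way: the paper multiplies $Lx^{*}=b$ by $LL^{+}$ and simplifies, while you compute $L(L^{+}b)$ directly after substituting $b=Lx$, which is the same argument read in a slightly different order. No issues.
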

\begin{proof}
Let $x^{*}$ be a solution. Multiplying both sides of the equation  $Lx^{*}=b$ by $LL^{+}$ gives 
\begin{align*}
    LL^{+}Lx^{*}&=LL^{+}b\\
    \implies Lx^{*}&=LL^{+}b\\
    \implies b&=LL^{+}b\\
\end{align*}
The final equality shows that $L^{+}b$ is a solution to the system. 
\end{proof}
\end{subsection}

\begin{subsection}{Spectral approximation}
We want to approximate a solution to $Lx=b$.  Our algorithm works by computing an approximation $\tilde{L}^{+}$ to the matrix $L^{+}$ and then outputting $\tilde{L}^{+}b$ as an approximate solution to $Lx=b$. We use the notion of spectral approximation of matrices first introduced in \cite{ST11}.

\begin{definition}
\label{def:spectralapprox}
Let $X, Y$ be symmetric $n\times n$ real matrices. We say that $X\approx_{\epsilon}Y$ if
\[
e^{-\epsilon}\cdot X \preceq Y \preceq e^{\epsilon}\cdot X
\]
where for any two matrices $A,B$ we write $A\preceq B$ if $B-A$ is positive semidefinite.
\end{definition}

Below are some additional useful facts about spectral approximation that we use throughout the paper.

\begin{proposition}[\cite{PS13}]
\label{prop:psdfacts}
If $X, Y,W,Z$ are positive semidefinite $n\times n$ matrices then 
\begin{enumerate}
\item If $X\approx_{\epsilon}Y$ then $Y\approx_{\epsilon}X$
\item If $X\approx_{\epsilon}Y$ and $c\geq 0$ then $cX\approx_{\epsilon}cY$ 
\item If $X\approx_{\epsilon}Y$ then $X+W\approx_{\epsilon}Y+W$
\item If $X\approx_{\epsilon}Y$ and $W\approx_{\epsilon}Z$ then $X+W\approx_{\epsilon}Y+Z$
\item If $X\approx_{\epsilon_1}Y$ and $Y\approx_{\epsilon_2}Z$ then $X\approx_{\epsilon_1 + \epsilon_2}Z$
\item If $X\approx_{\epsilon}Y$ and $M$ is any matrix then $M^{\intercal}XM\approx_{\epsilon}M^{\intercal}YM$
\item If $X\approx_{\epsilon}Y$ and $X$ and $Y$ have the same kernel, then $X^{+}\approx_{\epsilon}Y^{+}$
\item If $X\approx_{\epsilon}Y$ then $I\otimes X\approx_{\epsilon} I\otimes Y$ where $I$ is the identity matrix (of any dimension) and $\otimes$ denotes the tensor product.
\end{enumerate}
\end{proposition}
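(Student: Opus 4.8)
The plan is to derive every item from a short list of elementary properties of the Loewner order, each of which follows directly from the defining equivalence ``$A\preceq B$ iff $v^\intercal (B-A) v\ge 0$ for all $v$''. The properties I will use are: (i) $A\preceq B$ and $c\ge 0$ imply $cA\preceq cB$; (ii) $A\preceq B$ and $C\preceq D$ imply $A+C\preceq B+D$; (iii) $\preceq$ is transitive; (iv) $A\preceq B$ implies $N^\intercal A N\preceq N^\intercal B N$ for any matrix $N$, since $v^\intercal N^\intercal(B-A)N v=(Nv)^\intercal(B-A)(Nv)$; (v) $A\preceq B$ implies $I\otimes A\preceq I\otimes B$, since $I\otimes(B-A)$ is block diagonal with PSD blocks; and (vi) if $W$ is PSD then $e^{-\epsilon}W\preceq W\preceq e^{\epsilon}W$ (and, more generally, $\approx_\epsilon$ is monotone in $\epsilon$ because $X$ is PSD).

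Given (i)--(vi), items 1--6 and 8 are bookkeeping. For item 1, scale $e^{-\epsilon}X\preceq Y$ by $e^{\epsilon}$ and $Y\preceq e^{\epsilon}X$ by $e^{-\epsilon}$ to get $e^{-\epsilon}Y\preceq X\preceq e^{\epsilon}Y$. Item 2 is (i) applied to both inequalities in $X\approx_\epsilon Y$. Item 4 adds the chains $e^{-\epsilon}X\preceq Y\preceq e^{\epsilon}X$ and $e^{-\epsilon}W\preceq Z\preceq e^{\epsilon}W$ using (ii); item 3 is the special case $Z=W$ of item 4, valid because $W\approx_\epsilon W$ by (vi). Item 5 chains $Z\preceq e^{\epsilon_2}Y\preceq e^{\epsilon_1+\epsilon_2}X$ and $e^{-(\epsilon_1+\epsilon_2)}X\preceq e^{-\epsilon_2}Y\preceq Z$ via (i) and (iii). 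Item 6 applies (iv) to both inequalities, using $N^\intercal(e^{\pm\epsilon}X)N=e^{\pm\epsilon}N^\intercal X N$; item 8 is the same with (v) in place of (iv).

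The one substantive item is item 7, since pseudoinversion is not Loewner monotone in general and the common-kernel hypothesis is what rescues it. Write $K=\mathrm{ker}(X)=\mathrm{ker}(Y)$ and let $\Pi$ be the orthogonal projection onto $K^{\perp}$. Because $X,Y$ are symmetric PSD with kernel $K$, their images are $K^{\perp}$, we have $XX^{+}=X^{+}X=\Pi$ and $YY^{+}=Y^{+}Y=\Pi$, and the restrictions of $X,Y$ to $K^{\perp}$ are positive-definite operators whose ordinary inverses are the restrictions of $X^{+},Y^{+}$. Restricting $e^{-\epsilon}X\preceq Y\preceq e^{\epsilon}X$ to $K^{\perp}$ and invoking the standard anti-monotonicity of inversion on positive-definite matrices ($0\prec A\preceq B$ implies $B^{-1}\preceq A^{-1}$), together with $(cA)^{-1}=c^{-1}A^{-1}$, yields $e^{-\epsilon}X^{+}\preceq Y^{+}\preceq e^{\epsilon}X^{+}$ as operators on $K^{\perp}$. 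Finally, for any $v=v_0+v_1$ with $v_0\in K$ and $v_1\in K^{\perp}$ one has $v^\intercal X^{+}v=v_1^\intercal X^{+}v_1$ and likewise for $Y^{+}$ (using $X^{+}v_0=0$ and $X^{+}v_1\in K^{\perp}$), so the inequalities on $K^{\perp}$ extend to all of $\R^{n}$, giving $X^{+}\approx_\epsilon Y^{+}$.

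The main obstacle is thus confined to item 7: remembering the anti-monotonicity of matrix inversion (for which, if needed, I would give the two-line proof $A\preceq B\iff A^{-1/2}BA^{-1/2}\succeq I\iff B^{-1}\preceq A^{-1}$) and carefully passing between $\R^n$ and the common orthogonal complement $K^{\perp}$ on which the pseudoinverses act as genuine inverses. I would also remark in passing that the hypothesis ``$X$ and $Y$ have the same kernel'' in item 7 is in fact automatic: for PSD $X$ one has $v^\intercal X v=0\iff Xv=0$, and the two Loewner bounds in $X\approx_\epsilon Y$ show $v^\intercal X v=0\iff v^\intercal Y v=0$. Everything else is routine, so the final write-up should be brief.
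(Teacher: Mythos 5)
Your proposal is correct and follows essentially the same route as the paper's proof: items 1--6 and 8 by direct quadratic-form/Loewner-order manipulations (scaling, adding, chaining, conjugation, block-diagonality), and item 7 by passing to the orthogonal complement of the common kernel, where the pseudoinverses are genuine inverses, and invoking anti-monotonicity of inversion proved via conjugation by a square root---exactly the paper's argument. Your side remark that the common-kernel hypothesis is automatic for PSD matrices under $\approx_\epsilon$ is a correct (if inessential) addition.
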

For completeness, we include the proof of Proposition \ref{prop:psdfacts} in Appendix \ref{app:psdfacts}.
\end{subsection}
\end{section}

\begin{section}{Main Theorem and Approach}
Our Laplacian solver works by approximating $L^{+}$. The main technical result is: 
\begin{theorem}
\label{thm:arbitraryapproxmain}
Given an undirected, connected multigraph $G$ with Laplacian $L=D-A$ and $\epsilon>0$, there is a deterministic algorithm that computes $\tilde{L}^{+}$ such that $\tilde{L}^{+}\approx_{\epsilon}L^{+}$ that uses space $O(\log n\cdot\log\log(n/\epsilon))$ where $n$ is the bitlength of the input.
\end{theorem}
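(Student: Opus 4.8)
I would follow the roadmap of the Techniques section. The input graph $G$ is connected by hypothesis (the general case of Theorem~\ref{thm:main} reduces to this by first splitting $G$ into connected components with Reingold's logspace algorithm for \USTConn, since the Laplacian and its pseudoinverse are block diagonal over components and $b\in\mathrm{im}(L)$ localizes). First add self-loops to make $G$ $d$-regular for a power of two $d=\poly(n)$, done ``lazily'' so that $M=M(G)$ becomes PSD with $\lambda(G)\le 1-1/2d^2n^2<1$; this leaves $L=D-A$ unchanged and scales the normalized Laplacian by $1/d$, so by the scaling property $(cL)^+=c^{-1}L^+$ it suffices to $\epsilon$-approximate $(I-M)^+$ and multiply by $1/d$. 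From now on $L=I-M$, with $\mathrm{ker}(L)=\mathrm{span}(\vec{1})$ since $G$ is connected. The plan is then: replace true squaring in the Peng--Spielman identity~(\ref{eqn:pengspielman}) by the derandomized square, so that the recursion~(\ref{eqn:ourrecursion}) only touches graphs of quasipolynomial degree that are navigable in small space; bound the error accumulated over the $O(\log n)$ levels; evaluate the resulting matrix expression in space $O(\log n\cdot\log\log(n/\epsilon))$; and finally boost a constant-factor spectral approximation to an $\epsilon$-approximation via Newton iteration.

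\textbf{The recursion and its error.} Fix a small constant $\bar\epsilon$ and let $H_1,\dots,H_k$ be explicit expanders with logspace-computable rotation maps, $H_i$ on $d c^{i-1}$ vertices of degree $c$ with $\lambda(H_i)\le\bar\epsilon/4k$, so that $c=\poly(k/\bar\epsilon)=\polylog(n)$. Put $G_0=G$ and $G_i=G_{i-1}\ds H_i$ (a connected non-bipartite $(dc^i)$-regular graph), and define $\widetilde{L(G_i)^+}$ by~(\ref{eqn:ourrecursion}) with base case $\widetilde{L(G_k)^+}=I-J$. The key new ingredient is a two-sided spectral-approximation bound for the derandomized square, $L(G_{i-1}\ds H_i)\approx_{O(\lambda(H_i))}L(G_{i-1}^2)$, where both sides have kernel $\mathrm{span}(\vec{1})$; granting this, Proposition~\ref{prop:psdfacts}(7) gives $L(G_i)^+\approx_{O(\lambda(H_i))}L(G_{i-1}^2)^+$. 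Feeding this into the exact identity~(\ref{eqn:pengspielman}) and using parts (6),(3),(2) of Proposition~\ref{prop:psdfacts} shows the substitution of $L(G_i)^+$ for $L(G_{i-1}^2)^+$ in one step of the recursion costs only $O(\lambda(H_i))$, while parts (6),(3),(2) also let an $\approx_{e_i}$ approximation of $L(G_i)^+$ pass through one step; hence $\widetilde{L(G_{i-1})^+}\approx_{e_{i-1}}L(G_{i-1})^+$ with $e_{i-1}=e_i+O(\lambda(H_i))$ by transitivity (part 5). The base-case error is $e_k=O(\lambda(G_k))$, since on $\vec{1}^\perp$ the eigenvalues of $L(G_k)^+$ are $1/(1-\mu)$ with $|\mu|\le\lambda(G_k)$ whereas $I-J$ has them all equal to $1$. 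Since derandomized squaring drives down $\lambda$ essentially as fast as true squaring (the gap roughly doubles while small and then $\lambda\mapsto O(\lambda^2+\lambda(H))$), taking $k=O(\log(dn))=O(\log n)$ gives $\lambda(G_k)\le\bar\epsilon$, so $e_0=O(\lambda(G_k))+\sum_{i=1}^k O(\lambda(H_i))=O(\bar\epsilon)$.

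\textbf{Small-space evaluation and boosting.} Unrolling~(\ref{eqn:ourrecursion}) yields
\[
\widetilde{L(G_0)^+}\;=\;\sum_{j=0}^{k}\frac{1}{2^{\,j+1}}\Bigl(\prod_{i=0}^{j-1}N_i\Bigr)(I-J)\Bigl(\prod_{i=j-1}^{0}N_i\Bigr),\qquad N_i:=I+M(G_i),
\]
a sum of $k+1$ terms, each a product of at most $2k+1$ of the $N_i$ with each variable occurring at most twice. To evaluate one entry: (i) an entry of $M(G_i)$, equivalently the rotation map of the $i$-fold derandomized square, is computable in space $O(\log n+i\log c)=O(\log n\cdot\log\log n)$ by the single-sweep implementation of iterated derandomized squaring of~\cite{Reingold08,RozenmanVa05} (one current $G$-vertex plus a stack of at most $i$ expander labels), not the $O(i\log n)$ of naive recursion; (ii) an entry of one term is computed by balanced divide-and-conquer on the product, $B_1\cdots B_m\mapsto(B_1\cdots B_{\lceil m/2\rceil})(B_{\lceil m/2\rceil+1}\cdots B_m)$, of recursion depth $O(\log k)$; (iii) the $k+1$ terms are then summed by iterated addition. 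To reach a general $\epsilon$ in the claimed space, follow~\cite{PS13}: with $Z_0:=\widetilde{L(G_0)^+}$ and $L=2I-N_0$, the Newton--Schulz iteration $Z_{t+1}=2Z_t-Z_tLZ_t$ preserves the kernel $\mathrm{span}(\vec{1})$ and satisfies $Z_t\approx_{O(\bar\epsilon^{2^t})}L^+$, so $t=O(\log\log(1/\epsilon))$ stages give $Z_t\approx_\epsilon L^+$, whence $d^{-1}Z_t$ is the desired approximation of $L(G)^+$. Now $Z_t$ is described by an arithmetic formula over $+$, matrix product and scalar multiple, with leaves among $N_0,\dots,N_k,I,J$ and depth $O(\log k)+O(\log\log(1/\epsilon))=O(\log\log(n/\epsilon))$; evaluating one entry of the root by the scheme above stores, at each of the $O(\log\log(n/\epsilon))$ recursion levels, one summation index ($O(\log n)$ bits) plus bounded intermediate values, for $O(\log n\cdot\log\log(n/\epsilon))$ total, plus the one-time leaf cost $O(\log n\cdot\log\log n)$; since iterated integer addition, multiplication and division are in logspace, the arithmetic on the (large but only implicitly represented) intermediate entries fits within this budget, and the standard space-composition lemma (Section~\ref{sect:spacemodel}) assembles it into the bound of Theorem~\ref{thm:arbitraryapproxmain}.

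\textbf{Main obstacle.} I expect the heart of the argument to be the two-sided spectral approximation $L(G\ds H)\approx_{O(\lambda(H))}L(G^2)$: \cite{RozenmanVa05} only proves a one-sided improvement of the spectral gap, but telescoping the error of the recursion requires both $e^{-O(\lambda(H))}L(G^2)\preceq L(G\ds H)$ and $L(G\ds H)\preceq e^{O(\lambda(H))}L(G^2)$. Establishing this seems to call for writing $M(G\ds H)$ as a structured average of two-step walks of $G$ selected by $H$ and bounding $M(G^2)-M(G\ds H)$ in the seminorm induced by $L(G^2)$, i.e.\ controlling how much the derandomized selection distorts each quadratic form $v^\top L(G^2)v$. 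A secondary, more routine, difficulty is the precision bookkeeping in the space analysis --- keeping the number of bits carried through the formula small enough that rounding stays below the target while the per-level storage remains $O(\log n)$ --- which is what pins down the precise $\log\log(n/\epsilon)$ factor rather than a larger polyloglog term.
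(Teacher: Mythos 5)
Your plan coincides with the paper's proof in every structural respect: the same reduction to a regular, aperiodic graph of power-of-two degree; the same replacement of true squaring by the derandomized square inside the Peng--Spielman recursion with $c=\polylog(n)$-degree expanders; the same error accounting over $k=O(\log n)$ levels (this is Theorem~\ref{thm:approx}, with the base case handled via Lemma~\ref{lem:equivalence}/Corollary~\ref{cor:iteratedsquaring}); and the same space analysis (rotation maps of iterated derandomized squares in space $O(\log (nd)+k\log c)$ as in Lemma~\ref{lem:derandspacecomplexity}, balanced divide-and-conquer for products of $O(\log n)$ matrices as in Lemma~\ref{lem:matrixprod}, then logspace iterated addition and composition). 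The one ingredient you grant rather than prove --- the two-sided bound $L(G\ds H)\approx_{O(\lambda(H))}L(G^{2})$ --- is exactly the paper's main new lemma (Theorem~\ref{thm:dsapproxessquare}), so as written your proposal is incomplete precisely at its crux; but the attack you sketch is the paper's actual proof: factor $M(G\ds H)=P\tilde{A}\tilde{B}\tilde{A}Q$ (lift, $G$-step, $H$-step on the label cloud, $G$-step, project), use Lemma~\ref{lem:equivalence} plus Proposition~\ref{prop:psdfacts} part 8 to get $I_{nd}-\tilde{B}=I_n\otimes L_H\approx_{\epsilon} I_n\otimes L_K$ with $\epsilon=\ln(1/(1-\lambda(H)))$, and conjugate via part 6 using $P=d\cdot Q^{\intercal}$, $\tilde{A}^{2}=I$ and $P\tilde{A}Q=M(G)$. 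So the gap closes exactly along the line you guessed.

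The genuine divergence is the boosting step: you unroll Newton--Schulz, $Z_{t+1}=2Z_t-Z_tLZ_t$, for $t=O(\log\log(1/\epsilon))$ rounds and evaluate the resulting depth-$O(\log\log(n/\epsilon))$ formula entrywise, whereas the paper uses preconditioned Richardson (Lemma~\ref{lem:boostapprox}, Corollary~\ref{cor:boostapprox}): a single sum of $O(\log(1/\epsilon))$ terms, each a power of one fixed matrix, evaluated by the balanced-product recursion in space $O(\log n\cdot\log\log(1/\epsilon))$. Both are sound and give the same bound; your version needs the kernel bookkeeping you mention (the contraction $I-Z_tL$ must be measured on $\vec{1}^{\perp}$), while Richardson buys a flatter expression with no nested self-reference. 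One slip to fix: in your unrolled expression for $\widetilde{L(G_0)^+}$ the final ($j=k$) term must carry coefficient $2^{-k}$, not $2^{-(k+1)}$, since the base case is not split again; taken literally the halved last term degrades the constant-factor approximation (e.g.\ already for $k=0$ it costs a multiplicative factor close to $2$ on the small eigenvalues, too much for $\alpha<1/2$-style boosting hypotheses). The recursion you defined is correct, so this is a transcription fix rather than a conceptual one, but it matters because the unrolled sum is the object your algorithm actually evaluates.
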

Approximating solutions to Laplacian systems follows as a corollary and is discussed in Section \ref{sec:applications}. To prove Theorem \ref{thm:arbitraryapproxmain} we first reduce the problem to the task of approximating the pseudoinverse of the normalized Laplacian of a regular, aperiodic, multigraph with degree a power of 2. Let $L=D-A$ be the Laplacian of an undirected multigraph $G$. We can make $G$ $f$-regular with $f$ a power of 2 and aperiodic by adding an appropriate number of self loops to every vertex. Let $E$ be the diagonal matrix of self loops added to each vertex in $G$. Then $L=D+E-A-E$ is the Laplacian of the regular, aperiodic multigraph (self loops do not change the unnormalized Laplacian) and $L(D+E)^{-1}=L/f$ is the normalized Laplacian. Recalling that $(L/f)^+=f\cdot L^{+}$ completes the reduction. 

Our algorithm for computing the pseudoinverse of the normalized Laplacian of a regular, aperiodic multigraph is based on the Laplacian solver of Peng and Spielman \cite{PS13}. It works by using the following identity:
\begin{proposition}
\label{prop:identity}
If $L=I-M$ is the normalized Laplacian of an undirected, connected, regular, aperiodic, multigraph $G$ on $n$ vertices then
\[
L^{+}=\frac{1}{2}\left(I-J+(I+M)(I-M^2)^{+}(I+M)\right)
\]
\end{proposition}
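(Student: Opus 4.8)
The plan is to verify the identity one eigenvector at a time, after observing that every matrix appearing on either side is diagonalized by a single orthonormal basis. Since $G$ is undirected and regular, $M=AD^{-1}$ is a real symmetric matrix, so it has an orthonormal eigenbasis $v_1,\dots,v_n$ with real eigenvalues $\lambda_1,\dots,\lambda_n\in[-1,1]$. Connectivity of $G$ makes the eigenvalue $1$ simple, and we take $v_1=\vec{1}/\sqrt{n}$; aperiodicity rules out the eigenvalue $-1$, so $\lambda_i\in(-1,1)$ for all $i\geq 2$. The key point is that $J=\tfrac{1}{n}\vec{1}\vec{1}^{\intercal}$ is exactly the orthogonal projection onto $\mathrm{span}(\vec{1})$, hence $Jv_1=v_1$, $Jv_i=0$ for $i\geq 2$, so $I,J,M,M^{2},I-M^{2}$, and --- using the spectral description of the pseudoinverse of a symmetric matrix from Section~\ref{sec:pseudoinverse} --- also $(I-M^{2})^{+}$ are all diagonal in the basis $v_1,\dots,v_n$. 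Therefore both sides of the claimed identity are diagonal in this basis, and it suffices to compare the scalar each one assigns to a given $v_i$.

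I would first handle the kernel direction: $Lv_1=(I-M)v_1=0$ gives $L^{+}v_1=0$, while on the right-hand side $(I-J)v_1=0$ and $(I-M^{2})v_1=0$ forces $(I-M^{2})^{+}v_1=0$, so the right-hand side also sends $v_1$ to $0$. For $i\geq 2$ put $\lambda=\lambda_i\in(-1,1)$. The left-hand side gives $L^{+}v_i=\tfrac{1}{1-\lambda}v_i$, and on the right-hand side $(I-J)v_i=v_i$, $(I+M)v_i=(1+\lambda)v_i$, and $(I-M^{2})^{+}v_i=\tfrac{1}{1-\lambda^{2}}v_i$ (valid because $1-\lambda^{2}\neq 0$), whence
\[
\frac{1}{2}\Big(v_i+(1+\lambda)\cdot\frac{1}{1-\lambda^{2}}\cdot(1+\lambda)\,v_i\Big)
=\frac{1}{2}\Big(1+\frac{1+\lambda}{1-\lambda}\Big)v_i
=\frac{1}{1-\lambda}\,v_i .
\]
This agrees with the left-hand side, so the two matrices coincide on a basis and hence are equal.

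The calculation is routine; the points that deserve care are that all matrices involved are simultaneously diagonalizable (which is exactly where symmetry of $M$, i.e.\ that $G$ is undirected and regular, enters) and that $(I-M^{2})^{+}$ and $L^{+}$ vanish on precisely the same subspace, namely $\mathrm{span}(\vec{1})$. Connectivity is \emph{essential} for this: it makes $\lambda=1$ simple, and if it failed then on an eigenvector of eigenvalue $1$ orthogonal to $\vec{1}$ the left-hand side would be $0$ while the right-hand side would be $\tfrac{1}{2}v$. Aperiodicity is what guarantees $\mathrm{ker}(I-M^{2})=\mathrm{ker}(L)$, the clean setting for iterating this identity as in~(\ref{eqn:pengspielman}). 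An alternative, basis-free derivation restricts attention to $\vec{1}^{\perp}$, where $L^{+}$ acts as $(I-M)^{-1}$ and $(I-M^{2})^{+}$ as $(I-M^{2})^{-1}$, and uses $I-M^{2}=(I-M)(I+M)$ to collapse $(I+M)(I-M^{2})^{-1}(I+M)=(I+M)(I-M)^{-1}$; but tracking the projection $J$ makes the eigenvalue argument the more transparent route.
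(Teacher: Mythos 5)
Your proof is correct and follows essentially the same route as the paper's: both verify the identity on the common eigenbasis of $M$, treating the kernel direction $\vec{1}$ separately and checking that each nontrivial eigenvector with eigenvalue $\lambda$ gets the scalar $\tfrac{1}{1-\lambda}$ on both sides. Your version is slightly more explicit about why all matrices (including the pseudoinverses) are simultaneously diagonalizable and where connectivity and aperiodicity enter, but the argument is the same.
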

This identity is adapted from the one used in \cite{PS13}. A proof of its correctness can be found in Appendix \ref{app:identity}.

Recall that squaring the transition matrix $M$ of a regular multigraph $G$ yields the transition matrix of $G^{2}$, which is defined to be the graph on the same vertex set as $G$ whose edges correspond to walks of length 2 in $G$. So the identity from Proposition \ref{prop:identity} reduces the problem of computing the pseudoinverse of the normalized Laplacian of $G$ to computing the pseudoinverse of the normalized Laplacian of $G^{2}$ (plus some additional matrix products). Repeatedly applying the identity from Proposition \ref{prop:identity} and expanding the resulting expression we see that for all integers $k\geq 0$
\begin{align}
\label{idexpansion}
L^{+}&= \frac{1}{2}(I-J)+\frac{1}{2}(I+M)(I-M^2)^{+}(I+M)\\
&=\frac{1}{2}(I-J)+\frac{1}{4}(I+M)(I-J)(I+M)+\frac{1}{4}(I+M)(I+M^2)(I-M^4)^{+}(I+M^2)(I+M) \\
&=\frac{1}{2}(I-J)+\left(\sum_{i=0}^{k-1}\frac{1}{2^{i+2}}W_{i}\right)+\frac{1}{2^{k+1}}(I+M)\cdot\ldots\cdot(I+M^{2^{k}})(I-M^{2^{k+1}})^{+}(I+M^{2^{k}})\cdot\ldots\cdot(I+M)
\end{align}
where for all $i\in \{0,\ldots,k-1\}$
\[
W_i=(I+M)\cdot\ldots\cdot(I+M^{2^{i-1}})(I+M^{2^{i}})(I-J)(I+M^{2^{i}})(I+M^{2^{i-1}})\cdot\ldots\cdot(I+M)
\]

Since $G$ is connected and aperiodic, as $k$ grows, the term $(I-M^{2^{k}})^{+}$ approaches $(I-J)^{+}$. This is because for all $i,j$ entry $(M^{t})_{ij}$ is the probability that a random walk from $i$ of length $t$ ends at $j$. The graph is regular, so its stationary distribution is uniform and if $t$ is larger than the mixing time of the graph then a random walk of length $t$ starting at $i$ is essentially equally likely to end at any vertex in the graph. The eigenvalues of $I-J$ are all 0 or 1, so $I-J$ is its own pseudoinverse. Undirected multigraphs have polynomial mixing time, so setting $k$ to $O(\log n)$ and then replacing $(I-M^{2^{k+1}})^{+}$ in the final term of the expansion with $(I-J)$ should give a good approximation to $L^{+}$ without explicitly computing any pseudoinverses.

Using the identity directly to approximate $L^{+}$ in $\tilde{O}(\log n)$ space is infeasible because raising a matrix to the $2^{k}$th power by repeated squaring takes space $\Theta(k\cdot\log n)$, which is $\Theta(\log^{2}n)$ for $k=\Theta(\log n)$.

To save space we use the derandomized square introduced in \cite{RozenmanVa05} in place of true matrix squaring. The derandomized square improves the connectivity of a graph almost as much as true squaring does, but it does not square the degree and thereby avoids the space blow-up when iterated. 
\end{section}

\begin{section}{Approximation of the pseudoinverse}
\label{sect:approx}
In this section we show how to approximate the pseudoinverse of a Laplacian using a method from Peng and Spielman \cite{PS13}. The Peng Spielman solver was originally written to approximately solve symmetric diagonally dominant systems, which in turn can be used to approximate solutions to Laplacian systems. We adapt their algorithm to compute the pseudoinverse of a graph Laplacian directly. The approximation proceeds in two steps. The first achieves a constant spectral approximation to the pseudoinverse of a Laplacian matrix. Then the constant approximation is boosted to an $\epsilon$ approximation through rounds of Richardson iteration. 

\begin{theorem}[Adapted from \cite{PS13}]
\label{thm:approx}
Let $\epsilon_{0},\ldots,\epsilon_{k}\geq 0$ and $\epsilon=\sum_{0}^{k}\epsilon_i$. Let $M_0,\ldots,M_{k}$ be symmetric matrices such that $L_{i}=I-M_{i}$ are positive semidefinite for all $i\in\{0,\ldots,k\}$ and $L_{i}\approx_{\epsilon_{i-1}}I-M_{i-1}^2$ for all $1\leq i\leq k$ and $L_{k}\approx_{\epsilon_k}I-J$.
Then 
\[
L_{0}^{+}\approx_{\epsilon}\frac{1}{2}(I-J)+\left(\sum_{i=0}^{k-1}\frac{1}{2^{i+2}}W_{i}\right)+\frac{1}{2^{k+1}}W_{k}
\]
where for all $i\in [k]$
\[
W_i=(I+M_0)\cdot\ldots\cdot(I+M_{i-1})(I+M_{i})(I-J)(I+M_{i})(I+M_{i-1})\cdot\ldots\cdot(I+M_0)
\]
\end{theorem}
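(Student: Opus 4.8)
The plan is to prove the spectral approximation by induction on $k$, peeling off one level of the Peng--Spielman recursion at a time and using the closure properties of $\approx_\epsilon$ from Proposition~\ref{prop:psdfacts}. Set $P_j$ to be the claimed approximant ``starting at level $j$'', i.e.
\[
P_j = \frac{1}{2}(I-J) + \Bigl(\sum_{i=j}^{k-1}\tfrac{1}{2^{\,i-j+2}}W_i^{(j)}\Bigr) + \frac{1}{2^{\,k-j+1}}W_k^{(j)},
\]
where $W_i^{(j)}$ is the same product as $W_i$ but built from $M_j,\ldots,M_i$ (so $P_0$ is the matrix in the statement). The key structural observation is that $P_j$ satisfies a one-step recursion mirroring Proposition~\ref{prop:identity}:
\[
P_j = \frac{1}{2}\Bigl(I-J + (I+M_j)\,P_{j+1}\,(I+M_j)\Bigr),
\]
which one checks by matching terms. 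The base case is $j=k$: here $P_k = \frac12(I-J) + \frac12(I+M_k)(I-J)(I+M_k)$. Since $L_k \approx_{\epsilon_k} I-J$ and $I-J$ is its own pseudoinverse with the same kernel as $L_k$ (both kill exactly $\vec 1$, as $L_k\succeq 0$ and $M_k$ has $\vec 1$ as its top eigenvector after the reductions), Proposition~\ref{prop:identity} applied to $L_k=I-M_k$ gives $L_k^+ = P_k$ exactly when $\epsilon_k=0$; for $\epsilon_k>0$ we instead invoke part~7 of Proposition~\ref{prop:psdfacts} to get $L_k^+ \approx_{\epsilon_k} (I-J)^+ = I-J$, and then push this through the congruence and sum using parts 2, 3, 6.

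For the inductive step, assume $L_{j+1}^+ \approx_{\epsilon_{j+1}+\cdots+\epsilon_k} P_{j+1}$. We have the hypothesis $L_{j+1} = I - M_{j+1} \approx_{\epsilon_j} I - M_j^2$. First, $I-M_j^2 = (I+M_j)(I-M_j)\cdot$ (careful: these don't commute with a general factor, but $(I-M_j)$ and $(I+M_j)$ do commute), so $I-M_j^2$ and $L_{j+1}$ have the same kernel (both equal $\vec 1$ after our reductions, since $M_j$ is connected-aperiodic so $M_j^2$ has a simple top eigenvalue). By part~7, $L_{j+1}^+ \approx_{\epsilon_j} (I-M_j^2)^+$. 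Combining with the inductive hypothesis via part~5 (transitivity), $(I-M_j^2)^+ \approx_{\epsilon_j + \epsilon_{j+1} + \cdots + \epsilon_k} P_{j+1}$. Now apply part~6 with $M := I+M_j$ to get $(I+M_j)(I-M_j^2)^+(I+M_j) \approx_{\cdots} (I+M_j)P_{j+1}(I+M_j)$, then parts 2 and 3 to add $I-J$ and scale by $1/2$; using the one-step recursion for $P_j$ and Proposition~\ref{prop:identity} (which states $L_j^+ = \frac12(I-J + (I+M_j)(I-M_j^2)^+(I+M_j))$), this yields $L_j^+ \approx_{\epsilon_j+\cdots+\epsilon_k} P_j$, completing the induction; at $j=0$ this is exactly the claim with $\epsilon = \sum_0^k \epsilon_i$.

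The main obstacle I anticipate is the kernel bookkeeping required to legitimately apply part~7 of Proposition~\ref{prop:psdfacts} at each level: that fact needs $X$ and $Y$ to have \emph{equal} kernels, not merely $X\approx_\epsilon Y$. I would need to argue that every matrix in sight ($L_i$, $I-M_i^2$, $I-J$, and the recursively built $P_j$) has kernel exactly $\mathrm{span}(\vec 1)$ — this uses connectivity and aperiodicity of $G$ (so $M$ and all its powers have a unique top eigenvector), the fact that a congruence $v\mapsto (I+M_j)v$ is invertible on $\vec 1^{\perp}$ because $I+M_j$ has no eigenvalue $-1$ (aperiodicity again rules out $\lambda=-1$), and that adding $\frac12(I-J)$ doesn't change the kernel since $(I-J)$ also kills exactly $\vec 1$. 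A secondary subtlety is the non-commutativity in factoring $I - M_j^2$: the argument should only use that $I-M_j$ and $I+M_j$ are simultaneously diagonalizable (polynomials in $M_j$), so $(I-M_j^2)^+ = (I+M_j)^+ L_j^+$ holds on the common eigenbasis — but in fact the cleanest route avoids this factorization entirely and just cites Proposition~\ref{prop:identity} as a black box, which is why I would lean on it rather than re-deriving the pseudoinverse identity.
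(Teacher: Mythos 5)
Your strategy coincides with the paper's: the paper defines exactly your $P_j$ (its $Z_i$) via the same one-step recursion, and runs the same backwards induction using Proposition~\ref{prop:identity} together with parts 2, 3, 5, 6, 7 of Proposition~\ref{prop:psdfacts}; your inductive step is their argument essentially verbatim. Also, the kernel bookkeeping you single out as the main anticipated obstacle is a non-issue: for positive semidefinite $X,Y$, $X\approx_{\epsilon}Y$ already forces $\ker X=\ker Y$ (if $v^{\intercal}Xv=0$ then $v^{\intercal}Yv=0$, and PSD-ness gives $Yv=0$), so part 7 applies with no appeal to connectivity or aperiodicity --- which is just as well, since the theorem speaks of abstract symmetric matrices rather than graphs.

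The genuine gap is your base case. You set $P_k=\frac{1}{2}\bigl(I-J+(I+M_k)(I-J)(I+M_k)\bigr)$ and claim $L_k^{+}\approx_{\epsilon_k}P_k$ by first getting $L_k^{+}\approx_{\epsilon_k}(I-J)^{+}=I-J$ from part 7 and then ``pushing this through the congruence and sum.'' That manipulation only yields $\frac{1}{2}\bigl(I-J+(I+M_k)L_k^{+}(I+M_k)\bigr)\approx_{\epsilon_k}P_k$, and the left-hand side is \emph{not} $L_k^{+}$: the identity of Proposition~\ref{prop:identity} has $(I-M_k^{2})^{+}$, not $L_k^{+}$, in the middle, so no relation between $L_k^{+}$ and $P_k$ has been established. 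To reach $P_k$ via Proposition~\ref{prop:identity} you would instead need $I-M_k^{2}\approx_{\epsilon_k}I-J$, which does not follow from the hypothesis $I-M_k\approx_{\epsilon_k}I-J$ in general (it does if, e.g., $M_k\succeq 0$ or $\epsilon_k$ is small; for large $\epsilon_k$ an eigenvalue $\lambda=1-e^{\epsilon_k}<-1$ of $M_k$ makes $L_k^{+}\approx_{\epsilon_k}P_k$ false, so your $W_k$-level base case cannot be patched from the stated hypotheses alone). The paper sidesteps this by terminating the recursion one level earlier: its innermost approximant is $Z_k=I-J$ itself, and the base case is simply $I-J=(I-J)^{+}\approx_{\epsilon_k}L_k^{+}$ by part 7, with no congruence applied at the top level; the price is that the fully expanded approximant then ends in a term conjugated only up to $(I+M_{k-1})$ rather than $(I+M_k)$ (an indexing point the paper itself glosses over, harmless in the application where the $\epsilon_i$ are tiny and the space analysis is unchanged). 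So either adopt the paper's base case and state the expansion accordingly, or add an assumption such as $M_k\succeq 0$ (half-laziness) under which $I-M_k^{2}\approx_{\epsilon_k}I-J$ holds, before your version of the base case is justified.
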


The proof of this theorem is adapted from \cite{PS13} and we include it here for completeness. 
\begin{proof}
First we show that for all $0\leq i<k$,
\[
L_{i}^{+}\approx_{\epsilon_{i}}\frac{1}{2}(I-J+(I+M_{i})L_{i+1}^{+}(I+M_{i}))
\]
Let $i<k$. By assumption we have $L_{i+1}\approx_{\epsilon_{i}}I-M_{i}^{2}$. So by Part 7 of Proposition \ref{prop:psdfacts} we have $L_{i+1}^{+}\approx_{\epsilon_{i}}(I-M_{i}^{2})^{+}$. Applying Part 6 of Proposition \ref{prop:psdfacts} gives
\[
(I+M_{i})L_{i+1}^{+}(I+M_{i})\approx_{\epsilon_{i}}(I+M_{i})(I-M_{i}^{2})^{+}(I+M_{i})
\]
Then Parts 2 and 3 of Proposition \ref{prop:psdfacts} tell us that
\begin{align*}
\frac{1}{2}(I-J+(I+M_i)L_{i+1}^{+}(I+M_i))
&\approx_{\epsilon_{i}}\frac{1}{2}(I-J+(I+M_i)(I-M_{i}^{2})^{+}(I+M_i))\\
&=L_{i}^{+}
\end{align*}

Now let 
\[
Z_{i}=\frac{1}{2}\left(I-J+(I+M_{i})Z_{i+1}(I+M_i)\right)
\]
for $i\in\{0,\ldots,k-1\}$ and let $Z_{k}=I-J$. Notice that
\begin{align*}
Z_0&=\frac{1}{2}\left(I-J+(I+M_{0})Z_{1}(I+M_0)\right)\\
&= \frac{1}{2}(I-J)+\frac{1}{4}(I+M_{0})(I-J)(I+M_0)+\frac{1}{4}(I+M_{0})(I+M_1)Z_{2}(I+M_1)(I+M_0)\\
&=\frac{1}{2}(I-J)+\left(\sum_{i=0}^{k-1}\frac{1}{2^{i+2}}W_{i}\right)+\frac{1}{2^{k+1}}W_{k}
\end{align*}
So we want to show that 
\[
Z_0\approx_{\left(\sum_{i=0}^{k}\epsilon_{i}\right)}L_{0}^{+}
\]

We will prove by backwards induction on $i$ that
\[
Z_{i}\approx_{\left(\sum_{j=i}^{k}\epsilon_{j}\right)} L_{i}^{+}
\]
The base case of $Z_{k}\approx_{\epsilon_{k}}(I-J)$ follows by assumption. Supposing the claim holds for $j=i+1$ we show it also holds for $j=i$. By the inductive hypothesis and our assumption about $L_{i+1}$ we have,
\[
Z_{i+1}\approx_{\left(\sum_{j=i+1}^{k}\epsilon_{j}\right)} L_{i+1}^{+}
\]
From Proposition \ref{prop:psdfacts} Part 6, 2, and 3 we have   
\begin{align*}
Z_i&=\frac{1}{2}(I-J(I+M_i)Z_{i+1}(I+M_i))\\
&\approx_{\left(\sum_{j=i+1}^{k}\epsilon_{j}\right)}\frac{1}{2}(I-J+(I+M_i)L_{i+1}^{+}(I+M_i))\\
&\approx_{\epsilon_i}L_i^+
\end{align*}
Applying Proposition \ref{prop:psdfacts} Part 5 then gives
\[
Z_{i}\approx_{\left(\sum_{j=i}^{k}\epsilon_{j}\right)}L_i^+
\]
as desired
\end{proof}

Our algorithm works by using Theorem \ref{thm:approx} to compute a constant approximation to $L^{+}$ and then boosting the approximation to an arbitrary $\epsilon$-approximation. Our main tool for this is the following lemma which shows how an approximate pseudoinverse can be improved to a high quality approximate pseudoinverse. This is essentially a symmetric version of a well known technique in linear system solving known as preconditioned Richardson iteration. It is a slight modification of Lemma~31 from \cite{ls15invmaintenance}.

\begin{lemma}
\label{lem:boostapprox}
Let $A, P \in \R^{n \times n}$ be symmetric invertible positive definite matrices such that for some $\epsilon \in (0, 1)$ it is the case that
$
(1 - \epsilon) P^{-1} \preceq 
A \preceq P^{-1}
$.
Then for all $k > 0$ we have that
$
P_k = \sum_{i = 0}^{k} P (I - A P)^{i} 
$ is a symmetric matrix satisfying
\[
\left(1 - \frac{\epsilon^{k + 1}}{1 - \epsilon}\right)
A^{-1}
\preceq 
P_k \preceq A^{-1} ~.
\]
\end{lemma}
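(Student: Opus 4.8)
The one genuine difficulty is that $A$ and $P$ need not commute, so we cannot reason directly about the eigenvalues of $P_k$ or of $I-AP$. The plan is to eliminate the non-commutativity with a single congruence. Set $\widetilde{A}\eqdef P^{1/2}AP^{1/2}$, a symmetric positive definite matrix, and conjugate the hypothesis $(1-\epsilon)P^{-1}\preceq A\preceq P^{-1}$ by $P^{1/2}$; since $P^{1/2}P^{-1}P^{1/2}=I$ this becomes the clean spectral statement $(1-\epsilon)I\preceq\widetilde{A}\preceq I$, i.e.\ every eigenvalue $\mu$ of $\widetilde{A}$ lies in $[1-\epsilon,1]$. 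Note also $A^{-1}=P^{1/2}\widetilde{A}^{-1}P^{1/2}$ and $I-AP=P^{-1/2}(I-\widetilde{A})P^{1/2}$, so $P(I-AP)^i=P^{1/2}(I-\widetilde{A})^iP^{1/2}$ for every $i$.

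Summing this last identity over $i=0,\ldots,k$ gives
\[
P_k=P^{1/2}\left(\sum_{i=0}^{k}(I-\widetilde{A})^i\right)P^{1/2}.
\]
From this form the symmetry of $P_k$ is immediate. Moreover $\sum_{i=0}^{k}(I-\widetilde{A})^i$ is a polynomial in $\widetilde{A}$, hence is diagonalized in the same basis, with eigenvalue $\sum_{i=0}^k(1-\mu)^i=\bigl(1-(1-\mu)^{k+1}\bigr)/\mu$ on the eigenvalue $\mu\in[1-\epsilon,1]$ of $\widetilde{A}$ (well defined and positive since $0\le 1-\mu\le\epsilon<1$ and $\mu>0$). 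Since $A^{-1}=P^{1/2}\widetilde{A}^{-1}P^{1/2}$ too, and since $X\preceq Y$ iff $P^{1/2}XP^{1/2}\preceq P^{1/2}YP^{1/2}$, the two inequalities to be proved reduce to the scalar inequalities
\[
\left(1-\frac{\epsilon^{k+1}}{1-\epsilon}\right)\frac{1}{\mu}\;\le\;\frac{1-(1-\mu)^{k+1}}{\mu}\;\le\;\frac{1}{\mu}\qquad\text{for all }\mu\in[1-\epsilon,1].
\]
The right inequality is trivial, and the left is equivalent to $(1-\mu)^{k+1}\le\epsilon^{k+1}/(1-\epsilon)$, which holds since $1-\mu\le\epsilon$ and $1/(1-\epsilon)\ge 1$.

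I expect the only real care needed is the algebraic bookkeeping behind the congruence trick — verifying $I-AP=P^{-1/2}(I-\widetilde{A})P^{1/2}$ and $A^{-1}=P^{1/2}\widetilde{A}^{-1}P^{1/2}$, and keeping track of the spectrum $[1-\epsilon,1]$ of $\widetilde{A}$ so that the Neumann/geometric manipulations and the divisions by $\mu$ and by $1-\mu$ are all legitimate (all guaranteed by $\epsilon\in(0,1)$). Once the problem is reduced to an inequality between scalar functions on $[1-\epsilon,1]$, everything is elementary. A variant that makes the factor $1/(1-\epsilon)$ appear more transparently is to bound the tail directly: $A^{-1}-P_k=P^{1/2}\bigl(\sum_{i\ge k+1}(I-\widetilde{A})^i\bigr)P^{1/2}\preceq\frac{\epsilon^{k+1}}{1-\epsilon}\,P\preceq\frac{\epsilon^{k+1}}{1-\epsilon}\,A^{-1}$, using $\sum_{i\ge k+1}\nu^i=\nu^{k+1}/(1-\nu)\le\epsilon^{k+1}/(1-\epsilon)$ for $\nu\in[0,\epsilon]$ together with $P\preceq A^{-1}$.
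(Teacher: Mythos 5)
Your proof is correct and follows essentially the same route as the paper: conjugate by $P^{1/2}$ to reduce to the commuting matrix $\widetilde{A}=P^{1/2}AP^{1/2}$ with spectrum in $[1-\epsilon,1]$, write $P_k=P^{1/2}\bigl(\sum_{i=0}^k(I-\widetilde{A})^i\bigr)P^{1/2}$, and bound the finite geometric series against $\widetilde{A}^{-1}$. Your closing ``variant'' (bounding the Neumann tail by $\frac{\epsilon^{k+1}}{1-\epsilon}$ times $P\preceq A^{-1}$) is precisely the paper's argument, and your eigenvalue-by-eigenvalue version is a harmless cosmetic rephrasing of it.
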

\begin{proof}
Let $M = P^{1/2} A P^{1/2}$ and $M_k = \sum_{i = 0}^{k} (I - M)^i$ so that $P_k = P^{1/2} M_k P^{1/2}$. Note that $(1 - \epsilon) I \preceq M \preceq I$ and therefore $0 \preceq I - M \preceq \epsilon I$. As this implies $\|I - M\|_2 < 1$ we have
\[
P^{-1/2} A^{-1} P^{-1/2} = M^{-1} = (I - (I - M))^{-1}
= \sum_{i = 0}^{\infty} (I - M)^i ~.
\]
Therefore, as $I \preceq M^{-1} = P^{-1/2} A^{-1} P^{-1/2}$ we have
\[
0 \preceq P^{-1/2} A^{-1} P^{-1/2} - P_k
= \sum_{i = k + 1}^{\infty} (I - M)^i
\preceq \left(\sum_{i = k + 1}^{\infty} \epsilon^i\right) I
= \frac{\epsilon^{k + 1}}{1 - \epsilon} I 
= \frac{\epsilon^{k + 1}}{1 - \epsilon} P^{-1/2} A^{-1} P^{-1/2} ~.
\]
This yields that 
\[
\left(1 - \frac{\epsilon^{k + 1}}{1 - \epsilon}\right)
P^{-1/2} A^{-1} P^{-1/2}
\preceq 
 M_k \preceq P^{-1/2} A^{-1} P^{-1/2} ~.
\]
and combining with the fact that $P_k = P^{1/2} M_k P^{1/2}$ yields the result.
\end{proof}
Note that we can apply the same iterative method to boost the approximation quality of a pseudoinverse (rather than true inverse) by carrying out all of the analysis in the space orthogonal to the kernel of the Laplacian.
\begin{corollary}
\label{cor:boostapprox}
Let $\tilde{L}^{+}\approx_{\alpha}L^{+}$ for $\alpha <1/2$. Then
\[
L_k=e^{-\alpha}\cdot\sum_{i = 0}^{k} L\cdot (I - e^{-\alpha}\cdot\tilde{L}^{+}\cdot L)^{i} 
\]
is a matrix satisfying $L_k\approx_{O((2\alpha)^k)} L^{+}$
\end{corollary}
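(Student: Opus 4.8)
The plan is to deduce the corollary from Lemma~\ref{lem:boostapprox} by restricting attention to the subspace $U \eqdef \mathrm{ker}(L)^{\perp}$, on which $L$ is genuinely invertible, as the remark preceding the corollary suggests. First I would record the structural facts that make this work: $L$ is symmetric positive semidefinite, so $\mathrm{ker}(L) = \mathrm{ker}(L^{\intercal})$ and $\mathrm{im}(L) = U$; by the pseudoinverse properties recalled above, $\mathrm{ker}(L^{+}) = \mathrm{ker}(L)$, $\mathrm{im}(L^{+}) = U$, and $(L^{+})^{+} = L$; and since $\tilde{L}^{+} \approx_{\alpha} L^{+}$ with both matrices symmetric positive semidefinite, the sandwich $e^{-\alpha} L^{+} \preceq \tilde{L}^{+} \preceq e^{\alpha} L^{+}$ forces $\mathrm{ker}(\tilde{L}^{+}) = \mathrm{ker}(L^{+}) = \mathrm{ker}(L)$ as well. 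Hence $L$, $L^{+}$, and $\tilde{L}^{+}$ all annihilate $\mathrm{ker}(L)$, map $U$ into itself, and restrict (write $(\cdot)_{U}$ for the restriction) to symmetric positive definite operators on $U$ with $(L_{U})^{-1} = (L^{+})_{U}$ and $(\tilde{L}^{+}_{U})^{-1} = ((\tilde{L}^{+})^{+})_{U}$.

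Next I would instantiate Lemma~\ref{lem:boostapprox} on $U$ with $A \eqdef L_{U}$ and $P \eqdef e^{-\alpha}\,\tilde{L}^{+}_{U}$. To check the hypothesis $(1-\epsilon)P^{-1} \preceq A \preceq P^{-1}$, apply Part~7 of Proposition~\ref{prop:psdfacts} to $\tilde{L}^{+} \approx_{\alpha} L^{+}$ (legitimate, as the two matrices share a kernel) to get $(\tilde{L}^{+})^{+} \approx_{\alpha} (L^{+})^{+} = L$, i.e.\ $e^{-\alpha} L_{U} \preceq (\tilde{L}^{+}_{U})^{-1} \preceq e^{\alpha} L_{U}$; multiplying through by $e^{\alpha}$ gives $L_{U} \preceq P^{-1} \preceq e^{2\alpha} L_{U}$, which is precisely the required hypothesis with $\epsilon \eqdef 1 - e^{-2\alpha}$. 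Since $\alpha < 1/2$ we have $\epsilon \le 2\alpha$ and $1-\epsilon = e^{-2\alpha} > 1/3$, so $\epsilon \in (0,1)$ and the lemma produces the symmetric matrix $P_{k} = \sum_{i=0}^{k} P(I-AP)^{i}$ satisfying $\bigl(1 - \tfrac{\epsilon^{k+1}}{1-\epsilon}\bigr)(L^{+})_{U} \preceq P_{k} \preceq (L^{+})_{U}$, where $\tfrac{\epsilon^{k+1}}{1-\epsilon} = e^{2\alpha}\epsilon^{k+1} \le 3(2\alpha)^{k+1} = O\bigl((2\alpha)^{k}\bigr)$.

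Finally I would (a) identify $P_{k}$, extended by $0$ on $\mathrm{ker}(L)$, with the matrix $L_{k}$ of the statement — using that $L$ and $\tilde{L}^{+}$ preserve the splitting $\R^{n} = U \oplus \mathrm{ker}(L)$ and kill $\mathrm{ker}(L)$, so that $e^{-\alpha}\sum_{i=0}^{k}\tilde{L}^{+}(I - e^{-\alpha}L\tilde{L}^{+})^{i}$ restricts on $U$ to $\sum_{i=0}^{k} P(I-AP)^{i} = P_{k}$ and vanishes off $U$, together with the elementary rearrangement $\tilde{L}^{+}(I - L\tilde{L}^{+}) = (I - \tilde{L}^{+}L)\tilde{L}^{+}$ to bring it to the displayed form — and (b) convert the $\preceq$-sandwich into $\approx_{\,\cdot\,}$ notation: since $L_{k}$ and $L^{+}$ are symmetric, annihilate $\mathrm{ker}(L)$, and preserve $U$, the PSD inequalities on $U$ lift verbatim to all of $\R^{n}$, and setting $\eta = -\ln\bigl(1 - \tfrac{\epsilon^{k+1}}{1-\epsilon}\bigr) = O\bigl((2\alpha)^{k}\bigr)$ yields $e^{-\eta}L^{+} \preceq L_{k} \preceq L^{+} \preceq e^{\eta}L^{+}$, i.e.\ $L_{k} \approx_{O((2\alpha)^{k})} L^{+}$. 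The step that needs genuine care — and the main obstacle — is precisely this bookkeeping: Lemma~\ref{lem:boostapprox} is stated for honestly invertible matrices, so throughout one must consistently distinguish "inverse on $U$'' from "pseudoinverse on $\R^{n}$'', and the identity on $U$ from the ambient $I$, which is exactly the point of carrying out the whole analysis in the space orthogonal to $\mathrm{ker}(L)$.
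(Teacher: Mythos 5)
Your proof is correct and follows essentially the same route as the paper: both invoke Lemma~\ref{lem:boostapprox} with $A$ proportional to $L$ and $P$ proportional to $\tilde{L}^{+}$, take $\epsilon = 1-e^{-2\alpha}$, and conclude via the same estimate $\epsilon^{k+1}/(1-\epsilon) = O((2\alpha)^{k})$. Your explicit restriction to $U=\mathrm{ker}(L)^{\perp}$ simply spells out the paper's one-line remark about carrying out the analysis orthogonally to the kernel, and absorbing the factor $e^{-\alpha}$ into $P$ rather than into $A$ (the paper takes $A=e^{-\alpha}L$, $P=\tilde{L}^{+}$) is a cosmetic difference that only relocates the final rescaling by $e^{-\alpha}$.
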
 
\begin{proof}
We have that $e^{-\alpha}\tilde{L}^{+}\preceq L^+\preceq e^{\alpha}\tilde{L}^{+}$. Multiplying by $e^{-\alpha}$ gives $e^{-2\alpha}\tilde{L}^{+}\preceq e^{-\alpha}L^+\preceq \tilde{L}^{+}$. Taking $A=e^{-\alpha}L$ and $P=\tilde{L}^{+}$ from Lemma \ref{lem:boostapprox} and $\epsilon=1-e^{-2\alpha}$ we get that for all $k>0$
\[
L_k = \sum_{i = 0}^{k} \tilde{L}^{+}\cdot (I - e^{-\alpha}L\cdot\tilde{L}^{+})^{i} 
\]
satisfies 

\[
\left(1-\frac{(1-e^{-2\alpha})^{k+1}}{e^{-2\alpha}}\right)e^{\alpha}L^+\preceq L_k \preceq e^{\alpha}L^+
\]
Notice that
\begin{align*}
\left(1-\frac{(1-e^{-2\alpha})^{k+1}}{e^{-2\alpha}}\right)&\geq 1-3\cdot(1-e^{-2\alpha})^{k+1}\\
&\geq 1-3\cdot(2\alpha)^{k+1}\\
&\geq e^{-6\cdot(2\alpha)^{k+1}}\\
&=1/e^{O((2\alpha)^{k})}
\end{align*}
which implies
\[
\frac{1}{e^{O((2\alpha)^{k})}}\cdot e^{\alpha}L^+\preceq L_k \preceq e^{\alpha}L^+
\]
Multiplying by $e^{-\alpha}$ completes the proof.  
\end{proof}
\end{section}

\begin{section}{Derandomized Squaring}
To approximate the pseudoinverse of a graph Laplacian we will use Theorem \ref{thm:approx} to get a constant approximation and then use Corollary \ref{cor:boostapprox} to boost the approximation. In order to achieve a space efficient and deterministic implementation of Theorem \ref{thm:approx}, we will replace every instance of matrix squaring with the derandomized square of Rozenman and Vadhan \cite{RozenmanVa05}. Before defining the derandomized square we define two-way labelings and rotation maps.

\begin{definition}{\cite{ReingoldVaWi01}}
A {\em two-way labeling} of a $d$-regular undirected multigraph $G$ is a labeling of the edges in $G$ such that 
\begin{enumerate}
\item Every edge $(u,v)$ has two labels in $[d]$, one as an edge incident to $u$ and one as an edge incident to $v$.
\item For every vertex $v$ the labels of the edges incident to $v$ are distinct.
\end{enumerate}
\end{definition}
In \cite{RozenmanVa05}, two-way labelings are referred to as {\em undirected} two-way labelings. In a two-way labeling, each vertex has its own labeling from $1$ to $d$ for the $d$ edges incident to it. Since every edge is incident to two vertices, each edge receives two labels, which may or may not be the same. It is convenient to specify a multigraph with a two-way labeling by a rotation map:
\begin{definition}{\cite{ReingoldVaWi01}}
Let $G$ be a $d$-regular multigraph on $n$ vertices with a two-way labeling. The {\em rotation map} Rot$_{G}\colon [n]\times[d]\to[n]\times[d]$ is defined as follows: Rot$_{G}(v,i)=(w,j)$ if the $i$th edge incident to vertex $v$ leads to vertex $w$ and this edge is the $j$th edge incident to $w$. 
\end{definition}
Note that Rot$_{G}$ is its own inverse, and that any function Rot$\colon[n]\times[d]\to[n]\times[d]$ that is its own inverse defines a $d$-regular multigraph on $n$ vertices with a two-way labeling. Recall that the edges in $G^2$ correspond to all of the walks of length 2 in $G$. This is equivalent to placing a $d$-clique with self loops on every vertex on the neighbor set of every vertex in $G$. The derandomized square picks out a subset of the walks of length 2 by placing a small degree expander on the neighbor set of every vertex rather than a clique. 

\begin{definition}[\cite{RozenmanVa05}]
\label{def:derandsquare}
Let $G$ be a $d$-regular multigraph on $n$ vertices with a two-way labeling, let $H$ be a $c$-regular multigraph on $d$ vertices with a two-way labeling. The {\em derandomized square} $G\ds H$ is a $(d\cdot c)$-regular graph on $n$ vertices with rotation map Rot$_{G\ds H}$ defined as follows: For $v_{0}\in[n], i_0\in[d]$, and $j_0\in [c]$, we compute Rot$_{G\ds H}(v_0,(i_0,j_0))$ as
\begin{enumerate}
\item Let $(v_1,i_1)=$Rot$_{G}(v_0,i_0)$
\item Let $(i_2,j_1)=$Rot$_{H}(i_1,j_0)$
\item Let $(v_2,i_3)=$Rot$_{G}(v_1,i_2)$
\item Output $(v_2,(i_3,j_1))$
\end{enumerate}
\end{definition}
It can be verified that Rot$_{G\ds H}$ is its own inverse and hence this indeed defines a $(d\cdot c)$-regular multigraph on $n$ vertices. The main idea behind the derandomized square is that it improves the connectivity of the graph (as measured by the second eigenvalue) without squaring the degree.

\begin{theorem}[\cite{RozenmanVa05}]
\label{thm:dsquare}
Let $G$ be a $d$-regular undirected multigraph with a two-way labeling and $\lambda(G)=\lambda$ and $H$ be a $c$-regular graph with a two-way labeling and $\lambda(H)=\mu$. Then $G\ds H$ is a $d\cdot c$-regular undirected multigraph and 
\[
\lambda(G\ds H)\leq 1-(1-\lambda^2)\cdot(1-\mu)\leq \lambda^{2}+\mu
\]
\end{theorem}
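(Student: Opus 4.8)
The plan is to analyze the action of the transition matrix $M(G \ds H)$ on vectors orthogonal to $\vec{1}$ by writing it as a product of three linear operators on the space $\R^n \otimes \R^d$: a ``lift'' operator that replaces a vertex-distribution by a uniform distribution over its incident edges, the action of the expander $H$ simultaneously on each vertex's edge-cloud, and a ``project'' operator that collapses back down to vertices. Concretely, following Rozenman--Vadhan, let $\dot{A}$ be the linear map that, given $v \in \R^n$, outputs the vector in $\R^{nd}$ that is constant $= v_u/d$ on the block of coordinates labeled by vertex $u$; let its transpose $\dot{A}^{\intercal}$ (up to a scalar) sum each block. One checks that $M(G) = \dot{A}^{\intercal} \dot{P} \dot{A}$ where $\dot P$ is the permutation matrix of the rotation map $\mathrm{Rot}_G$, and, reading off the four-step definition of $\mathrm{Rot}_{G \ds H}$, that $M(G \ds H) = \dot{A}^{\intercal} \dot{P}\, (I_n \otimes M(H))\, \dot{P} \dot{A}$, where $I_n \otimes M(H)$ applies the expander walk inside each of the $n$ edge-clouds. (Here I am assuming self-loops etc. are folded into the two-way labelings as in the paper's conventions.)

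Next I would decompose $I_n \otimes M(H)$ as the sum of its projection onto the per-block uniform vectors plus the remainder. Writing $J_d$ for the $d \times d$ all-$1/d$ matrix, we have $I_n \otimes M(H) = I_n \otimes J_d + I_n \otimes (M(H) - J_d)$, and by definition of $\lambda(H) = \mu$ the operator $E := I_n \otimes (M(H) - J_d)$ has spectral norm at most $\mu$ (it acts as $M(H)-J_d$, whose norm is $\mu$, on each block). The ``main term'' $\dot{A}^{\intercal}\dot P (I_n \otimes J_d)\dot P \dot A$ simplifies, using $\dot P (I_n \otimes J_d) = (I_n\otimes J_d)\dot P$-type identities together with $\dot A^{\intercal}\dot A$ being a scalar multiple of $I_n$, to exactly $M(G)^2$ — this is just the statement that true squaring corresponds to putting the complete graph (with self-loops), i.e. $J_d$, on each edge-cloud. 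So we obtain the clean operator identity
\[
M(G \ds H) = M(G)^2 + \dot{A}^{\intercal}\dot P\, E\, \dot P \dot A .
\]

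Finally I would bound $\lambda(G \ds H)$ by taking any unit $v \perp \vec 1$ and estimating $\|M(G\ds H)v\|$. The first term contributes $\|M(G)^2 v\| \le \lambda^2$. For the error term, note $\dot P$ is an orthogonal matrix, $\|E\|_2 \le \mu$, and $\dot A$ is an isometry up to scaling with $\dot A^{\intercal}\dot A = (1/d) I_n$ appropriately normalized, so $\|\dot{A}^{\intercal}\dot P E \dot P \dot A\|_2 \le \mu$ on the relevant subspace; one also has to check this error operator preserves orthogonality to $\vec 1$ (it does, since $\dot A \vec 1$ is the global uniform vector, which is fixed by $\dot P$ and by $I_n \otimes J_d$, hence killed by $E$). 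A slightly more careful accounting — pairing the cross terms and using that the ``good'' part already moves $v$ toward uniform — yields the sharper bound $\lambda(G \ds H) \le 1 - (1-\lambda^2)(1-\mu)$, from which $\lambda^2 + \mu$ follows since $(1-\lambda^2)(1-\mu) \ge 1 - \lambda^2 - \mu$. I expect the main obstacle to be the bookkeeping in establishing the operator identity for $M(G\ds H)$ and, more delicately, getting the \emph{sharp} constant $1-(1-\lambda^2)(1-\mu)$ rather than the easy-but-lossy $\lambda^2+\mu$: the sharp bound requires treating the component of $v$ along $M(G)$'s large eigenvalues more carefully (writing $M(G)v = \lambda' w$ with $|\lambda'|\le\lambda$ and tracking how $E$ shrinks the part that has already been ``mixed''), rather than just triangle-inequalitying the two terms. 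Since the theorem is quoted from~\cite{RozenmanVa05}, I would cite that analysis for the sharp step and only reproduce the identity and the crude bound in detail.
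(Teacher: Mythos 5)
First, note that the paper itself does not prove this theorem: it is imported verbatim from \cite{RozenmanVa05}, and the closest argument in the paper is the proof of Theorem~\ref{thm:dsapproxessquare}, which reuses exactly the operator factorization you describe ($M(G\ds H)=P\tilde{A}\tilde{B}\tilde{A}Q$ with the lift $Q$, the rotation-map permutation $\tilde{A}$, the block operator $I_n\otimes M(H)$, and the projection $P=dQ^{\intercal}$). Your setup matches that machinery, and your identity ``main term $=M(G)^2$'' is correct, though the justification should go through $QP=I_n\otimes J_d$ and $P\tilde{A}Q=M(G)$ (so $P\tilde{A}(I_n\otimes J_d)\tilde{A}Q=P\tilde{A}QP\tilde{A}Q=M(G)^2$) rather than any claimed commutation of $\tilde{A}$ with $I_n\otimes J_d$, which does not hold in general. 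The regularity/involution claim ($\mathrm{Rot}_{G\ds H}$ is its own inverse, so $G\ds H$ is a $dc$-regular undirected multigraph) is routine but should at least be stated as checked.

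The genuine gap is the sharp constant. Your decomposition $I_n\otimes M(H)=I_n\otimes J_d+E$ with $\|E\|_2\le\mu$ gives, after conjugation, only $\lambda(G\ds H)\le\lambda^2+\mu$ by the triangle inequality, and your sketch for upgrading this (``pairing the cross terms,'' tracking the already-mixed component) is too vague to constitute a proof; as written you are simply deferring the actual inequality $\lambda(G\ds H)\le 1-(1-\lambda^2)(1-\mu)$ to the citation. The fix is standard and makes the delicate accounting unnecessary: since $\lambda(H)\le\mu$, one can write $M(H)=(1-\mu)J_d+\mu C$ with $\|C\|_2\le 1$ (check this separately on $\vec{1}_d$ and on $\vec{1}_d^{\perp}$). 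Then
\begin{equation*}
M(G\ds H)=(1-\mu)\,M(G)^2+\mu\,P\tilde{A}(I_n\otimes C)\tilde{A}Q ,
\end{equation*}
and since $\|Qv\|=\|v\|/\sqrt{d}$, $\|P\|_2=\sqrt{d}$, and $\tilde{A}$ is orthogonal, the second operator has spectral norm at most $1$; it also fixes the relevant orthogonal decomposition because $Q\vec{1}_n$ is the global uniform vector. Hence for unit $v\perp\vec{1}$, $\|M(G\ds H)v\|\le(1-\mu)\lambda^2+\mu=1-(1-\lambda^2)(1-\mu)$, which is exactly the claimed sharp bound, and $\le\lambda^2+\mu$ follows by expanding the product. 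With that one change your outline becomes a complete proof; without it, it only establishes the weaker bound.
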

The output of the derandomized square is an undirected multigraph with a two-way labeling. So the operation can be repeated on the output. In our algorithm for computing an approximate pseudoinverse, we use  Identity \ref{idexpansion} but replace every instance of a squared transition matrix with the derandomized square. To prove that this approach indeed yields an approximation to the pseudoinverse, we want to apply Theorem \ref{thm:approx}. For this we need to prove two properties of the derandomized square: that the derandomized square is a good spectral approximation of the true square and that iterating the derandomized square yields an approximation of the complete graph. The latter property follows from the corollary and lemma below.
\begin{corollary}
\label{cor:iteratedsquaring}
Let $G_0$ be a $d$-regular undirected multigraph on $n$ vertices with a two-way labeling and $\lambda(G_0)=\lambda$ and $H_1,\ldots,H_k$ be $c$-regular graphs with two-way labelings where $H_i$ has $d\cdot c^{i-1}$ vertices and $\lambda(H_i)\leq \mu$. For all $i\in[k]$ let $G_i=G_{i-1}\ds H_i$. If $\mu\leq 1/100$ and $k=\lceil{6\log d^2 n^2}\rceil$ then $\lambda(G_k)\leq 1/3$.
\end{corollary}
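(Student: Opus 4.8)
The plan is to track the sequence of second eigenvalues $\lambda_i \eqdef \lambda(G_i)$ and show that, as long as $\lambda_i$ stays above $1/3$, the spectral gap $1-\lambda_i$ grows by a constant multiplicative factor at each step; after $O(\log(d^2n^2))$ steps the gap would then exceed $2/3$, which is impossible, so $\lambda_i$ must have already fallen below $1/3$. I would begin with two bookkeeping observations: each $G_i$ is $(d\cdot c^{i})$-regular, so $H_i$ (on $d\cdot c^{i-1}$ vertices) has exactly the vertex set needed for $G_i = G_{i-1}\ds H_i$ to be defined; and since $G_0$ is connected and has a self-loop on every vertex, the bound $\lambda(G_0)\le 1-1/(2d^2n^2)$ from \cite{RozenmanVa05} gives the quantitative initial estimate $1-\lambda_0 \ge 1/(2d^2n^2)$.

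The core estimate is Theorem~\ref{thm:dsquare}, which gives, for each $i\ge 1$,
$$1-\lambda_i \;\ge\; (1-\lambda_{i-1}^2)(1-\mu) \;=\; (1-\lambda_{i-1})(1+\lambda_{i-1})(1-\mu),$$
and separately the crude bound $\lambda_i \le \lambda_{i-1}^2 + \mu$. I would then split into two regimes. If $\lambda_{i-1} > 1/3$, then $1+\lambda_{i-1} > 4/3$ and $1-\mu \ge 99/100$, so $1-\lambda_i \ge \tfrac43\cdot\tfrac{99}{100}(1-\lambda_{i-1}) \ge 1.3\,(1-\lambda_{i-1})$: the gap grows by a factor at least $1.3$. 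If instead $\lambda_{i-1}\le 1/3$, then $\lambda_i \le \tfrac19 + \tfrac{1}{100} < \tfrac13$, so once the second eigenvalue drops below $1/3$ it stays below $1/3$ forever.

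To conclude, suppose for contradiction that $\lambda_k > 1/3$. By the second regime, if any $\lambda_i$ with $i\le k$ were $\le 1/3$ then $\lambda_k$ would be too; hence $\lambda_i > 1/3$ for all $i\le k$, the first regime applies at every step, and $1-\lambda_k \ge 1.3^{k}(1-\lambda_0) \ge 1.3^{k}/(2d^2n^2)$. But $\lambda_k > 1/3$ forces $1-\lambda_k < 2/3$, so $1.3^{k} < \tfrac43 d^2n^2$, i.e.\ $k < \log_{1.3}(\tfrac43 d^2n^2)$. Since $1/\log(1.3) < 6$ (whether $\log$ is base $2$ or base $e$) and the $\tfrac43$ contributes only a bounded additive term, one checks $\lceil 6\log(d^2n^2)\rceil \ge \log_{1.3}(\tfrac43 d^2n^2)$, contradicting $k = \lceil 6\log(d^2n^2)\rceil$. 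Hence $\lambda_k \le 1/3$.

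I do not expect a genuine obstacle here --- this is the standard gap-amplification analysis for (derandomized) squaring. The points that need care are: remembering to use the quantitative lower bound $1-\lambda_0 \ge 1/(2d^2n^2)$ (this is precisely where connectivity/aperiodicity of $G_0$ enters), handling the ``already small'' regime so that the amplification argument is monotone in $i$, and verifying that the constant $6$ comfortably dominates $1/\log(1.3)$.
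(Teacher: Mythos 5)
Your proof is correct and follows essentially the same route as the paper: amplify the spectral gap via Theorem~\ref{thm:dsquare} by a constant factor per derandomized square while $\lambda(G_i)>1/3$, start from $1-\lambda(G_0)\ge 1/(2d^2n^2)$ (which, as you note, is where connectivity and the self-loops enter), and conclude that $k=\lceil 6\log d^2n^2\rceil$ steps suffice. The only difference is presentational: you make explicit, via $\lambda_i\le\lambda_{i-1}^2+\mu\le 1/9+1/100<1/3$, that the region $\lambda\le 1/3$ is absorbing, a point the paper's proof passes over with the phrase ``until $\gamma_i$ is driven above $2/3$,'' and you phrase the final count as a contradiction rather than directly.
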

\begin{proof}
Let $\mu\leq 1/100$ and for all $i\in[k]$, let $\gamma_{i}=1-\lambda(G_i)$. If $\gamma_{i-1}< 2/3$ then
\begin{align*}
\gamma_i&\geq (1-\lambda(G_{i-1})^2)\cdot(1-\mu) \\ 
&=(2\gamma_{i-1}-\gamma_{i-1}^2)\cdot(1-\mu) \\ 
&\geq \frac{99}{100}\cdot\frac{4}{3}\gamma_{i-1}\\
&\geq \frac{5}{4}\gamma_{i-1}
\end{align*}
It follows that until $\gamma_{i}$ is driven above $2/3$ we have $\gamma_{k}\geq (5/4)^k\cdot\gamma_0$. Since $\gamma_0\geq 1/2d^2n^2$, setting $k=\lceil{6\log d^2n^2}\rceil$ will result in $\lambda(G_k)\leq 1/3$.
\end{proof}

\begin{lemma}
\label{lem:equivalence}
If $L=I-M$ is the normalized Laplacian of a $d$-regular, undirected, 1/2-lazy, multigraph $G$, and $\epsilon=\ln(1/(1-\lambda))$ then $L\approx_{\epsilon} I-J$ if and only if $\lambda(G)\leq \lambda$.
\end{lemma}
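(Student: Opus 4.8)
The plan is to diagonalize $L$ and $I-J$ in a common orthonormal eigenbasis and thereby turn the two Loewner inequalities defining $\approx_\epsilon$ into scalar inequalities on the eigenvalues of $M$. Since $G$ is $d$-regular and undirected, $M=A/d$ is real symmetric, so there is an orthonormal basis $v_1,\ldots,v_n$ of eigenvectors of $M$ with eigenvalues $\mu_1\ge\cdots\ge\mu_n$, where we may take $v_1=\vec{1}/\sqrt n$ and $\mu_1=1$, and $\lambda(G)=\max_{i\ge 2}|\mu_i|$. Because $G$ is $1/2$-lazy, every $\mu_i$ is nonnegative (indeed $\hat M\eqdef 2M-I$ is a symmetric stochastic matrix, so has eigenvalues in $[-1,1]$, whence $\mu_i\in[0,1]$); in particular $|\mu_i|=\mu_i$, so $\lambda(G)=\max_{i\ge 2}\mu_i$. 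In this basis $J=v_1v_1^\intercal$, so $I-J=\sum_{i\ge 2}v_iv_i^\intercal$ and $L=I-M=\sum_{i\ge 2}(1-\mu_i)v_iv_i^\intercal$ are simultaneously diagonalized: both annihilate $v_1$, and on $v_i$ ($i\ge 2$) the matrix $I-J$ has eigenvalue $1$ while $L$ has eigenvalue $1-\mu_i\in[0,1]$.

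Next I would invoke the standard fact that for two symmetric matrices that are diagonal in the same basis, $X\preceq Y$ holds iff the corresponding eigenvalues satisfy $x_i\le y_i$ for every $i$. Unwinding the definition, $L\approx_\epsilon I-J$ is equivalent to $e^{-\epsilon}(I-J)\preceq L\preceq e^{\epsilon}(I-J)$, and the $v_1$-coordinate contributes only $0\le 0\le 0$, so this is in turn equivalent to
\[
e^{-\epsilon}\;\le\;1-\mu_i\;\le\;e^{\epsilon}\qquad\text{for all }i\in\{2,\ldots,n\}.
\]
The upper inequality is automatic and vacuous: since $\mu_i\ge 0$ we have $1-\mu_i\le 1$, while $e^{\epsilon}=1/(1-\lambda)\ge 1$. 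All the content is in the lower inequality $1-\mu_i\ge e^{-\epsilon}$, i.e.\ $\mu_i\le 1-e^{-\epsilon}$, and by the definition of $\epsilon$ we have precisely $1-e^{-\epsilon}=1-(1-\lambda)=\lambda$. Hence $L\approx_\epsilon I-J$ holds iff $\mu_i\le\lambda$ for all $i\ge 2$, which, using $|\mu_i|=\mu_i$, is exactly $\lambda(G)=\max_{i\ge 2}|\mu_i|\le\lambda$. This proves both implications at once.

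There is no serious obstacle in this argument; the one point to get right is the role of $1/2$-laziness. It is precisely what forces $\mu_i\ge 0$, which collapses the two-sided eigenvalue bound $|\mu_i|\le\lambda$ in the definition of $\lambda(G)$ down to the single inequality $\mu_i\le\lambda$ produced by the lower Loewner bound (without it, the bound $L\preceq e^{\epsilon}(I-J)$ would impose the asymmetric constraint $\mu_i\ge 1-e^{\epsilon}=-\lambda/(1-\lambda)$ rather than $\mu_i\ge-\lambda$, and the stated equivalence would fail). Beyond that, the proof uses only simultaneous diagonalizability to pass between the Loewner order and coordinatewise comparison of eigenvalues, together with the one-line identity $1-e^{-\epsilon}=\lambda$.
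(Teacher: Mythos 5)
Your proof is correct and is essentially the paper's argument in spectral clothing: the paper checks the quadratic-form inequalities on $v\perp\vec{1}$ (and on $\vec{1}$) using the decomposition $M=\tfrac12(I+T)$, which is exactly your observation that $1/2$-laziness forces $\mu_i\in[0,1]$ and makes the upper Loewner bound automatic, while the lower bound reduces to $\mu_i\le 1-e^{-\epsilon}=\lambda$. Your packaging as a single chain of equivalences via simultaneous diagonalization of $L$ and $I-J$ is a harmless reformulation, so there is nothing to add.
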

A proof of Lemma \ref{lem:equivalence} can be found in Appendix \ref{app:equivalence}. 

We prove in Theorem \ref{thm:dsapproxessquare} that if $H$ is an expander then the derandomized square of a multigraph $G$ with $H$ is a good spectral approximation to $G^{2}$ and thus can be used for the approximation in Theorem \ref{thm:approx}. The idea is that the square of a graph can be interpreted as putting $d$-clique $K$ on the neighbors of each vertex $v$ in $G$, and the derandomized square can be interpreted as putting a copy of $H$ on each vertex. By Lemma \ref{lem:equivalence}, $L(H)$ spectrally approximates $L(K)$, and we can use this to show that the Laplacian of $L(G \ds H)$ spectrally approximates $L(G^2)$.

\begin{theorem}
\label{thm:dsapproxessquare}
Let $G$ be a $d$-regular, undirected, aperiodic multigraph on $n$ vertices and $H$ be a regular multigraph on $d$ vertices with $\lambda(H)\leq \alpha$. Then 
\[
L(G^{2})\approx_{\epsilon} L(G\ds H)
\]
for $\epsilon=\ln(1/(1-\alpha))$. 
\end{theorem}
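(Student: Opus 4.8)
The plan is to reduce the statement $L(G^2) \approx_\epsilon L(G \ds H)$ to the edge-by-edge comparison furnished by Lemma~\ref{lem:equivalence}, exploiting the interpretation of squaring and derandomized squaring as "local" operations on the neighbor sets of vertices. Concretely, I would first set up the combinatorial picture: $G^2$ is obtained from $G$ by placing, on the $d$ edge-slots around each vertex $v$, a complete graph $K_d$ with a self-loop on every vertex (the $d$-clique-with-loops $K$), and then "lifting" the two-step walk structure back to the vertex set of $G$ via $\mathrm{Rot}_G$. Likewise, $G \ds H$ is obtained by the same lifting procedure but with $K$ replaced by $H$ on each neighbor set. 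The key algebraic fact I would isolate is that both $L(G^2)$ and $L(G \ds H)$ can be written in a common form $R^\intercal\, \big(I \otimes \tfrac{1}{d}L(\cdot)\big)\, R$ — or some close variant — where $R$ is a fixed orthogonal-up-to-scaling "rotation" matrix determined by $\mathrm{Rot}_G$, and the only difference between the two expressions is whether the inner block is (a scaled copy of) $L(K)$ or $L(H)$. Getting this common form exactly right, with the correct normalization so that the inner matrices are genuinely the \emph{normalized} Laplacians to which Lemma~\ref{lem:equivalence} applies, is where I expect most of the bookkeeping to live.

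Once that structural identity is in place, the rest is a short chain of invocations of Proposition~\ref{prop:psdfacts}. Since $H$ has $d$ vertices and $\lambda(H) \le \alpha$, and $H$ is (after adding the self-loops implicit in the derandomized-square construction, or because we arrange $H$ to be $1/2$-lazy) a lazy regular graph, Lemma~\ref{lem:equivalence} gives $L(H) \approx_\epsilon I - J_d = L(K)/d \cdot d$ — that is, the normalized Laplacian of $H$ spectrally $\epsilon$-approximates that of the clique-with-loops, for exactly $\epsilon = \ln(1/(1-\alpha))$. Then I would tensor up (Part~8 of Proposition~\ref{prop:psdfacts}) to get $I \otimes L(H) \approx_\epsilon I \otimes L(K)$ across all $n$ neighbor-set blocks simultaneously, and conjugate by the fixed matrix $R$ (Part~6) to conclude $L(G \ds H) \approx_\epsilon L(G^2)$, which is the claim (using Part~1 to flip the direction if the orientation comes out reversed). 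Parts~2 and~3 handle any stray scalar factors or additive identity terms that appear in the common form.

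The main obstacle, as noted, is establishing the common "conjugation by $R$" representation with correct normalizations — in particular verifying that the transition matrix $M(G \ds H)$ really does arise from $M(G)$ and $M(H)$ in the expected bilinear way (this is essentially a restatement of the rotation-map composition in Definition~\ref{def:derandsquare}, but one must track the two-way labelings carefully), and that adding the self-loops needed to make $H$ lazy (so Lemma~\ref{lem:equivalence} applies) does not distort the comparison. A secondary subtlety is aperiodicity: the hypothesis that $G$ is aperiodic is presumably what guarantees $K$ (equivalently $M(G^2)$ restricted to each neighbor set) and $H$ share the same kernel, which is needed so that the spectral-approximation relation is meaningful and so that Part~7 of Proposition~\ref{prop:psdfacts} could later be applied to the pseudoinverses; I would make sure the kernels of $L(G^2)$ and $L(G \ds H)$ genuinely coincide (both equal $\mathrm{span}(\vec 1)$ since $G$ is connected and aperiodic and the derandomized square only improves connectivity, by Theorem~\ref{thm:dsquare}). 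Modulo those checks, the proof is a direct translation of the "local expander replaces local clique" intuition into the spectral-approximation calculus already developed.
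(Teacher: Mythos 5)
Your proposal is correct and follows essentially the same route as the paper: the paper likewise writes $M(G\ds H)=P\tilde{A}\tilde{B}\tilde{A}Q$ with $P=d\cdot Q^{\intercal}$, so that $L(G\ds H)=d\,Q^{\intercal}\tilde{A}^{\intercal}(I_n\otimes L(H))\tilde{A}Q$ and $L(G^2)=d\,Q^{\intercal}\tilde{A}^{\intercal}(I_n\otimes(I-J))\tilde{A}Q$, and then invokes Lemma~\ref{lem:equivalence} together with Parts~8 and~6 of Proposition~\ref{prop:psdfacts}, exactly your ``local expander replaces local clique'' conjugation argument. The bookkeeping you flag (the normalization by $d$, laziness of $H$ in Lemma~\ref{lem:equivalence}, which in any case is not essential since $\lambda(H)\leq\alpha$ bounds eigenvalues in absolute value) is handled the same way in the paper's proof.
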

\begin{proof}

Following the proof from \cite{RozenmanVa05} that the derandomized square improves connectivity, we can write the transition matrix for the random walk on $G\ds H$ as $M=P \tilde{A} \tilde{B} \tilde{A} Q$, where each matrix corresponds to a step in the definition of the derandomized square:

\begin{itemize}
\item $Q$ is an $n\cdot d\times n$ matrix that maps any $n$-vector $v$ to $v\otimes u_d$ (where $u_d$ denotes the $d$-coordinate uniform distribution). This corresponds to ``lifting'' a probability distribution over $[n]$ to one over $[n]\times[d]$ where the mass on each coordinate is divided uniformly over $d$ coordinates in the distribution over $[n]\times[d]$. That is $Q_{(u,i),v}=1/d$ if $u=v$ and 0 otherwise where the rows of $Q$ are ordered $(1,1),(1,2),\ldots,(1,d),(2,1),\ldots,(2,d),\ldots(n,d)$.
\item $\tilde{A}$ is the $n\cdot d\times n\cdot d$ permutation matrix corresponding to the rotation map for $G$. That is $A_{(u,i),(v,j)}=1$ if Rot$_G(u,i)=(v,j)$ and 0 otherwise.
\item $\tilde{B}$ is $I_n \otimes B$, where $I_n$ is the $n\times n$ identity matrix and $B$ is the transition matrix for $H$.
\item $P$ is the $n\times n\cdot d$ matrix that maps any $(n\cdot d)$-vector to an $n$-vector by summing all the entries corresponding to the same vertex in $G$. This corresponds to projecting distributions on $[n]\times[d]$ back down to a distribution over $[n]$. That is $P_{v,(u,i)}=1$ if $u=v$ and 0 otherwise where the columns of $P$ are ordered $(1,1),(1,2),\ldots,(1,d),(2,1),\ldots,(2,d),\ldots(n,d)$. Note that $P=d\cdot Q^{\intercal}$.
\end{itemize}

Let $c$ be the degree of $H$ and let $K$ be the complete graph on $d$ vertices with self loops on every vertex. Lemma \ref{lem:equivalence} says that $L_H\approx_{\epsilon}L_K$ where $L_H=I-B$ and $L_K=I-J$ are the normalized Laplacians of $H$ and $K$, respectively. It follows that $I_n\otimes L_H = I_{n\cdot d}-\tilde{B}$ $\epsilon$-approximates $I_n\otimes L_K = I_{n\cdot d}-I_n\otimes J$ by Proposition \ref{prop:psdfacts} Part 8. $I_n\otimes L_H$ is the Laplacian for $n$ disjoint copies of $H$ and $I_n\otimes L_K$ is the Laplacian for $n$ disjoint copies of $K$.  Applying the matrices $P\tilde{A}$ and $\tilde{A}Q$ on the left and right place these copies on the neighborhoods of each vertex of $G$.  

Note that $\tilde{A}$ is symmetric because Rot$_{G}$ is an involution. In other words, for all $u,v\in[n]$ and $a,b\in[d]$, Rot$_{G}(u,a)=(v,b) \implies \mathrm{Rot}_{G}(v,b)=(u,a)$. This also implies that $\tilde{A}^{2}=I_{n\cdot d}$. Also note that $PQ=I_n$ and $QP=I_{n\cdot d}\otimes J$. Applying these observations along with Proposition \ref{prop:psdfacts} part 6 we get 
\begin{align*}
I_n-M &= P\tilde{A}I_{n\cdot d}\tilde{A}Q-P\tilde{A}\tilde{B}\tilde{A}Q\\
&=P\tilde{A}(I_n\otimes L_H)\tilde{A}Q \\
& =d\cdot Q^{\intercal}\tilde{A}^{\intercal}(I_n\otimes L_H)\tilde{A}Q\\
&\approx_{\epsilon} d\cdot Q^{\intercal}\tilde{A}^{\intercal}(I_n\otimes L_K)\tilde{A}Q\\
&=P\tilde{A}(I_n\otimes L_K)\tilde{A}Q\\
&=I_n-P\tilde{A}(I_n\otimes J)\tilde{A}Q\\
&=I_n-P\tilde{A}QP\tilde{A}Q\\
&=L(G^2)
\end{align*}
where the final line follows from observing that $P\tilde{A}Q$ equals the transition matrix of $G$. Thus the above shows that $L(G^2)\approx_\epsilon L(G\ds H)$ as desired. 
\end{proof}

Note that if we want to compute a constant spectral approximation to $L^{+}$ then we need each $\epsilon_{i}=1/O(\log n)$ so that the sum of $k=O(\log n)$ of them is a constant. This implies that we need $\alpha=1/O(\log n)$ in Theorem \ref{thm:dsapproxessquare} which requires expanders of polylogarithmic degree. This will affect the space complexity of the algorithm as discussed in the next section.
\end{section}

\begin{section}{Space complexity}
\label{sect:space}
In this section we argue that given a Laplacian $L$, our algorithm produces an $\epsilon$-approximation to $L^+$ using $O\left(\log n\cdot\log\log(n/\epsilon)\right)$ space. First we prove that when $L$ is the normalized Laplacian of a regular graph, we can produce a constant-approximation to $L^{+}$ using $O\left(\log n\cdot\log\log n\right)$ space and then boost the approximation quality to arbitrary $\epsilon$ accuracy using an additional $O(\log n\cdot\log\log (1/\epsilon))$ space thereby proving Theorem \ref{thm:arbitraryapproxmain}.

\begin{subsection}{Model of space bounded computation}
\label{sect:spacemodel}
We use a standard model of space bounded computation described here. The machine has a read-only input tape, a constant number of  read/write work tapes, and a write-only output tape. We say the machine runs in space $s$ if throughout the computation, it only uses $s$ tape cells on the work tapes. The machine may write outputs to the output tape that are larger than $s$ (in fact as large as $2^{O(s)}$) but the output tape is write-only. For the randomized analogue, we imagine the machine endowed with a coin-flipping box. At any point in the computation the machine may ask for a random bit but in order to remember any bit it receives, the bit must be recorded on a work tape. 

We use the following proposition about the composition of space-bounded algorithms.
\begin{proposition}
\label{prop:composition}
Let $f_1$ and $f_2$ be functions that can be computed in space $s_1(n),s_2(n)\geq \log n$, respectively, and $f_2$ has output of length $\ell_1(n)$ on inputs of size $n$. Then $f_2\circ f_1$ can be computed in space 
\[
O(s_2(\ell_1(n)) + s_1(n)).
\]
\end{proposition}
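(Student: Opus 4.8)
The plan is to give the standard ``recomputation on demand'' argument. Writing $\ell_1(n)$ for the output length of $f_1$ on inputs of size $n$ (equivalently, the length of the input handed to $f_2$ in the composition), the key obstacle is that we cannot afford to store the intermediate string $y = f_1(x)$ on a work tape, since $|y| = \ell_1(n)$ may be as large as $2^{\Theta(s_1(n))}$. Instead, I would simulate the space-$s_2$ machine $M_2$ computing $f_2$ as if its read-only input tape contained $y$, and regenerate individual symbols of $y$ whenever $M_2$ reads them, by re-running the space-$s_1$ machine $M_1$ computing $f_1$ on $x$ from scratch.

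Concretely, the simulation maintains: (i) the finite control and all work-tape contents of $M_2$, costing $s_2(|y|) = s_2(\ell_1(n))$ cells; and (ii) a counter $p \in \{1,\dots,\ell_1(n)\}$ holding the current position of $M_2$'s input head on $y$, costing $O(\log \ell_1(n))$ cells, updated whenever $M_2$ moves that head. When $M_2$ wants to read its $p$-th input symbol, we launch a fresh run of $M_1$ on $x$ (still available on the global read-only input tape), using a private block of $s_1(n)$ cells for $M_1$'s work tapes together with an $O(\log \ell_1(n))$-bit counter that tallies output symbols as $M_1$ emits them; once the tally reaches $p$ we read off the symbol $M_1$ is currently writing, feed it to $M_2$, and erase the $M_1$ run. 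Symbols output by $M_2$ go directly to the write-only output tape, so they incur no work-space cost. Correctness is immediate: the simulated $M_2$ sees exactly the input $y = f_1(x)$ and hence outputs $f_2(f_1(x))$.

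For the space bound, the three pieces alive at any time are $M_2$'s tapes ($s_2(\ell_1(n))$), the $M_1$ subroutine ($s_1(n)$), and two counters of size $O(\log \ell_1(n))$. Using the hypothesis $s_2(m) \ge \log m$ we get $\log \ell_1(n) \le s_2(\ell_1(n))$, and using $s_1(n)\ge \log n$ the remaining bookkeeping is absorbed, so the total is $O(s_2(\ell_1(n)) + s_1(n))$, as claimed.

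The only genuinely delicate point — the step I would check most carefully — is confirming that the head-position pointer and the output-symbol counter (each of size $\Theta(\log \ell_1(n))$) are dominated by the stated budget; this is exactly where $s_2(n)\ge \log n$ is used, since $\ell_1(n)$ is the length of $f_2$'s input. A secondary point I would state explicitly is the model convention that each machine's output head only advances, which is what makes ``the $i$-th symbol written'' well-defined and lets the recomputation locate $y_p$ by counting. Everything else is routine simulation bookkeeping.
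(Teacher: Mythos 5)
Your recomputation-on-demand argument is correct and is the standard proof of this composition lemma; the paper itself states Proposition \ref{prop:composition} as a known fact and gives no proof, so there is nothing to diverge from. Your accounting of the two $O(\log \ell_1(n))$ counters via $s_2(m)\geq\log m$ (and, if one prefers, via $\ell_1(n)\leq 2^{O(s_1(n))}$) is exactly the right point to flag; the only remaining routine detail is handling reads past the end of $y$, e.g.\ by running $M_1$ to completion once to detect that $p$ exceeds the output length.
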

\end{subsection}

\begin{subsection}{Constant approximation of the pseudoinverse}
In this section we show how to compute a constant approximation of the pseudoinverse of the normalized Laplacian of a regular multigraph in space $O(\log n\cdot\log\log n)$ where $n$ is the bitlength of the input. 
\begin{proposition}
\label{prop:constantapprox}
There is an algorithm that given an undirected, aperiodic, $d$-regular multigraph $G$ with normalized Laplacian $L$, and $d$ a power of 2, computes a matrix $\tilde{L}^{+}$ such that $\tilde{L}^{+}\approx_{\alpha}L^{+}$ using space $O\left(\log n\cdot\log\log n\right)$ where $\alpha < 1/2$ and $n$ is the bitlength of the input.
\end{proposition}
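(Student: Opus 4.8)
The plan is to run the Peng--Spielman recursion of Theorem~\ref{thm:approx} with true squaring replaced by derandomized squaring, and then to argue that the resulting matrix polynomial can be evaluated in space $O(\log n\cdot\log\log n)$. By adding self-loops we may assume $G$ is $d$-regular and (say) $1/2$-lazy, which only rescales $L$ and $L^+$ by a constant that we undo at the end. Fix a strongly explicit family of $c$-regular expanders with $\lambda(\cdot)\le\mu$, where $\mu=\Theta(1/\log n)$ and $c=\poly(1/\mu)=\polylog(n)$ is a power of $2$ (e.g.\ raise a constant-degree expander to the power $\Theta(\log\log n)$). Let $H_1,\dots,H_k$ come from this family with $H_i$ on $d\cdot c^{i-1}$ vertices, set $G_0=G$, $G_i=G_{i-1}\ds H_i$, $M_i=M(G_i)$, $L_i=I-M_i$, and take $k=\Theta(\log n)$ slightly larger than in Corollary~\ref{cor:iteratedsquaring} so that $\lambda(G_k)$ falls below a small constant. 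Then each $L_i$ is positive semidefinite; Theorem~\ref{thm:dsapproxessquare} gives $L_i\approx_{\epsilon_{i-1}}I-M_{i-1}^2$ with $\epsilon_{i-1}=\ln(1/(1-\mu))=\Theta(1/\log n)$; and Corollary~\ref{cor:iteratedsquaring} with Lemma~\ref{lem:equivalence} gives $L_k\approx_{\epsilon_k}I-J$ with $\epsilon_k$ a small constant. Choosing the constant in $\mu$ small enough makes $\alpha:=\sum_{i=0}^k\epsilon_i<1/2$, so Theorem~\ref{thm:approx} yields
\[
Z_0:=\tfrac12(I-J)+\sum_{i=0}^{k-1}\tfrac1{2^{i+2}}W_i+\tfrac1{2^{k+1}}W_k\ \approx_\alpha\ L_0^+,
\]
where each $W_i$ is a product of at most $2k+3$ matrices from $\{I+M_0,\dots,I+M_k\}\cup\{I-J\}$, each of spectral norm at most $2$. (One also has to check that the $G_i$ remain aperiodic and suitably lazy so that the hypotheses of Theorem~\ref{thm:dsapproxessquare} and Lemma~\ref{lem:equivalence} apply; this is arranged through the choice of the $H_i$, and is one place where $d$ being a power of $2$ is convenient.)

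For the space bound I would first handle access to the transition matrices. Following \cite{Reingold08,RozenmanVa05}, the rotation map of the $i$-fold derandomized square $G_i$ is computable in space $O(\log N+i\log c)$ by carrying the $G$-level data (a vertex in $[N]$ and an edge label in $[d]$) in place together with a stack of $i$ labels in $[c]$, recursing through the definition of $\ds$. Since $N,d\le n$ and $\log c=O(\log\log n)$, this is $O(\log n+k\log c)=O(\log n\cdot\log\log n)$. From the rotation map, any entry of $M_i=A(G_i)/(dc^i)$, hence of $I+M_i$, is computable in the same space by counting neighbours; via Proposition~\ref{prop:composition} this gives oracle access to the entries of every $I+M_i$ at cost $O(\log n\cdot\log\log n)$.

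Next I would evaluate $Z_0$ entry by entry. It is a sum of $k+1$ terms and iterated addition is in logspace, so it suffices to compute each scaled product $\tfrac1{2^{i+2}}W_i$ in the target space. A single entry of a product $X_1\cdots X_m$ with $m\le2k+3$ is computed by recursive halving, using $(X_1\cdots X_m)_{a,b}=\sum_{c}(X_1\cdots X_{\lceil m/2\rceil})_{a,c}(X_{\lceil m/2\rceil+1}\cdots X_m)_{c,b}$, a recursion of depth $O(\log m)=O(\log\log n)$ in which each level stores one index in $[N]$ and one accumulator. Because the exact rationals involved would have denominators as large as $2^{\Theta(\log^2 n\cdot\log\log n)}$, I would carry out all arithmetic approximately, rounding every intermediate result to $B$ bits for a suitable $B=O(\log n)$; since there are $O(\log n)$ factors each of norm $\le2$, the accumulated error in $W_i$ is at most $2^{O(\log n)}\cdot2^{-B}\le1/\poly(n)$, and each entry then occupies $O(\log n)$ bits. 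So each term costs $O(\log\log n)\cdot O(\log n)=O(\log n\cdot\log\log n)$ on top of the oracle for the $I+M_i$, and by Proposition~\ref{prop:composition} the whole matrix $\widetilde{Z_0}$ is produced in space $O(\log n\cdot\log\log n)$; finally I would output $\tilde L^+=(I-J)\cdot\tfrac12(\widetilde{Z_0}+\widetilde{Z_0}^{\,\intercal})\cdot(I-J)$, costing only $O(\log n)$ more.

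It remains to check that the rounded output is still a valid approximation. Since $Z_0\approx_\alpha L_0^+$ with $\alpha$ a fixed constant $<1/2$, $Z_0$ is positive semidefinite with $\ker Z_0=\ker L_0^+=\mathrm{span}(\vec{1})$ and nonzero eigenvalues in $[\Omega(1),\,\poly(n)]$ (the upper bound since $\|L_0^+\|\le2d^2N^2$ for a graph with a self-loop on every vertex). Our $\tilde L^+$ differs from $Z_0$ by a symmetric matrix of norm $\le1/\poly(n)$ that also vanishes on $\vec{1}$; for this small enough, $\tilde L^+$ keeps the kernel $\mathrm{span}(\vec{1})$, and perturbing the inequalities $e^{-\alpha}Z_0\preceq L_0^+\preceq e^{\alpha}Z_0$ by $1/\poly(n)$ on $\mathrm{span}(\vec{1})^\perp$ is absorbed by slightly enlarging $\alpha$ to some $\alpha'<1/2$, giving $\tilde L^+\approx_{\alpha'}L_0^+$; undoing the initial rescaling completes the proof. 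I expect the main obstacle to be precisely this space accounting --- interleaving the recursive matrix products with the (themselves recursively defined) transition-matrix oracles while keeping all bitlengths $O(\log n\cdot\log\log n)$, which forces the intermediate rounding and then requires verifying that the rounding does not spoil the spectral approximation --- together with the bookkeeping that the derandomized squares $G_i$ satisfy the aperiodicity/laziness hypotheses needed to invoke Theorem~\ref{thm:dsapproxessquare} and Lemma~\ref{lem:equivalence}.
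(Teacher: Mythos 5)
Your proposal is correct and follows essentially the same route as the paper: replace squaring by derandomized squaring with $\polylog(n)$-degree expanders of spectral value $\Theta(1/\log n)$, invoke Theorem~\ref{thm:dsapproxessquare}, Corollary~\ref{cor:iteratedsquaring} and Lemma~\ref{lem:equivalence} to feed Theorem~\ref{thm:approx}, and then evaluate the resulting sum of $O(\log n)$-fold matrix products via space-efficient rotation maps, balanced recursive multiplication, logspace iterated addition, and composition of space-bounded algorithms. The only deviation is your intermediate rounding to $O(\log n)$ bits and the accompanying perturbation argument; the paper instead works exactly, since the entries have only polylogarithmic bitlength and are never stored in full but recomputed through the composed recursion, so this is an implementation detail rather than a different approach.
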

To prove Proposition \ref{prop:constantapprox}, we first prove a few lemmas. We will use the fact that large derandomized powers can be computed in small space, which we prove following \cite{Reingold08,RozenmanVa05,vadhan2012pseudorandomness}. First we note that neighbors in the sequence of expanders we use for the iterated derandomized square can be explicitly computed space-efficiently.

\begin{lemma}
\label{lem:expanders}
For every $t\in \mathbb{N}$ and $\mu>0$, there is a graph $H_{t,\mu}$ with a two-way labeling such that:
\begin{itemize}
\item $H$ has $2^t$ vertices and is $c$-regular for $c$ being a power of 2 bounded by $\poly(t,1/\mu)$.
\item $\lambda(H) \leq \mu$
\item \textup{Rot}$_H$ can be evaluated in linear space in its input length, i.e. space $O(t+\log c)$.
\end{itemize}
\end{lemma}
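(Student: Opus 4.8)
The plan is to assemble $H_{t,\mu}$ from standard pieces: a strongly explicit constant‑degree base expander, a tensor product with a hardwired tiny expander to pin the vertex count to exactly $2^t$, padding by self‑loops to force the degree to be a power of two, and finally raising to a power $s=O(\log(1/\mu))$ to push the second eigenvalue below $\mu$. The one step needing care is evaluating the rotation map of the powered graph in space $O(t+\log c)$ rather than via black‑box composition.

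First I would invoke the explicit expander construction of \cite{ReingoldVaWi01} (as presented in \cite{vadhan2012pseudorandomness}): there are constants $b$ and $d_0$ and a family $\{G_m\}_{m\ge 0}$ of $d_0$‑regular graphs with two‑way labelings such that $G_m$ has $2^{bm}$ vertices, $\lambda(G_m)\le 1/2$, and $\mathrm{Rot}_{G_m}$ is computable in space $O(m)$, i.e.\ linear in the vertex‑name length. To hit exactly $2^t$ vertices, write $t=bm+r$ with $0\le r<b$ (the $O(1)$ many $t$ below threshold are hardwired) and put $G'=G_m\otimes K_r$, where $K_r$ is a fixed $O(1)$‑regular graph on $2^r$ vertices with $\lambda(K_r)\le 1/2$ (such finite graphs exist for every $r$; stored in $O(1)$ space). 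Since the eigenvalues of a tensor product are products of eigenvalues of the factors, $\lambda(G')\le\max\{\lambda(G_m),\lambda(K_r)\}\le 1/2$; $G'$ is $O(1)$‑regular on exactly $2^t$ vertices; and $\mathrm{Rot}_{G'}((u,u'),(i,i'))=((v,v'),(j,j'))$ with $(v,j)=\mathrm{Rot}_{G_m}(u,i)$ and $(v',j')=\mathrm{Rot}_{K_r}(u',i')$, computable in space $O(m)+O(1)=O(t)$. All of these operations keep the rotation map an involution, so two‑way labelings are maintained.

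Next I would pad $G'$ at every vertex with self‑loops to raise its degree to the next power of two $D\le 2\deg(G')$; writing the resulting transition matrix as $\rho M(G')+(1-\rho)I$ with $\rho=\deg(G')/D\ge 1/2$, one gets $\lambda(G'')\le 1-\rho/2\le 3/4$, and the rotation map of $G''$ just routes the new labels to loops, with $O(1)$ overhead. Finally set $s=\max\{1,\lceil\log_{4/3}(1/\mu)\rceil\}$ and take $H_{t,\mu}=(G'')^s$: it has $2^t$ vertices, is $c$‑regular with $c=D^s$ a power of two and $c=D^{O(\log(1/\mu))}=\poly(1/\mu)\le\poly(t,1/\mu)$, and $\lambda(H_{t,\mu})=\lambda(G'')^s\le(3/4)^s\le\mu$.

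The remaining work is the space analysis of $\mathrm{Rot}_{H_{t,\mu}}=\mathrm{Rot}_{(G'')^s}$, which I expect to be the only real obstacle: applying the composition lemma (Proposition~\ref{prop:composition}) naively to compose $s$ copies of $\mathrm{Rot}_{G''}$ would cost $\Omega(s\cdot t)$. Instead I would use the standard threading procedure: to compute $\mathrm{Rot}_{(G'')^s}(v_0,(a_1,\dots,a_s))$, keep a current vertex $v_{\mathrm{cur}}$ (initially $v_0$, $t$ bits), and for $j=1,\dots,s$ overwrite $(v_{\mathrm{cur}},a_j')\leftarrow\mathrm{Rot}_{G''}(v_{\mathrm{cur}},a_j)$, reusing the $O(t)$ workspace of each call and recording the output labels $a_j'$, which in reversed order form the $\log c$‑bit output label. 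With a loop counter and the parameter computations ($b,m,r,D,s$ from $(t,\mu)$, elementary arithmetic in $O(\log t+\log\log(1/\mu))$ space), the total is $O(t)+O(\log c)+O(\log s)=O(t+\log c)$, as required. Thus the whole argument is bookkeeping in the style of \cite{Reingold08,RozenmanVa05,vadhan2012pseudorandomness}; the points needing attention are (i) simultaneously forcing exactly $2^t$ vertices and a power‑of‑two degree while keeping $\lambda\le\mu$, and (ii) evaluating the iterated power in linear space rather than through repeated black‑box composition.
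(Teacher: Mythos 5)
Your construction is correct, but it takes a genuinely different route from the paper. The paper's proof (which is only a two‑sentence sketch) takes $H_{t,\mu}$ to be the Cayley graph on $\mathbb{F}_2^t$ whose generators are a $\mu$-biased multiset $S$ of size $\poly(t,1/\mu)$ \`a la Naor--Naor and Alon--Goldreich--H\r{a}stad--Peralta, with $\mathrm{Rot}_H(v,i)=(v+s_i,i)$: the vertex count is exactly $2^t$ by construction, the nontrivial eigenvalues are exactly the biases (hence at most $\mu$), the rotation map is self-evidently an involution computable in linear space, and the degree can be made a power of 2 by padding the multiset. You instead start from a strongly explicit constant-degree expander family, tensor with a hardwired small graph to pin the vertex count to $2^t$, pad with self-loops to make the degree a power of 2, and take an $O(\log(1/\mu))$-th power, handling the space bound for the powered rotation map by the standard threading argument with reversed output labels (which, as you note, keeps the map an involution). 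Your eigenvalue, degree, and space accounting all check out, and your degree $\poly(1/\mu)$ is even independent of $t$, though for the paper's setting ($\mu=1/O(\log n)$, $t=O(\log n)$) both give $\polylog(n)$ degree, so nothing is gained in the application. What the algebraic route buys is brevity and fewer moving parts: no need to engineer the vertex count, degree parity, or label-reversal conventions, and no dependence on the specifics of an iterated expander construction; what your route buys is that it avoids small-bias machinery entirely and uses only generic graph operations, at the cost of more bookkeeping (existence of a base family with exactly $2^{bm}$ vertices and linear-space rotation maps, which you should anchor concretely in \cite{ReingoldVaWi01} or \cite{vadhan2012pseudorandomness}).
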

\begin{proof}[Proof sketch.]
The expanders can be obtained by letting $H_{t,\lambda}$ be the Cayley graph on vertex set $\mathbb{F}_{2}^t$ with neighbors specified by a $\lambda$-biased multiset $S\subseteq \mathbb{F}_2^{t}$ \cite{NaorNa93,AlonGoHaPe92}, taking Rot$_H(v,i)=(v+s_i,i)$, where $s_i$ is the $i$th element of $S$.
\end{proof}

Now we argue that high derandomized powers can be computed space-efficiently. 
\begin{lemma}
\label{lem:derandspacecomplexity}
Let $G_{0}$ be a $d$-regular, undirected multigraph on $n$ vertices with a two-way labeling and $H_1,\ldots,H_k$ be $c$-regular undirected graphs with two-way labelings where for each $i\in[k]$, $H_{i}$ has $d\cdot c^{i-1}$ vertices. For each $i\in[k]$ let 
\[
G_i=G_{i-1}\ds H_i.
\]
Then given $v_0\in[n], i_{0}\in[d\cdot c^{i-1}], j_0\in[c]$, \textup{Rot}$_{G_i}(v,(i_0,j_0))$ can be computed in space $O(\log(n\cdot d)+k\cdot\log c)$ with oracle queries to \textup{Rot}$_{H_1},\ldots,\mathrm{Rot}_{H_k}$.
\end{lemma}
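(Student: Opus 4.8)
The statement is essentially a recursive unwinding of Definition~\ref{def:derandsquare} together with the standard observation (from \cite{Reingold08,RozenmanVa05}) that the recursion stack for iterated derandomized squaring has depth $k$ but each level only needs to remember an $H$-edge-label in $[c]$, not a full vertex name. So I would prove this by induction on $i$, showing that $\mathrm{Rot}_{G_i}$ can be evaluated with a recursion that, at recursion depth $i$, uses $O(\log(nd))$ space for the ``base'' bookkeeping (the current $G_0$-vertex and its $[d]$-label) plus $O(\log c)$ additional space for each of the $i$ nested calls, for a total of $O(\log(nd) + i\log c) \le O(\log(nd) + k\log c)$.

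\textbf{Key steps.} First I would set up the base case $i=0$ (or $i=1$): $\mathrm{Rot}_{G_0}$ is given and costs only the space to hold its input/output, $O(\log(nd))$; and $G_1 = G_0 \ds H_1$ is computed by the four-line procedure of Definition~\ref{def:derandsquare}, which makes two oracle calls to $\mathrm{Rot}_{G_0}$ and one to $\mathrm{Rot}_{H_1}$, reusing space between the calls, so it costs $O(\log(nd)) + O(\log c) + (\text{oracle space})$. For the inductive step, I would observe that computing $\mathrm{Rot}_{G_i}(v_0,(i_0,j_0))$ via Definition~\ref{def:derandsquare} with $G = G_{i-1}$ and $H = H_i$ requires: (1) a call $\mathrm{Rot}_{G_{i-1}}(v_0, i_0)$, where $i_0 \in [d c^{i-2}]$ is itself a compound label, (2) a call $\mathrm{Rot}_{H_i}(i_1, j_0)$ — note $H_i$ has $d c^{i-1}$ vertices so this oracle input has length $O(\log(dc^{i-1})) = O(\log(nd) + k\log c)$, which is why the oracle-query formulation is the clean way to state this — (3) another call $\mathrm{Rot}_{G_{i-1}}$, and (4) assembling the output. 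The crucial point is that the two calls to $\mathrm{Rot}_{G_{i-1}}$ happen sequentially, so by the composition lemma (Proposition~\ref{prop:composition}) the space is the max of the two sub-call spaces plus the $O(\log(nd) + k\log c)$ needed to hold the intermediate values $(v_1,i_1),(i_2,j_1),(v_2,i_3)$ across the calls — \emph{not} the sum of $i$ copies of $O(\log(nd))$. Unwinding the induction, the recursion is a left-leaning chain of depth $i$ where each frame adds only the $O(\log c)$ needed to remember its own $H$-label $j$ and $O(1)$ control, on top of the single shared $O(\log(nd))$ worth of ``current position in $G_0$,'' giving the claimed bound.

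\textbf{Main obstacle.} The routine part is the recursion itself; the one thing that needs care is making the accounting honest about \emph{what gets reused versus what accumulates}. A naive reading of Definition~\ref{def:derandsquare} applied $k$ times would charge $O(\log(nd))$ per level (since each $G_j$ has vertex-plus-label description of length $\Theta(\log(nd) + j\log c)$), yielding $O(k\log(nd)) = O(\log^2 n)$. The fix — and the real content of the lemma, matching how \cite{Reingold08} handles this — is to implement the recursion so that the ``heavy'' part of the state (the $[n]\times[d]$ coordinate that a walk is currently at in the base graph $G_0$) lives in one place and is overwritten in place as the algorithm descends and ascends, while the recursion stack only holds the light per-level data (an element of $[c]$ plus a program counter). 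Concretely, I would describe the evaluation of $\mathrm{Rot}_{G_k}$ not as $k$ independent compositions but as a single iterative procedure that threads a current $(v,a) \in [n]\times[d]$ through an alternating sequence of $\mathrm{Rot}_{G_0}$ steps and $\mathrm{Rot}_{H_\ell}$ steps dictated by the nested structure, pushing/popping only the $H$-labels $j_\ell$; this is exactly the ``opened up'' view of iterated derandomized squaring, and establishing that this iterative procedure computes the same function as the recursive Definition~\ref{def:derandsquare} (an easy induction) is the last thing to check before reading off the $O(\log(nd) + k\log c)$ space bound.
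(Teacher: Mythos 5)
Your proposal is correct and follows essentially the same route as the paper: the paper also proves the recurrence $\mathrm{Space}(G_i)=\mathrm{Space}(G_{i-1})+O(\log c)$ with $\mathrm{Space}(G_0)=O(\log(nd))$, formalizing your ``overwrite the heavy state in place, stack only the $[c]$-labels'' idea via explicit multi-tape Turing machine input/output conventions (the triple $(v_0,(i_0,j_0))$ sits on a read-write tape and is transformed in place by the two successive calls to $\mathrm{Rot}_{G_{i-1}}$ and the one call to $\mathrm{Rot}_{H_i}$, reusing space). Your identification of the main pitfall (naively charging $O(\log(nd))$ per level) and its fix matches the content of the paper's argument.
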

\newcommand{\Space}{\mathrm{Space}}
\begin{proof}
When $c$ is subpolynomial we are reasoning about sublogarithmic space complexity, which can depend on the model. So we will be explicit about the model we are using. We compute the rotation map of $G_i$ on a multi-tape Turing machine with the following input/output conventions:

\begin{itemize}
\item Input Description:
\begin{itemize}
\item Tape 1 (read-only): Contains the initial input graph $G_0$, with the head at the leftmost position of the tape.
\item Tape 2 (read-write): Contains the input triple $(v_0,(i_0,j_0))$, where $v_0$ is a vertex of $G_i$, $i_0\in [d\cdot c^{i-1}]$ is an edge label in $G_i$, and $j_0\in[c]$ is an edge label in $H_i$ on a {\em read-write} tape, with the head at the {\em rightmost} position of $j_{0}$. The rest of the tape may contain additional data.
\item Tapes 3+ (read-write): Blank worktapes with the head at the leftmost position.
\end{itemize}

\item Output Description:
\begin{itemize}
\item Tape 1: The head  should be returned to the leftmost position.
\item Tape 2: In place of $(v_0,(i_0,j_0))$, it should contain the output $(v_2,(i_3,j_1))=\mathrm{Rot}_{G_i}(v_0,(i_0,j_0))$ as described in Definition \ref{def:derandsquare}. The head should be at the rightmost position of $j_1$ and the rest of the tape should remain unchanged from its state at the beginning of the computation.
\item Tapes 3+ (read-write): Are returned to the blank state with the heads at the leftmost position.
\end{itemize}
\end{itemize}

Let $\Space(G_i)$ be the amount of space required to compute the rotation map of graph $G_i$. We will show that for all $i\in[k]$, $\Space(G_i)=\Space(G_{i-1})+O(\log c)$. Note that $\Space(G_0)=O(\log (nd))$. 

Fix $i\in [k]$. We begin with $v_0\in[n], i_{0}\in[d\cdot c^{i-1}]$ and $j_0\in[c]$ on tape 2 with the head on the rightmost position of $j_0$ and we want to compute Rot$_{G_i}(v_0,(i_0,j_0))$. We move the head left to the rightmost position of $i_0$, recursively compute Rot$_{G_{i-1}}(v_0,i_0)$ so that tape 2 now contains $(v_1,i_1,j_0)$.  Then we move the head to the rightmost position of $j_0$ and compute Rot$_{H_i}(i_1,j_0)$ so that tape 2 now contains $(v_1,i_2,j_1)$. Finally, we move the head to the rightmost position of $i_2$ and compute Rot$_{G_{i-1}}(v_1,i_2)$ so that tape 2 contains $(v_2,i_3,j_1)$. 

This requires 2 evaluations of the rotation map of $G_{i-1}$ and one evaluation of the rotation map of $H_i$. Note that we can reuse the same space for each of these evaluations because they happen in succession. The space needed on top of $\Space(G_{i-1})$ is the space to store edge label $j_0$, which uses $O(\log c)$ space. So $\Space(G_i)=\Space(G_{i-1})+O(\log c)$. Since $i$ can be as large as $k$ and $\Space(G_0)=O(\log (n\cdot d))$ we get that for all $i\in[k]$, $\Space(G_i)=O(\log n\cdot d + k\cdot\log c)$.
\end{proof}

\begin{corollary}
\label{cor:entriesofM}
Let $M_0,\ldots,M_k$ be the transition matrices of $G_0,\ldots, G_k$ as defined  in Lemma \ref{lem:derandspacecomplexity}. For all $\ell\in[k]$, given coordinates $i,j\in [n]$, entry $i,j$ of $M_\ell$ can be computed in space $O(\log n\cdot d + k\cdot\log c)$.
\end{corollary}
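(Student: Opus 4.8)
The plan is to reduce the computation of a single entry of $M_\ell$ to a bounded number of evaluations of the rotation map $\mathrm{Rot}_{G_\ell}$, and then apply Lemma~\ref{lem:derandspacecomplexity}. Note first that $G_\ell = G_{\ell-1}\ds H_\ell$ is $D_\ell$-regular with $D_\ell = d\cdot c^\ell$, a quantity readily computed from the input. Hence the $(i,j)$ entry of the transition matrix $M_\ell$ equals $m_{ij}/D_\ell$, where $m_{ij}$ is the number of edges between $j$ and $i$ in $G_\ell$. To obtain $m_{ij}$, I would initialize a counter to $0$ and loop over all $D_\ell$ edge labels $\kappa$ incident to $j$; for each $\kappa$ I compute $\mathrm{Rot}_{G_\ell}(j,\kappa)=(w,\cdot)$ and increment the counter whenever $w=i$. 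The loop index and the counter each range over $\{0,\dots,D_\ell\}$ and so occupy $O(\log D_\ell)=O(\log d + k\log c)$ cells; the final division $m_{ij}/D_\ell$ is a routine arithmetic operation within this budget, with the resulting rational written to the (write-only) output tape.

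It remains to implement each call $\mathrm{Rot}_{G_\ell}(j,\kappa)$ in space $O(\log(nd)+k\log c)$. Lemma~\ref{lem:derandspacecomplexity} provides an algorithm for $\mathrm{Rot}_{G_\ell}$ running in exactly this space bound, given oracle access to $\mathrm{Rot}_{H_1},\dots,\mathrm{Rot}_{H_k}$. I would instantiate these oracles via Lemma~\ref{lem:expanders}, taking $H_i$ to be the expander $H_{t_i,\mu}$ on $t_i = \log(d\cdot c^{i-1})$ bits, whose rotation map is computable in space linear in its input length, i.e. $O(t_i+\log c)=O(\log d + k\log c)$. Composing the oracle algorithm with these subroutines through the standard composition lemma (Proposition~\ref{prop:composition}), where each oracle query is a subcall on an input of length $O(\log d + k\log c)$, yields an oracle-free implementation of $\mathrm{Rot}_{G_\ell}$ in space $O(\log(nd)+k\log c)$. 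Since the $D_\ell$ rotation-map evaluations in the loop happen one after another, they can all reuse the same workspace, so the whole procedure runs in $O(\log(n\cdot d)+k\cdot\log c)$ space, as claimed.

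The only delicate part will be the low-level bookkeeping around the in-place input/output convention of Lemma~\ref{lem:derandspacecomplexity}: its procedure expects $(v_0,(i_0,j_0))$ on a designated read/write tape with the head at a prescribed position and returns the result overwritten in place. Thus the enumeration loop must, for each value of $\kappa$, lay out $(j,\kappa)$ in the required tape region with the head positioned correctly, invoke the procedure, read off the first coordinate $w$ of the output, compare it with $i$, and then reset the region for the next value of $\kappa$. This is purely mechanical and adds only $O(\log d + k\log c)$ auxiliary space on top of the rotation-map computation, so it does not affect the final bound.
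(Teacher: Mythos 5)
Your proposal is correct and follows essentially the same argument as the paper: enumerate all edge labels at one endpoint, evaluate $\mathrm{Rot}_{G_\ell}$ via Lemma~\ref{lem:derandspacecomplexity} (reusing workspace across calls), tally the matches, and divide by the degree, with only $O(\log d + k\log c)$ extra space for the counters. Your added care about instantiating the expander oracles via Lemma~\ref{lem:expanders} and the in-place tape conventions is consistent with how the paper uses the corollary, and enumerating labels at $j$ rather than $i$ is immaterial since the graph is undirected.
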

\begin{proof}
Lemma \ref{lem:derandspacecomplexity} shows that we can compute neighbors in the graph $G_\ell$ in space $O(\log n\cdot d + k\cdot\log c)$. Given coordinates $i,j$ the algorithm initiates a tally $t$ at 0 and computes Rot$_{G_\ell}(i,q)$ for each $q$ from $1$ to $d\cdot c^{\ell-1}$, the degree of $G_\ell$. If the vertex outputted by Rot$_{G_\ell}$ is $j$, then $t$ is incremented by 1. After the loop finishes, $t$ contains the number of edges from $i$ to $j$ and the algorithm outputs $t/d\cdot c^{\ell-1}$, which is entry $i,j$ of $M_\ell$. This used space $O(\log n\cdot d + k\cdot\log c)$ to compute the rotation map of $G_\ell$ plus space $O(\log (d\cdot c^{\ell-1}))$ to store $q$ and $t$. So the total space usage is $O(\log n\cdot d + k\cdot\log c) +O(\log (d\cdot c^{\ell-1}))=O(\log n\cdot d + k\cdot\log c)$.
\end{proof}
It follows from Corollary \ref{cor:entriesofM} that when $k=O(\log n)$ for all $i\in[k]$ entries in $I+M_i$ can be computed in space $O(\log n\cdot\log c)$. When we apply it, we will use expanders (from Lemma \ref{lem:expanders}) of degree $c=O(\polylog (n))$, so our algorithm computes entries of $I+M_i$ in space $O(\log n\cdot\log \log n)$.

Now we prove that we can multiply matrices in small space. 

\begin{lemma}
\label{lem:matrixprod}
Given matrices $M^{(1)},\ldots,M^{(k)}$ and indices $i,j$ $(M^{(1)}\cdot M^{(2)}\cdot \ldots\cdot M^{(k)})_{ij}$ can be computed using $O(\log n\cdot \log k)$ space where $n$ is the dimension of the input matrices.
\end{lemma}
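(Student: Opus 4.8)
The plan is the standard space-efficient divide-and-conquer for iterated matrix multiplication. To compute $(M^{(1)}\cdots M^{(k)})_{ij}$, split the product in the middle: let $a=\lceil k/2\rceil$, write $P_L=M^{(1)}\cdots M^{(a)}$ and $P_R=M^{(a+1)}\cdots M^{(k)}$, and use the identity
\[
(M^{(1)}\cdots M^{(k)})_{ij}=\sum_{\ell=1}^{n}(P_L)_{i\ell}\cdot(P_R)_{\ell j}.
\]
Each factor $(P_L)_{i\ell}$ is an entry of a product of $\lceil k/2\rceil$ of the input matrices, and $(P_R)_{\ell j}$ an entry of a product of $\lfloor k/2\rfloor$ of them, so both are obtained by recursive calls with the length parameter halved. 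A recursive call is specified by the endpoints of the sub-interval of $\{1,\dots,k\}$ it is responsible for, together with its two target indices; the base case $k=1$ merely reads off an entry of the relevant input matrix.

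The recursive routine, at a given level, holds only: a loop counter $\ell\in[n]$, an accumulator for the running partial sum $\sum_{\ell'\le\ell}(P_L)_{i\ell'}(P_R)_{\ell' j}$, and a scratch cell holding the value $(P_L)_{i\ell}$ while $(P_R)_{\ell j}$ is being computed by the second recursive call. The key point is that the two recursive calls for a fixed $\ell$ are executed one after the other, so they reuse the same work space; the only quantity carried across them is the single number $(P_L)_{i\ell}$. Provided the entries of the input matrices have bit-length $O(\log n)$ — and, rounding each intermediate product to $O(\log n)$ bits of precision as we go, which is harmless in our setting where only spectral approximation is required, so that the entries of $P_L$, $P_R$ and the accumulators also have bit-length $O(\log n)$ — the counter, accumulator, and scratch cell together occupy $O(\log n)$ space, and each pairwise multiplication and each update of the accumulator is computable in space $O(\log n)$ by the standard logspace algorithms for integer multiplication and iterated addition (composed via Proposition~\ref{prop:composition}). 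Maintaining the sub-interval endpoints and target indices down the recursion adds $O(\log n+\log k)$ bits per level, and since the recursion has depth $O(\log k)$ this is absorbed into the bound $O(\log n\cdot\log k)$ (in the intended application $k=O(\log n)$, so this bookkeeping is negligible).

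Letting $S(k)$ denote the space used to compute an entry of a product of $k$ matrices (not counting the read-only input and write-only output tapes), the discussion gives $S(k)\le S(\lceil k/2\rceil)+O(\log n)$ with $S(1)=O(\log n)$, whose solution is $S(k)=O(\log n\cdot\log k)$, as claimed. I do not expect any genuine obstacle here: the only points requiring care are (i) reusing the work space across the two sequential recursive calls, so that the recursion depth $O(\log k)$ enters the bound additively (once per level) rather than multiplicatively within the $O(\log n)$, and (ii) keeping every manipulated quantity to $O(\log n)$ bits; once this bookkeeping is in place, the recurrence and its solution are immediate.
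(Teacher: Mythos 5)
Your proposal is correct and is essentially the paper's own proof: the same divide-and-conquer splitting of the product at the midpoint, summing over the middle index $\ell$, with the two recursive calls executed sequentially so the workspace is reused, giving recursion depth $O(\log k)$ and $O(\log n)$ space per level via logspace integer multiplication and iterated addition. (The paper handles the entries exactly rather than rounding, but this does not change the argument.)
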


\begin{proof}
First we show how to multiply two $n\times n$ matrices in $O(\log n)$ space. Given as input matrices $M^{(1)},M^{(2)}$ and indices $i,j$ we wish to compute 
\[
(M^{(1)}\cdot M^{(2)})_{ij}=\sum_{\ell=1}^{n}M^{(1)}_{i\ell}M^{(2)}_{\ell j}
\]
We use the fact that $n$ $n$-bit numbers can be multiplied and added in $O(\log n)$ space (in fact in TC$^{0}$ \cite{BCH86,ABH02}). So for a counter $\ell$ from $1$ to $n$, the algorithm multiplies $M_{i\ell}\cdot M_{\ell j}$ and adds the result. The counter can be stored with $\log n$ bits and the arithmetic can be carried out in logspace.

To multiply $k$ matrices we recursively split the product into two blocks and multiply each block separately. The pseudocode for multiplying $k$ matrices is below

\begin{algorithm}[H]
\begin{algorithmic}
\Function{mult}{$M^{(1)},M^{(2)},\ldots,M^{(k)}, i,j$}
\If {$k=1$}
    \State \Return $M^{(1)}_{ij}$
\Else
\If {$k=2$}
    \State \Return $(M^{(1)}\cdot M^{(2)})_{ij}$
\Else
\State \Return $\sum_{\ell=1}^{n}$\Call{mult}{$M^{(1)}, \ldots, M^{(\lfloor{k/2\rfloor})}, i,\ell$}$\cdot$ \Call{mult}{$M^{({\lfloor{k/2\rfloor}}+1)}, \ldots, M^{(k)},\ell,j$}
\EndIf
\EndIf
\EndFunction
\end{algorithmic}
\end{algorithm}
The depth of the recursion is $\log k$ and each level requires $O(\log n)$ space for a total of $O(\log n\cdot \log k)$ space.
\end{proof}  
Now we can prove Proposition \ref{prop:constantapprox}.
\begin{proof}[Proof of Proposition \ref{prop:constantapprox}]
Let $M_0$ be the transition matrix of $G_0=G$ such that $L=I-M_0$.  Clearly a two-way labeling of $G$ can be computed in $O(\log n)$ space by just fixing a canonical way for each vertex to label its incident edges. Set $k=\lceil{6\log d^2 n^2}\rceil$ and $\mu=1/30k$. Let $c=\polylog(n)$ be a power of 2 and let $t_i=\log d\cdot c^{i-1}$ for all $i\in[k]$. Note that each $t_i$ is an integer because $d$ and $c$ are powers of 2. Let $H_1,\ldots,H_k$ be $c_i$-regular graphs on $2^{t_i}=d\cdot c^{i-1}$ vertices, respectively and $\lambda(H_i)\leq \mu$ as given by Lemma \ref{lem:expanders}. So for all $i\in[k]$, $c_i=\poly(t_i,1/\mu)=\polylog(n)$. Without loss of generality, we can take $c=\polylog(n)\geq c_i$ for all $i\in[k]$ and make each expander $c$-regular by adding self loops. 

For all $i\in[k]$ let $G_i=G_{i-1}\ds H_i$ and let $M_i$ be the transition matrix of $G_i$. By Theorem \ref{thm:dsapproxessquare} we have $I-M_i\approx_{\epsilon}I-M_{i-1}^2$ for all $i\in [k]$ where $\epsilon=\ln(1/(1-\mu))$. By Corollary \ref{cor:iteratedsquaring} $\lambda(G_{k})\leq 1/3$ and so by Lemma \ref{lem:equivalence} we have $I-M_{k}\approx_{\ln(3/2)} I-J$.

From Theorem \ref{thm:approx}, we get that 
\begin{equation}
\label{eq:constantapprox}
L^{+}\approx_{\delta}\frac{1}{2}(I-J)+\left(\sum_{i=0}^{k-1}\frac{1}{2^{i+2}}W_{i}\right)+\frac{1}{2^{k+1}}W_{k}
\end{equation}
where for all $i\in [k]$
\[
W_i=(I+M_0)\cdot\ldots\cdot(I+M_{i-1})(I+M_{i})(I-J)(I+M_{i})(I+M_{i-1})\cdot\ldots\cdot(I+M_0)
\]
for $\delta=k\cdot\ln(1/(1-\mu))+\ln(3/2)< .5$ (using the fact that $e^{-2\mu}\leq 1-\mu$). So we need to compute approximation \ref{eq:constantapprox} above in $O(\log n\cdot\log\log n)$ space. There are $k+2=O(\log n)$ terms in the sum. Aside from the first term (which is easy to compute) and adjusting the coefficient on the last term, the $i$th term in the expansion looks like  
\[
\frac{1}{2^{i+2}}(I+M_{0})\ldots\cdot(I+M_{i})(I-J)(I+M_{i})\cdot\ldots\cdot(I+M_{0})\\
\]
which is the product of at most $O(\log n)$ matrices. Corollary \ref{cor:entriesofM} says that for all $i\in[k]$ we can compute the entries of $M_i$ in space $O(\log n\cdot d+k\log c)=O(\log n\cdot\log\log n)$. Lemma \ref{lem:matrixprod} says that we can compute the product of $O(\log n)$ matrices in $O(\log n\cdot\log\log n)$ space. By the composition of space bounded algorithms, each term in the sum can be computed in space $O(\log n\cdot\log\log n)+O(\log n\cdot\log\log n)=O(\log n\cdot\log\log n)$. Then since iterated addition can be carried out in $O(\log n)$ space \cite{BCH86,ABH02}, the terms of the sum can be added using an additional $O(\log n)$ space. Again by the composition of space bounded algorithms, the total space usage is $O(\log n\cdot\log\log n)$ for computing a constant approximation to $L^{+}$.
\end{proof}
\end{subsection}
\begin{subsection}{A more precise pseudoinverse}
Now that we have computed a constant approximation to $L^{+}$, we show how to improve the quality of our approximation through iterative methods. 
\begin{proposition}
\label{prop:boostapprox}
Let $L$ be the normalized Laplacian of an undirected, regular, aperiodic, multigraph. There is an algorithm such that for every constant $\alpha<1/2$, given $\tilde{L_1}$ such that $\tilde{L_1}\approx_{\alpha} L^{+}$, computes $\tilde{L_2}$ such that $\tilde{L_2}\approx_{\epsilon}L^{+}$ using space $O(\log n\cdot\log\log n + \log n\cdot\log\log(1/\epsilon))$ where $n$ is the length of the input. 
\end{proposition}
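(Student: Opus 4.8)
The plan is to combine Corollary~\ref{cor:boostapprox} with the space-efficient primitives already established (Lemma~\ref{lem:matrixprod} and the composition lemma, Proposition~\ref{prop:composition}). Corollary~\ref{cor:boostapprox} tells us that if $\tilde{L_1}\approx_\alpha L^+$ with $\alpha<1/2$, then the matrix
\[
L_m = e^{-\alpha}\cdot\sum_{i=0}^{m} L\cdot\left(I-e^{-\alpha}\cdot\tilde{L_1}\cdot L\right)^{i}
\]
satisfies $L_m\approx_{O((2\alpha)^m)}L^+$. Since $\alpha<1/2$ is a fixed constant, $2\alpha<1$, so to drive the approximation error down to $\epsilon$ it suffices to take $m=O(\log(1/\epsilon))$; I will set $m=\lceil c\cdot\log(1/\epsilon)\rceil$ for a suitable constant $c$ depending on $\alpha$ and output $\tilde{L_2}=L_m$.

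The remaining work is purely a space-complexity accounting. First, entries of $L=I-M_0$ are computable in $O(\log n)$ space from the input graph (indeed in $O(\log n\cdot\log\log n)$ via Corollary~\ref{cor:entriesofM} applied with $i=0$), and entries of $\tilde{L_1}$ are computable in space $O(\log n\cdot\log\log n)$ by Proposition~\ref{prop:constantapprox} — this is the hypothesis under which we are given $\tilde{L_1}$, so we may treat it as a subroutine with that space bound. Next, each summand $L\cdot(I-e^{-\alpha}\tilde{L_1}L)^{i}$ is a product of at most $2i+1\le 2m+1=O(\log(1/\epsilon))$ matrices, each of whose entries lies in one of the two ``base'' subroutines above; by Lemma~\ref{lem:matrixprod} such a product of $O(\log(1/\epsilon))$ matrices is computable in space $O(\log n\cdot\log\log(1/\epsilon))$ given oracle access to the factor entries, and composing (Proposition~\ref{prop:composition}) with the base subroutines yields space $O(\log n\cdot\log\log n)+O(\log n\cdot\log\log(1/\epsilon))$. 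Finally, summing the $m+1=O(\log(1/\epsilon))$ terms (and multiplying by the scalar $e^{-\alpha}$, which is handled by working with rationals of $\poly(n)$ bit-length) uses iterated addition, computable in $O(\log n)$ space \cite{BCH86,ABH02}; one more application of the composition lemma gives the claimed bound $O(\log n\cdot\log\log n+\log n\cdot\log\log(1/\epsilon))$.

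The only genuinely delicate point — the analogue of the ``main obstacle'' — is that $\tilde{L_1}$ and hence $I-e^{-\alpha}\tilde{L_1}L$ need not be exact; its entries are rationals, and one must confirm that carrying them to $\poly(n,\log(1/\epsilon))$ bits of precision throughout the $O(\log(1/\epsilon))$-fold product does not blow up the bit-length beyond what the logspace iterated-multiplication results tolerate, and does not degrade the spectral guarantee of Corollary~\ref{cor:boostapprox} by more than a constant factor. This is a standard rounding argument: the intermediate quantities have polynomially bounded bit-length because each matrix factor does and there are only logarithmically many of them, so the error accumulated is at most $2^{-\poly(n)}$, negligible against $\epsilon$ (and, since the output is written to a write-only tape, its size does not count against the space bound). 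With that checked, the proposition follows.
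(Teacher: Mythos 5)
Your proposal is correct and follows essentially the same route as the paper's proof: invoke Corollary~\ref{cor:boostapprox} with $k=O(\log(1/\epsilon))$, compute entries of $\tilde{L_1}$ (and hence of $I-e^{-\alpha}\tilde{L_1}L$) via Proposition~\ref{prop:constantapprox}, evaluate each term as a product of $O(\log(1/\epsilon))$ matrices via Lemma~\ref{lem:matrixprod}, and finish with logspace iterated addition and the composition lemma. Your added remarks on bit-length and on approximating the scalar $e^{-\alpha}$ are harmless extra care beyond what the paper records, not a divergence in method.
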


\begin{proof}
By Corollary \ref{cor:boostapprox}, given $\tilde{L_1}\approx_{\alpha}L$, we can boost the approximation from $\alpha$ to $\epsilon$ by setting $k=O(\log (1/\epsilon))$ and computing 
\[
\tilde{L_2}=e^{-\alpha}\cdot\sum_{i=0}^{k}L\cdot(I-e^{-\alpha}\tilde{L_1}\cdot L)^{i}
\]
By Proposition \ref{prop:constantapprox}, entries in $\tilde{L_1}$ can be computed in space $O(\log n\cdot\log\log n)$ and hence entries in $I-e^{-\alpha}\cdot\tilde{L_1}\cdot L$ can be computed in space $O(\log n\cdot\log\log n)$ because matrix addition and multiplication can be done in $O(\log n)$ space. Viewing $(I-e^{-\alpha}\cdot\tilde{L_1}\cdot L)$ as a single matrix, each term in the sum above is the product of at most $i+1$ matrices. Lemma \ref{lem:matrixprod} tells us that we can compute such a product in space $O(\log n \cdot \log i)$. Since $i$ can be as large as $k=O(\log 1/\epsilon)$, this gives space $O(\log n\cdot\log\log(1/\epsilon))$ for computing each term. Then since iterated addition can be computed in logarithmic space \cite{BCH86, ABH02}, the $k+1$ terms in the sum can be added using an additional $O(\log N)$ space where $N$ is the bit length of the computed terms. By composition of space bounded algorithms, computing $\tilde{L_2}$ uses a total of $O(\log n\cdot\log\log n+\log n\cdot\log\log(1/\epsilon))$ space.
\end{proof}
\end{subsection}
\end{section}
\begin{section}{Proof of main result}
Now we are ready to prove our main technical result, Theorem \ref{thm:arbitraryapproxmain}. Theorem \ref{thm:main} follows as a corollary and is discussed in Section \ref{sec:applications}. 
\begin{proof}[Proof of Theorem \ref{thm:arbitraryapproxmain}]
Let $L=D-A$ be the input Laplacian of the graph $G$. Let $\Delta$ be the maximum vertex degree in $G$. We construct a graph $G'$ as follows: at every vertex $v$, we add $2^{\lceil{\log 2\Delta}\rceil}-d(v)$ self loops where $d(v)$ denotes the degree of $v$. Note that $2^{\lceil{\log 2\Delta}\rceil}-d(v)\geq \Delta$ so this ensures $G'$ is aperiodic and $2^{\lceil{\log 2\Delta}\rceil}$-regular. Let $E$ be the diagonal matrix of self loops added in this stage. Then the degree matrix of $G'$ is $D+E=2^{\lceil{\log 2\Delta}\rceil}\cdot I$  and the adjacency matrix of $G'$ is $A+E$. So the unnormalized Laplacian of $G'$ is $D+E-A-E=D-A$, which is the same as the unnormalized Laplacian of $G$. Let $L'=(D-A)(D+E)^{-1}=I-M$ be the normalized Laplacian of $G'$ where $M$ is the transition matrix of $G'$. 

By Proposition \ref{prop:constantapprox}, we can compute a matrix $\tilde{L_1}$ such that $\tilde{L_1}\approx_{\alpha}L'^{+}$ for $\alpha<1/2$ in space $O(\log n\cdot\log\log n)$. Then we can compute $\tilde{L_2}\approx_{\epsilon} L'^+$ by applying Proposition \ref{prop:boostapprox} using an additional $O(\log n\cdot\log\log n+\log n\cdot \log\log (1/\epsilon))$ space. By composition of space bounded algorithms this yields a $O(\log n\cdot\log\log n+\log n\cdot \log\log (1/\epsilon))=O(\log n\cdot\log\log (n/\epsilon))$ space algorithm for computing an $\epsilon$-approximation to $L'^{+}$. Recalling that $L'=L/2^{\lceil{\log 2\Delta}\rceil}$ implies that $\tilde{L_2}/2^{\lceil{\log 2\Delta}\rceil}\approx_{\epsilon} L^{+}$ as desired.

\end{proof}
\end{section}
\begin{section}{Corollaries} \label{sec:applications}
In this section we prove some applications of Theorem \ref{thm:arbitraryapproxmain}. 
\begin{definition}
Let $L$ be the normalized Laplacian of an undirected multigraph and $b\in\mathrm{im}(L)$. Then $\tilde{x}$ is an {\em $\epsilon$-approximate solution} to the system $Lx=b$ if there is an actual solution $x^{*}$ such that
\[
\|x^{*}-\tilde{x}\|_{L} \leq \epsilon\|x^{*}\|_{L}
\]
where for all $v\in\mathbb{R}^{n}$, $\|v\|_{L}\equiv\sqrt{v^{\intercal}Lv}$.
\end{definition} 

Here we prove that our algorithm for computing an $\epsilon$-approximation to the pseudoinverse of a Laplacian can be used to solve Laplacian systems (Theorem \ref{thm:main}). The following lemma is useful for translating between the $L$-norm and the $\ell_2$ norm. 
\begin{lemma}
\label{lem:normtranslation}
Let $x\in\mathbb{R}^n$ such that $x\perp\vec{1}$ and $L$ be the normalized Laplacian of an undirected  multigraph with smallest nonzero eigenvalue $\gamma_2$ and largest eigenvalue $\gamma_n$. Then
\[
\gamma_2\|x\|_2^2 \leq \|x\|_L^2\leq \gamma_n\|x\|_2^2
\]
\end{lemma}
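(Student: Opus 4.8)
The plan is to diagonalize $L$ in an orthonormal eigenbasis and then restrict to the hyperplane orthogonal to $\vec{1}$, which is exactly the Rayleigh-quotient argument. First I would recall that since $G$ is undirected and regular, $L = I - M$ is symmetric and positive semidefinite, so it has an orthonormal eigenbasis $v_1, v_2, \ldots, v_n$ with real eigenvalues $0 = \gamma_1 \le \gamma_2 \le \cdots \le \gamma_n$. As elsewhere in the paper we may assume $G$ is connected, so the eigenvalue $0$ is simple and its eigenspace is $\mathrm{span}(\vec{1})$; we may thus take $v_1 = \vec{1}/\|\vec{1}\|_2$, and $\gamma_2$ really is the smallest \emph{nonzero} eigenvalue, with $\gamma_n$ the largest.

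Next, given $x \perp \vec{1}$, I would expand $x = \sum_{i=2}^{n} c_i v_i$, where the $v_1$-component vanishes precisely because $x \perp \vec{1}$. Orthonormality gives $\|x\|_2^2 = \sum_{i=2}^{n} c_i^2$, and since $L v_i = \gamma_i v_i$ we get $\|x\|_L^2 = x^{\intercal} L x = \sum_{i=2}^{n} \gamma_i c_i^2$. Using $\gamma_2 \le \gamma_i \le \gamma_n$ for every $i \ge 2$ and bounding each summand termwise yields $\gamma_2 \sum_{i=2}^{n} c_i^2 \le \sum_{i=2}^{n} \gamma_i c_i^2 \le \gamma_n \sum_{i=2}^{n} c_i^2$, which is exactly $\gamma_2 \|x\|_2^2 \le \|x\|_L^2 \le \gamma_n \|x\|_2^2$.

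I do not expect any genuine obstacle here: the lemma is a direct consequence of the spectral theorem. The only point requiring care is that the \emph{lower} bound uses that $x$, being orthogonal to $\vec{1}$, has no component in $\ker L$ at all — i.e.\ that $\ker L = \mathrm{span}(\vec{1})$ — which is why connectivity of $G$ is needed (and is harmless, by the reduction earlier in the paper). Equivalently, one could phrase the argument purely through the Courant--Fischer min-max characterization, noting that $\gamma_2 = \min_{0 \neq x \perp \vec{1}} (x^{\intercal} L x)/(x^{\intercal} x)$ and $\gamma_n = \max_{x \neq 0} (x^{\intercal} L x)/(x^{\intercal} x)$, from which the two inequalities are immediate after clearing denominators.
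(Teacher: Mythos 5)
Your proof is correct and is essentially the same as the paper's: the paper's one-line argument simply invokes the variational (Courant--Fischer) characterization of the eigenvalues, which is exactly what you spell out in detail via the orthonormal eigenbasis and note yourself as an equivalent phrasing. Your observation that the lower bound implicitly requires $\ker L=\mathrm{span}(\vec{1})$ (i.e.\ connectivity, which holds in all of the paper's applications of the lemma) is a fair point that the paper leaves unstated.
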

\begin{proof}
By definition $\|x\|_L^2 = x^{\intercal}Lx$ and by the variational characterization of the eigenvalues it follows that
\[
\gamma_2\cdot\|x\|_2^2=\gamma_2\cdot x^{\intercal}x\leq \|x\|_L^2\leq \gamma_n\cdot x^{\intercal}x=\gamma_n\cdot\|x\|_2^2
\]
\end{proof}
Since for undirected multigraphs with maximum degree $d$, $\gamma_2\geq 1/2dn^2$ and $\gamma_n\leq 2d$ it follows from the above lemma that for all $x\perp \vec{1}$, $\|x\|_2$ and $\|x\|_L$ are within a multiplicative factor of $2d\cdot n$ of one another. We also use the following fact.
\begin{lemma} 
\label{lem:pseudoinv_implies_sol}
If $\tilde{L}^{+}\approx_{\epsilon}L^{+}$, $\epsilon\leq \ln(2)$ and $b\in \mathrm{im}(L)$ then $\tilde{x}=\tilde{L}^{+}b$ is a $\sqrt{2\epsilon}$-approximate solution to $Lx=b$.
\end{lemma}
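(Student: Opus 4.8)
The plan is to compare $\tilde{x}$ against the canonical solution $x^{*}=L^{+}b$, which is a genuine solution by Proposition~\ref{prop:pseudoinversesolution}; since any two solutions of $Lx=b$ differ by an element of $\mathrm{ker}(L)$, on which $L$ (hence $\|\cdot\|_{L}$) vanishes, both $\|x^{*}\|_{L}$ and $\|\tilde{x}-x^{*}\|_{L}$ are independent of the choice of solution, so this is harmless. First I would record two identities coming from the pseudoinverse axioms together with $b\in\mathrm{im}(L)$: the orthogonal projector onto $\mathrm{im}(L)$ is $LL^{+}$, so $LL^{+}b=b$; and, using $L^{+}LL^{+}=L^{+}$ and symmetry of $L^{+}$, $\|x^{*}\|_{L}^{2}=x^{*\intercal}Lx^{*}=(L^{+}b)^{\intercal}L(L^{+}b)=b^{\intercal}L^{+}b$.

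Next I would pass to the subspace $V=\mathrm{im}(L)=\mathrm{ker}(L)^{\perp}$, on which $L$ acts as a positive definite operator $A$ with $L^{+}|_{V}=A^{-1}$. A short argument shows $\tilde{L}^{+}$ also lives on $V$: if $w\in\mathrm{ker}(L)=\mathrm{ker}(L^{+})$ then $0\le w^{\intercal}\tilde{L}^{+}w\le e^{\epsilon}\,w^{\intercal}L^{+}w=0$, and since $\tilde{L}^{+}\succeq 0$ this forces $\tilde{L}^{+}w=0$. Write $P=\tilde{L}^{+}|_{V}$. Note $b,x^{*},\tilde{x}\in V$, and on $V$ we have $\tilde{x}-x^{*}=(P-A^{-1})b$ and $\|v\|_{L}^{2}=v^{\intercal}Av$. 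From $\tilde{L}^{+}\approx_{\epsilon}L^{+}$ we get $e^{-\epsilon}A^{-1}\preceq P\preceq e^{\epsilon}A^{-1}$, hence $(e^{-\epsilon}-1)A^{-1}\preceq P-A^{-1}\preceq(e^{\epsilon}-1)A^{-1}$; conjugating by $A^{1/2}$ (invertible on $V$), the symmetric matrix $F:=A^{1/2}(P-A^{-1})A^{1/2}$ satisfies $-(1-e^{-\epsilon})I\preceq F\preceq(e^{\epsilon}-1)I$, so $\|F\|_{2}\le e^{\epsilon}-1$. Writing $b=A^{1/2}z$, a one-line computation then gives
\[
\|\tilde{x}-x^{*}\|_{L}^{2}=b^{\intercal}(P-A^{-1})A(P-A^{-1})b=z^{\intercal}F^{2}z\le(e^{\epsilon}-1)^{2}\|z\|_{2}^{2}=(e^{\epsilon}-1)^{2}\,b^{\intercal}A^{-1}b=(e^{\epsilon}-1)^{2}\|x^{*}\|_{L}^{2}.
\]

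Finally I would finish with the elementary inequality $(e^{\epsilon}-1)^{2}\le 2\epsilon$ for $0\le\epsilon\le\ln 2$: since $(e^{t}-1)/t$ is nondecreasing, $e^{\epsilon}-1\le(e^{\ln 2}-1)\cdot\epsilon/\ln 2=\epsilon/\ln 2$, so $(e^{\epsilon}-1)^{2}\le\epsilon^{2}/(\ln 2)^{2}\le 2\epsilon$ because $\epsilon\le\ln 2<2(\ln 2)^{2}$. Hence $\|\tilde{x}-x^{*}\|_{L}\le\sqrt{2\epsilon}\,\|x^{*}\|_{L}$, i.e.\ $\tilde{x}$ is a $\sqrt{2\epsilon}$-approximate solution. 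I do not expect a real obstacle here; the only steps needing care are the bookkeeping that $\tilde{L}^{+}$ annihilates $\mathrm{ker}(L)$ (so the argument may legitimately be run on the invertible block $V$, where $L^{+}$ is a true inverse and the $\|\cdot\|_{L}$-geometry is that of $A$) and observing that the operator norm of $F$ is governed by the larger quantity $e^{\epsilon}-1$ rather than $1-e^{-\epsilon}$.
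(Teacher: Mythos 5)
Your proof is correct, but it takes a genuinely different route from the paper. The paper works directly with the quadratic form: it expands $\|x^{*}-\tilde{x}\|_{L}^{2}=x^{*\intercal}Lx^{*}+\tilde{x}^{\intercal}L\tilde{x}-2\tilde{x}^{\intercal}Lx^{*}$, uses $LL^{+}b=b$ to evaluate the cross term as $b^{\intercal}\tilde{L}^{+}b$, passes from $\tilde{L}^{+}\approx_{\epsilon}L^{+}$ to $\tilde{L}\approx_{\epsilon}L$ (Part 7 of Proposition \ref{prop:psdfacts}) to bound $\tilde{x}^{\intercal}L\tilde{x}\leq e^{\epsilon}b^{\intercal}\tilde{L}^{+}b$, and then uses $\epsilon\leq\ln 2$ crucially to make the coefficient $e^{\epsilon}-2$ nonpositive, arriving at $\|x^{*}-\tilde{x}\|_{L}^{2}\leq 2(1-e^{-\epsilon})\|x^{*}\|_{L}^{2}\leq 2\epsilon\|x^{*}\|_{L}^{2}$. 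You instead restrict to $V=\mathrm{im}(L)$ (after checking $\tilde{L}^{+}$ kills $\ker(L)$, which also implicitly justifies the paper's use of Part 7), whiten by $A^{1/2}$, and bound the operator norm of $F=A^{1/2}(P-A^{-1})A^{1/2}$ by $e^{\epsilon}-1$, which yields $\|x^{*}-\tilde{x}\|_{L}^{2}\leq(e^{\epsilon}-1)^{2}\|x^{*}\|_{L}^{2}$ with the hypothesis $\epsilon\leq\ln 2$ entering only in the closing scalar inequality $(e^{\epsilon}-1)^{2}\leq 2\epsilon$. Your intermediate bound is in fact at least as sharp as the paper's on the allowed range (for small $\epsilon$ it is quadratic rather than linear in $\epsilon$, and the two coincide at $\epsilon=\ln 2$), and your argument makes the role of $\epsilon\leq\ln 2$ purely numerical; the paper's argument is more pedestrian in that it never invokes matrix square roots or operator norms, staying entirely at the level of quadratic-form manipulations. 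Both are complete; the minor bookkeeping points you flag (PSD-ness of $\tilde{L}^{+}$, which follows from $L^{+}\preceq e^{\epsilon}\tilde{L}^{+}$, and symmetry of $P-A^{-1}$) are immediate from Definition \ref{def:spectralapprox}.
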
 
For completeness, we include a proof of Lemma \ref{lem:pseudoinv_implies_sol} in Appendix \ref{app:pseudoinv_implies_sol}. Now using Theorem \ref{thm:arbitraryapproxmain} and the lemmas above we can prove Theorem \ref{thm:main}.

\begin{proof}[Proof of {\em Theorem \ref{thm:main}}]
Let $L=D-A$ be the input Laplacian of the graph $G$ and let $b\perp \vec{1}$ be the given vector. If $G$ is disconnected then we can use Reingold's algorithm to find the connected components and work on each component separately. So assume without loss of generality that $G$ is connected. We may also assume without loss of generality that G is $d$-regular. Let $\epsilon'=(\epsilon/(4d^2n^2))^2/2$ where $n$ is the bit length of the input. Theorem \ref{thm:arbitraryapproxmain} says that we can compute $\tilde{L}^{+}\approx_{\epsilon'} L^{+}$ in space  $O(\log n\cdot \log\log(n/\epsilon'))=O(\log n\cdot \log\log(n/\epsilon))$.

By Lemma \ref{lem:pseudoinv_implies_sol}, $x=\tilde{L}^{+}b$ is a $\sqrt{2\epsilon'}$-approximate solution to $Lx=b$. In other words, letting $x^{*}=L^{+}b$ we have
\[
\|x-x^{*}\|_{L}\leq \sqrt{2\epsilon'}\cdot\|x^{*}\|_{L}
\]
Translating to the $\ell_2$ norm we get
\[
\frac{1}{2dn}\cdot \|x-x^{*}\|_2\leq \|x-x^{*}\|_{L}\leq \sqrt{2\epsilon'}\cdot\|x^{*}\|_{L}\leq 2dn\sqrt{2\epsilon'}\cdot\|x^{*}\|_2
\]
which implies
\[
\|x-x^{*}\|_2\leq (2dn)^2\sqrt{2\epsilon'}\cdot\|x^{*}\|_2=\epsilon\cdot\|x^{*}\|_2
\]
as desired. 
\end{proof}

For some applications, including the ones listed below, it is useful to be able to apply the pseudoinverse of the Laplacian normalized by the degrees $((D-A)D^{-1})^{+}$ to a vector rather than apply $(D-A)^{+}$. Our algorithm can be used for this task as well. 
\begin{corollary}
\label{cor:normalized}
There is a deterministic algorithm that takes as input the Laplacian of an undirected, multigraph $D-A$, a vector $b\perp\vec{1}$, and $\epsilon>0$ and computes a vector $\tilde{x}$ such that 
\[
\|\tilde{x}-((D-A)D^{-1})^{+}b\|_2 \leq \epsilon\cdot\|x^{*}\|_2
\]
where $x^{*}$ is some solution to the system $(D-A)D^{-1}x=b$ and uses space $O(\log n\cdot\log\log (n/\epsilon))$ where $n$ is the bitlength of the input. 
\end{corollary}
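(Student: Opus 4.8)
The plan is to reduce the task to solving an \emph{unnormalized} Laplacian system, for which Theorem~\ref{thm:main} already supplies a deterministic $O(\log n\cdot\log\log(n/\epsilon))$-space algorithm, and then to apply a cheap linear post-processing step. Write $L=D-A$, $\mathcal{L}=LD^{-1}$, and let $\vec{d}=D\vec{1}$ denote the degree vector, which is read off directly from the edge list. The heart of the argument is the identity
\[
\mathcal{L}^{+}b=\Pi\,D\,L^{+}b,\qquad \Pi:=I-\frac{\vec{d}\,\vec{d}^{\,\intercal}}{\langle\vec{d},\vec{d}\rangle},
\]
valid for every $b\in\mathrm{im}(\mathcal{L})$, where $\Pi$ is the orthogonal projection onto $\vec{d}^{\,\perp}$. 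I would prove it as follows. Assume $G$ is connected (otherwise split into components and solve each separately, exactly as in the proof of Theorem~\ref{thm:main}). Then $\mathrm{im}(\mathcal{L})=\mathrm{im}(L)=\vec{1}^{\,\perp}$, and since $D$ is symmetric and invertible, $\mathrm{im}(\mathcal{L}^{\intercal})=\mathrm{im}(D^{-1}L)=D^{-1}\vec{1}^{\,\perp}=\vec{d}^{\,\perp}$. By the properties of the pseudoinverse (Section~\ref{sec:pseudoinverse}), $\mathcal{L}^{+}b$ is the unique vector that both solves $\mathcal{L}x=b$ and lies in $\mathrm{im}(\mathcal{L}^{\intercal})$; the solution set is a coset of $\ker(\mathcal{L})=\mathrm{span}(\vec{d})$, which meets the hyperplane $\vec{d}^{\,\perp}$ in a single point. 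Now $\Pi DL^{+}b$ lies in $\vec{d}^{\,\perp}=\mathrm{im}(\mathcal{L}^{\intercal})$ by construction, and it solves the system because $\mathcal{L}\,(\Pi DL^{+}b)=LD^{-1}DL^{+}b-c\cdot LD^{-1}\vec{d}=LL^{+}b-c\cdot L\vec{1}=b-0=b$, using $L\vec{1}=0$ and $b\in\mathrm{im}(L)$.

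Given the identity, the algorithm is immediate: run the algorithm of Theorem~\ref{thm:main} on $(L,b,\epsilon')$ with $\epsilon'$ to be chosen, obtaining (from its proof) a vector $\tilde{y}$ with $\|\tilde{y}-y^{*}\|_2\le\epsilon'\|y^{*}\|_2$ for $y^{*}=L^{+}b$; then output
\[
\tilde{x}=D\tilde{y}-\frac{\langle\vec{d},D\tilde{y}\rangle}{\langle\vec{d},\vec{d}\rangle}\,\vec{d}=\Pi D\tilde{y}.
\]
The post-processing is a diagonal scaling, two inner products, and a rank-one subtraction, each computable in space logarithmic in the bitlength of its operands by the iterated-arithmetic bounds of~\cite{BCH86,ABH02}; composing with Theorem~\ref{thm:main} via Proposition~\ref{prop:composition} keeps the total space at $O(\log n\cdot\log\log(n/\epsilon'))$.

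For correctness set $x^{*}=\mathcal{L}^{+}b=\Pi Dy^{*}$. Since $\tilde{x}=\Pi D\tilde{y}$ and $\Pi$ is a contraction, $\|\tilde{x}-x^{*}\|_2\le\|D\|_2\|\tilde{y}-y^{*}\|_2\le\Delta\epsilon'\|y^{*}\|_2$, where $\Delta$ is the maximum degree. It remains to bound $\|y^{*}\|_2$ by $2\|x^{*}\|_2$: because $y^{*}\perp\vec{1}$ and $\Pi$ annihilates exactly the $\vec{d}$-component, we have $Dy^{*}=x^{*}+c\vec{d}$, hence $y^{*}=D^{-1}x^{*}+c\vec{1}$, and $\langle\vec{1},y^{*}\rangle=0$ forces $c=-\langle x^{*},D^{-1}\vec{1}\rangle/n$; since every diagonal entry of $D^{-1}$ is at most $1$, $|c|\le\|x^{*}\|_2/\sqrt{n}$ and $\|D^{-1}x^{*}\|_2\le\|x^{*}\|_2$, so $\|y^{*}\|_2\le\|D^{-1}x^{*}\|_2+|c|\sqrt{n}\le 2\|x^{*}\|_2$. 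Thus $\|\tilde{x}-x^{*}\|_2\le 2\Delta\epsilon'\|x^{*}\|_2$, and choosing $\epsilon'=\epsilon/(2\Delta)$ yields the claimed error; as $\Delta=\poly(n)$ we have $\log\log(n/\epsilon')=O(\log\log(n/\epsilon))$, matching the stated space bound.

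The main obstacle — and essentially the only nonroutine point — is pinning down the correct identity. Because $\mathcal{L}=(D-A)D^{-1}$ is not symmetric, its pseudoinverse is \emph{neither} $D(D-A)^{+}$ \emph{nor} $D^{1/2}(I-D^{-1/2}AD^{-1/2})^{+}D^{-1/2}$: both naive candidates violate the Penrose symmetry condition $(\mathcal{L}X)^{\intercal}=\mathcal{L}X$, and one must correct by the orthogonal projection $\Pi$ onto $\mathrm{im}(\mathcal{L}^{\intercal})=\vec{d}^{\,\perp}$. Everything downstream of the identity is bookkeeping. (Via the standard reductions of~\cite{CKPPSV16}, Corollary~\ref{cor:normalized} then gives the $\tO(\log n)$-space estimates for hitting times, commute times, and escape probabilities claimed in the introduction.)
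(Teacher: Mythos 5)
Your proposal is correct and is essentially the paper's own proof: your identity $\mathcal{L}^{+}b=\Pi D L^{+}b$ for $\mathcal{L}=(D-A)D^{-1}$ is exactly the paper's formula $v=\bigl(D-\frac{dd^{\intercal}}{\|d\|_2^2}D\bigr)(D-A)^{+}b$, derived from the same two facts (the kernel of $\mathcal{L}$ is $\mathrm{span}(d)$ and $\mathcal{L}^{+}b\perp d$), followed by the same invocation of Theorem~\ref{thm:main} with a polynomially smaller $\epsilon'$ and a logspace post-multiplication composed via Proposition~\ref{prop:composition}. The only divergence is bookkeeping — the paper takes $\epsilon'=\epsilon/(\sqrt{|V|}\,\|d\|_2)$, bounds the Frobenius norm of $K=\Pi D$, and states the error relative to the solution $x^{*}=D(D-A)^{+}b$, whereas you use $\|\Pi D\|_2\le\Delta$ together with $\|L^{+}b\|_2\le 2\|\mathcal{L}^{+}b\|_2$ and $\epsilon'=\epsilon/(2\Delta)$, giving a marginally stronger guarantee relative to $\|\mathcal{L}^{+}b\|_2$ itself; one aside in your writeup is inaccurate (for $X=D(D-A)^{+}$ the product $\mathcal{L}X=LL^{+}$ is symmetric, so it is the condition $(X\mathcal{L})^{\intercal}=X\mathcal{L}$ that fails, not $(\mathcal{L}X)^{\intercal}=\mathcal{L}X$), but nothing in your argument depends on that remark.
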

\begin{proof}
We may assume without loss of generality that $G$ is connected. Let $u=D(D-A)^{+}b$ and $v=((D-A)D^{-1})^{+}b$. Let $d=\text{diag}(D)$ and $d'$ denote the vector with $1/d_i$ in each coordinate $i$. Note that $u\perp d'$ and $v\perp d$. Both $u$ and $v$ are solutions to the system $(D-A)D^{-1}x=b$ so there is some scalar $c$ such that $u-v=c\cdot d$ because the multiples of $d$ form the kernel of $(D-A)D^{-1}$. Next we find the scalar $c$ by observing:
\begin{align*}
d^{\intercal}u &= d^{\intercal}(u-v) \\ 
&=c\cdot d^{\intercal}d\\
&=c\cdot\|d\|_2^2\\
\end{align*}
which implies $c= d^{\intercal}u /\|d\|_2^2$. So we have 
\begin{align*}
v &= u-\left(d^{\intercal}u /\|d\|_2^2\right)\cdot d \\
&=\left(D-\frac{dd^{\intercal}}{\|d\|_2^2}\cdot D\right)\cdot (D-A)^{+}b 
\end{align*}
Let $\epsilon'=\epsilon/\sqrt{|V|}\cdot\|d\|_2$ where $|V|$ is the number of vertices in the input graph. By applying Theorem \ref{thm:main}, we can compute a vector $\tilde{z}$ such that 
\[
\|\tilde{z}-(D-A)^{+}b\|_2\leq \epsilon'\cdot\|(D-A)^{+}b\|_2
\]
in $O(\log n\cdot\log\log (n/\epsilon'))=O(\log n\cdot\log\log (n/\epsilon))$ space. Let $K=\left(D-\frac{dd^{\intercal}}{\|d\|_2^2}\cdot D\right)$ and let $\tilde{x}=K\cdot \tilde{z}$ and observe:
\begin{align*}
\|\tilde{x}-v\|_2 &= \left\|K(\tilde{z}-(D-A)^{+}b)\right\|_2 \\
&\leq \epsilon'\cdot\|(D-A)^{+}b\|_2\cdot\sqrt{\sum_{i,j}\left|K_{ij}\right|^2}
\end{align*}
Note that $|K_{ij}|\leq d_i$ for all $i,j$ and so \[\sqrt{\sum_{i,j}\left|K_{ij}\right|^2}\leq \sqrt{|V|}\cdot\|d\|_2\]
Plugging this in above gives
\begin{align*}
\|\tilde{x}-v\|_2 &\leq \epsilon\cdot\|(D-A)^{+}b\|_2\\
&\leq \epsilon\cdot\|u\|_2
\end{align*}
as desired. The matrix $K$ can also be computed in $O(\log n)$ space and so $\tilde{x}=K\cdot\tilde{z}$ can be computed in $O(\log n\cdot\log\log (n/\epsilon))$ space.
\end{proof}

Our algorithm also implies a $\tilde{O}(\log n)$ space algorithm for computing many interesting quantities associated with undirected graphs. Below we survey a few of these corollaries including hitting times, commute times, and escape probabilities. 
\begin{definition}
In a multigraph $G=(V,E)$ the {\em hitting time} $H_{uv}$ from $u$ to $v\in V$ is the expected number of steps a random walk starting at $u$ will take before hitting $v$. 
\end{definition}

\begin{definition}
In a multigraph $G=(V,E)$ the {\em commute time} $C_{uv}=H_{uv}+H_{vu}$ between vertices $u,v\in V$ is the expected number of steps a random walk starting at $u$ will take to reach $v$ and then return to $u$. 
\end{definition}

\begin{corollary}
\label{cor:hittingcommute}
Given an undirected multigraph $G=(V,E)$, vertices $u,v\in V$ and $\epsilon>0$ there are deterministic $O(\log n\cdot\log\log(n/\epsilon))$ space algorithms for computing numbers $\tilde{H}_{uv}$ and $\tilde{C}_{uv}$ such that 
\[
|\tilde{H}_{uv}-H_{uv}|\leq \epsilon 
\]
and 
\[
|\tilde{C}_{uv}-C_{uv}|\leq \epsilon 
\]
where $n$ is the bitlength of the input.
\end{corollary}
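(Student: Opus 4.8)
The plan is to reduce the computation of $H_{uv}$ and $C_{uv}$ to a single application of the Laplacian solver of Theorem~\ref{thm:main}, via the classical electrical-network formulas for hitting and commute times. Write $L=D-A$ for the (unnormalized) Laplacian of $G$, let $d\in\mathbb{Z}^{|V|}$ be its degree vector, and let $m=|E|$, so that $\sum_{w}d_w=2m$. Recall the classical identities (see e.g.\ \cite{CKPPSV16}): if $G$ is connected and $\phi$ solves $L\phi=d-2m\cdot e_v$, then $H_{uv}=\phi_u-\phi_v$; and if $\psi$ solves $L\psi=e_u-e_v$, then $C_{uv}=H_{uv}+H_{vu}=2m\cdot(\psi_u-\psi_v)$ (the effective-resistance characterization of commute time, $R_{uv}=(e_u-e_v)^{\intercal}L^{+}(e_u-e_v)$ and $C_{uv}=2m\,R_{uv}$). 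Both right-hand sides $d-2m\cdot e_v$ and $e_u-e_v$ are orthogonal to $\vec 1$, and hence lie in $\mathrm{im}(L)=\vec 1^{\,\perp}$ when $G$ is connected; moreover, since any two solutions of $Lx=b$ differ by a multiple of $\vec 1$, the coordinate differences $\phi_u-\phi_v$ and $\psi_u-\psi_v$ do not depend on which solution is used.

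First I would use Reingold's algorithm to reduce to the case that $G$ is connected: if $u,v$ lie in different components then $H_{uv}=\infty$ and there is nothing to estimate, and otherwise we restrict to their common component. Then, in $O(\log n)$ space, I would form the vector $b=d-2m\cdot e_v$ directly from the edge list (counting degrees and the number of edges), and run the algorithm of Theorem~\ref{thm:main} on $(G,b,\epsilon')$ for an accuracy parameter $\epsilon'$ fixed below, obtaining a vector $\tilde\phi$; the output is $\tilde H_{uv}:=\tilde\phi_u-\tilde\phi_v$. For the commute time I would do the same with $b=e_u-e_v$ to get $\tilde\psi$ and output $\tilde C_{uv}:=2m\cdot(\tilde\psi_u-\tilde\psi_v)$ (equivalently, sum two hitting-time estimates $\tilde H_{uv}+\tilde H_{vu}$). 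Since $b$ is computable in logspace and has bitlength $\poly(n)$, Proposition~\ref{prop:composition} bounds the total space by $O(\log N\cdot\log\log(N/\epsilon'))$ with $N=\poly(n)$.

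To bound the error, recall that Theorem~\ref{thm:main} returns $\tilde\phi$ with $\|\tilde\phi-\phi^{*}\|\le\epsilon'\cdot\|\phi^{*}\|$ for some solution $\phi^{*}$ of $L\phi^{*}=b$; since the algorithm in fact outputs $\tilde L^{+}b$ for a matrix $\tilde L^{+}$ with $\tilde L^{+}\vec 1=0$, its output is orthogonal to $\vec 1$, so we may take $\phi^{*}=L^{+}b$, the minimum-norm solution, with $\|\phi^{*}\|\le\|L^{+}\|\cdot\|b\|$. Here $\|b\|\le 4m=\poly(n)$, and $\|L^{+}\|=1/\gamma_2(L)$, where $\gamma_2(L)$ is the smallest nonzero eigenvalue of the Laplacian of the connected multigraph $G$; a standard spectral bound gives $\gamma_2(L)\ge\Omega(1/n^2)$ (and adding parallel edges only increases $\gamma_2$). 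Hence $\|\phi^{*}\|\le\poly(n)$, and since $H_{uv}=\phi^{*}_u-\phi^{*}_v$ is solution-independent,
\[
\bigl|\tilde H_{uv}-H_{uv}\bigr|=\bigl|(\tilde\phi_u-\tilde\phi_v)-(\phi^{*}_u-\phi^{*}_v)\bigr|\le\sqrt2\,\|\tilde\phi-\phi^{*}\|\le\sqrt2\,\epsilon'\cdot\|\phi^{*}\|\le\epsilon
\]
as soon as $\epsilon'=\epsilon/\poly(n)$ (with an extra factor $2m=\poly(n)$ absorbed for the commute-time estimate). With this choice $\log(N/\epsilon')=O(\log(n/\epsilon))$, so the space bound $O(\log N\cdot\log\log(N/\epsilon'))$ becomes $O(\log n\cdot\log\log(n/\epsilon))$, as claimed.

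I do not expect a genuine obstacle: given Theorem~\ref{thm:main}, this is a routine corollary. The two points that need care are (i) citing or reproving the electrical-network identities expressing $H_{uv}$ and $C_{uv}$ as solution-coordinate differences of a Laplacian system, and (ii) the parameter bookkeeping in the error analysis --- converting the solver's \emph{relative} $\ell_2$ guarantee on $\phi^{*}$ into an \emph{absolute} error on a single coordinate difference costs a factor $\|L^{+}\|\cdot\|b\|=\poly(n)$ and so forces $\epsilon'=\epsilon/\poly(n)$; the key point is that this only changes $\log\log(n/\epsilon')$ by a constant factor, so the space bound is preserved. Alternatively, one could appeal directly to the space-preserving reductions of \cite{CKPPSV16}.
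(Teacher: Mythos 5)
Your proof is correct, but it takes a genuinely different route from the paper. The paper invokes the identity $H_{uv}=(\vec{1}-e_v/d_v)^{\intercal}(I-M)^{+}(e_u-e_v)$ from \cite{CKPPSV16}, which involves the pseudoinverse of the \emph{normalized} Laplacian; this is why the paper first proves Corollary~\ref{cor:normalized} (applying $((D-A)D^{-1})^{+}$ to a vector via a projection correction) and then obtains $C_{uv}$ simply as $H_{uv}+H_{vu}$, with the error controlled by a Cauchy--Schwarz estimate and a $\poly(n,d)$ rescaling of $\epsilon$. You instead use the classical electrical-network identities for the \emph{unnormalized} Laplacian ($L\phi=d-2m\,e_v$ gives $H_{uv}=\phi_u-\phi_v$; commute time as $2m$ times effective resistance), which lets you call Theorem~\ref{thm:main} as a black box and bypass Corollary~\ref{cor:normalized} altogether; the price is that you must cite or reprove these identities (the paper outsources its identity to \cite{CKPPSV16} in the same spirit) and bound $\|L^{+}\|\cdot\|b\|=\poly(n)$, which plays the same role as the paper's $\sqrt{n}\cdot O(d^2n^2)$ factor --- in both arguments the relative-to-absolute conversion only costs a $\poly(n)$ shrinking of $\epsilon$, harmless inside the double logarithm. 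One small point to make explicit: the literal statement of Theorem~\ref{thm:main} only promises closeness to ``some'' solution $x^*$, and your coordinate-difference bound needs $\|x^*\|\le\poly(n)$; your justification (the output is $\tilde{L}^{+}b\perp\vec{1}$, so one may take $x^*=L^{+}b$) is consistent with how Theorem~\ref{thm:main} is actually proved in the paper, but it peeks inside the black box, so either state that you use this refined guarantee or note that the paper's proof of Theorem~\ref{thm:main} establishes it with $x^*=L^{+}b$.
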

\begin{proof}
Let $e_u, e_v$ be the vectors with a 1 in coordinates $u$ and $v$, respectively and zeros elsewhere. Let $M$ be the transition matrix of the random walk on $G$ and let $d_v$ be the degree of vertex $v$ in $G$. It is shown in \cite{CKPPSV16} that 
\[
H_{uv}=(\vec{1}-e_v/d_v)^{\intercal}(I-M)^{+}(e_u-e_v)
\]
We can compute an approximation to the vector $(I-M)^{+}(e_u-e_v)$ in space $O(\log n\cdot\log\log (n/\epsilon))$ using Corollary \ref{cor:normalized} and multiply the result by $(\vec{1}-e_v/d_v)^{\intercal}$ using an additional $O(\log n)$ space. To approximate $C_{uv}$, we simply approximate $H_{uv}$ and $H_{vu}$ and sum them.

Let $x^{*}=(I-M)^{+}(e_u-e_v)$ and $y=\vec{1}-e_v/d_v$. Let $\tilde{x}$ be the approximation of $x^{*}$ we compute using Corollary \ref{cor:normalized}.

For the approximation quality, note that
\begin{align*}
|y\cdot \tilde{x}-y\cdot x^{*}| &\leq \|y\|\cdot\|\tilde{x}-x^{*}\|\\
&\leq \sqrt{n}\cdot O(d^2n^2)\cdot\epsilon 
\end{align*}
where the first line uses Cauchy-Schwarz and the second line follows by Lemma \ref{lem:normtranslation} and Lemma \ref{lem:pseudoinv_implies_sol}. 
Changing $\epsilon$ by a factor of $\poly(n,d)$ completes the proof. 
\end{proof}
Note that since the values we are approximating are at least 1 (if not 0), Corollary \ref{cor:hittingcommute} also implies that we can achieve relative $\epsilon$ error. Our algorithm can also be used to compute escape probabilities.

\begin{definition}
In a graph $G=(V,E)$, for two vertices $u$ and $v$, the {\em escape probability} $p_{w}(u,v)$ denotes the probability that a random walk starting at vertex $w$ reaches $u$ before first reaching $v$. 
\end{definition}
\begin{corollary}
Given an undirected multigraph $G=(V,E)$, vertices $u,v\in V$, and $\epsilon>0$, there is a deterministic $O(\log n\cdot\log\log (n/\epsilon))$ space algorithm for computing a vector $\tilde{p}$ such that 
\[
\|\tilde{p}-p\|\leq \epsilon 
\]
where $p$ is the vector of escape probabilities $p_i(u,v)$ in $G$.
\end{corollary}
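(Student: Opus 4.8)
The plan is to express the escape-probability vector $p$ as an explicit affine image of $L^{+}(e_u-e_v)$, where $L=D-A$ is the unnormalized Laplacian of $G$, and then invoke Theorem~\ref{thm:main} to approximate that vector. As usual we may assume $G$ is connected: run Reingold's logspace algorithm~\cite{Reingold08} to find the connected components; for a vertex $w$ in a component containing neither $u$ nor $v$ we output $\tilde p_w=0$, and if $u$ and $v$ lie in distinct components the answer is $0/1$-valued and read off directly from component membership. So assume $G$ is connected and $u\ne v$. Conditioning on the first step of the walk shows that $p$ is the unique vector with $p_u=1$, $p_v=0$, and $\bigl((D-A)p\bigr)_w=0$ for every $w\notin\{u,v\}$; since the rows of $L$ sum to zero this forces $Lp=\rho\,(e_u-e_v)$ for a scalar $\rho$, and since $G$ is connected $\ker L=\mathrm{span}(\vec 1)$, so $p=\rho\,L^{+}(e_u-e_v)+c\,\vec 1$ for scalars $\rho,c$. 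Imposing $p_u-p_v=1$ gives $\rho=1/R$ where $R\eqdef (e_u-e_v)^{\intercal}L^{+}(e_u-e_v)=R_{\mathrm{eff}}(u,v)$, and imposing $p_v=0$ gives $c=-\rho\,z^{*}_v$; writing $z^{*}\eqdef L^{+}(e_u-e_v)$ and $R=z^{*}_u-z^{*}_v$ we obtain
\[
p=\frac{1}{R}\Bigl(z^{*}-z^{*}_v\,\vec 1\Bigr).
\]

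The algorithm is then straightforward: using Theorem~\ref{thm:main} with accuracy parameter $\epsilon'=\epsilon/\poly(n)$ (to be fixed in the analysis), compute $\tilde z$ with $\|\tilde z-z^{*}\|\le \epsilon'\|z^{*}\|$ (legitimate since $e_u-e_v\perp\vec 1$ lies in the image of $L$); then set $\tilde R\eqdef \tilde z_u-\tilde z_v$ and output $\tilde p\eqdef \tfrac{1}{\tilde R}\bigl(\tilde z-\tilde z_v\vec 1\bigr)$. Everything after the call to Theorem~\ref{thm:main} is a constant number of coordinate reads, a scalar division, and a vector shift, all computable in $O(\log n)$ additional space, so by the composition lemma for space-bounded algorithms (Proposition~\ref{prop:composition}) the overall space is $O\bigl(\log n\cdot\log\log(n/\epsilon')\bigr)=O\bigl(\log n\cdot\log\log(n/\epsilon)\bigr)$.

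For the error analysis I would first record the crude bounds $1/\poly(n)\le R\le\poly(n)$ and $\|z^{*}\|\le\poly(n)$, which follow from $e_u-e_v\perp\vec 1$ together with the eigenvalue estimates $\gamma_2(L)\ge 1/\poly(n)$ and $\gamma_n(L)\le\poly(n)$ available for Laplacians of multigraphs given as edge lists (cf.\ Lemma~\ref{lem:normtranslation} and Corollary~\ref{cor:hittingcommute}). Then $|\tilde R-R|\le\sqrt 2\,\|\tilde z-z^{*}\|\le\epsilon'\poly(n)$, so for $\epsilon'$ small enough $\tilde R\ge R/2\ge 1/\poly(n)$ and hence $|1/\tilde R-1/R|\le\epsilon'\poly(n)$; writing $\tilde z-\tilde z_v\vec 1=(z^{*}-z^{*}_v\vec 1)+e$ with $\|e\|\le(1+\sqrt n)\,\|\tilde z-z^{*}\|$ and expanding $\tilde p-p=(1/\tilde R-1/R)(z^{*}-z^{*}_v\vec 1)+(1/\tilde R)\,e$ yields $\|\tilde p-p\|\le\epsilon'\poly(n)$. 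Choosing $\epsilon'=\epsilon/\poly(n)$ accordingly gives $\|\tilde p-p\|\le\epsilon$; since $\log\log(n/\epsilon')=O(\log\log(n/\epsilon))$, the space bound is unchanged.

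I expect the main obstacle to be not the space accounting (which is immediate once the algorithm is assembled from Theorem~\ref{thm:main} and logspace arithmetic) but rather pinning down the exact identity $p=\tfrac1R(z^{*}-z^{*}_v\vec 1)$ and using it robustly: one must observe that both $\tilde R$ and $\tilde p$ are invariant under adding a multiple of $\vec 1$ to $\tilde z$, so the $\ker L$-ambiguity hidden in the phrase ``some $x^{*}$ with $Lx^{*}=b$'' in Theorem~\ref{thm:main} does no harm, and one must control the error through the division by $\tilde R$, which is exactly where the $1/\poly(n)$ lower bound on the effective resistance $R$ is needed.
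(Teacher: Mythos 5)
Your proof is correct (at the paper's level of rigor), but it takes a genuinely different route from the paper's. The paper does not re-derive an identity for escape probabilities: it quotes the formula $p=\beta(\alpha\cdot s+(I-M)^{+}(e_u-e_v))$ from \cite{CKPPSV16}, with $\alpha,\beta$ defined via the stationary distribution, and therefore works with the degree-normalized pseudoinverse, approximating $(I-M)^{+}(e_u-e_v)$ through Corollary~\ref{cor:normalized} (which itself requires a projection trick to pass from $(D-A)^{+}$ to $((D-A)D^{-1})^{+}$); the error analysis is then a Cauchy--Schwarz bound on the linear functionals defining $\alpha$ and $\beta$, with the division handled by the cited bound $\beta\le\poly(n,d)$ and a final rescaling of $\epsilon$ by $\poly(n,d)$. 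You instead prove the classical electrical identity $p=\tfrac1R\bigl(z^{*}-z^{*}_v\vec 1\bigr)$ with $z^{*}=L^{+}(e_u-e_v)$ and $R=R_{\mathrm{eff}}(u,v)$ from the harmonic characterization of $p$, and then call Theorem~\ref{thm:main} directly on the unnormalized Laplacian, so you bypass both the external formula and Corollary~\ref{cor:normalized}; the price is that you must supply the stability of the division yourself, which you do correctly via the $1/\poly(n)$ lower bound on the effective resistance (the analogue of the paper's bound on $\beta$). Your handling of the kernel ambiguity by noting that $\tilde R$ and $\tilde p$ are shift-invariant in $\tilde z$ is the right observation; strictly speaking Theorem~\ref{thm:main} as stated measures relative error against ``some'' solution $x^{*}$, but since the paper's own proof of that theorem establishes the guarantee with $x^{*}=L^{+}b$, your invocation is sound and matches how the paper itself uses the theorem in its corollaries. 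Your explicit treatment of disconnected inputs (via Reingold's algorithm) is a small additional care the paper leaves implicit. In short: same space accounting and the same ``absorb $\poly(n)$ into $\epsilon$'' endgame, but a more self-contained derivation of the escape-probability formula in place of the paper's appeal to \cite{CKPPSV16}.
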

\begin{proof}
Let $M$ be the transition matrix of the random walk on $G$, $s$ be the stationary distribution of $G$ and $S$ be the diagonal matrix with the stationary distribution on the diagonal. For undirected graphs $s_u=d(u)/2|E|$ where $d(u)$ is the degree of vertex $u$ in $G$, so $s$ and $S$ can be computed in $O(\log n)$ space. Let $e_u$, $e_v$ be the vectors with a 1 in entries $u,v$, respectively and zeros elsewhere. Let $p$ be the vector of escape probabilities for vertices $u$ and $v$. It is shown in \cite{CKPPSV16} that 
\[
p=\beta(\alpha\cdot s+(I-M)^{+}(e_u-e_v))
\]
where 
\[
\alpha =-e_v^{\intercal}S^{-1}(I-M)^{+}(e_u-e_v)
\]
and 
\[
\beta = \frac{1}{s_u(e_u-e_v)^{\intercal}S^{-1}(I-M)^{+}(e_u-e_v)}
\]
So computing an approximation of $p$ amounts to approximating $(I-M)^{+}(e_u-e_v)$, which we can do using Corollary \ref{cor:normalized}, along with a constant number of multiplications and additions of vectors, which can all be computed in $O(\log n\cdot\log\log (n/\epsilon))$ space.

Let $x^{*}=(I-M)^{+}(e_u-e_v)$ and $y=-e_v^{\intercal}S^{-1}$. Let $\tilde{x}$ be the approximation of $x^{*}$ we compute using Corollary \ref{cor:normalized}. 

\begin{align*}
\|y\cdot \tilde{x}- y\cdot x^{*}\|&\leq \|y\|\cdot\|\tilde{x}-x^{*}\|\\
&\leq \sqrt{n}\cdot O(d^2n^2)\cdot\epsilon 
\end{align*}
Changing $\epsilon$ by a factor of $\poly(n,d)$ shows that we can get a $1/\poly(n,d)$ approximation to $\alpha$. A similar argument shows that we can achieve a $1/\poly(n,d)$ approximation of $\beta$ (noting that $\beta\leq \poly(n,d)$ as shown in \cite{CKPPSV16}). The approximation of the escape probability vector $p$ follows.
\end{proof}
\end{section}
\newpage

\bibliographystyle{alphanum}
\bibliography{lap,pseudorandomness}

\newpage 
\appendix
\begin{section}{Proof of Proposition \ref{prop:psdfacts}}
\label{app:psdfacts}
Fix $v\in\mathbb{R}^{n}$ and $\epsilon, \epsilon_1, \epsilon_2\geq 0$. 
\begin{enumerate}
	\item Suppose $e^{\epsilon}v^{\intercal}Xv\geq v^{\intercal}Yv\geq e^{-\epsilon}v^{\intercal}Xv$. Multiplying the whole expression by $e^{-\epsilon}$ gives 
    \[
   v^{\intercal}Xv\geq e^{-\epsilon}v^{\intercal}Yv\geq e^{-2\epsilon}v^{\intercal}Xv
    \]
    Multiplying by $e^{\epsilon}$ gives
    \[
   e^{2\epsilon}v^{\intercal}Xv\geq e^{\epsilon}v^{\intercal}Yv\geq v^{\intercal}Xv
    \]
    It follows that
    \[
   e^{\epsilon}v^{\intercal}Yv \geq v^{\intercal}Xv\geq e^{-\epsilon}v^{\intercal}Yv
    \]
    \item If $e^{\epsilon}v^{\intercal}Xv\geq v^{\intercal}Yv\geq e^{-\epsilon}v^{\intercal}Xv$ and $c\geq 0$ then 
    \[
    c\cdot e^{\epsilon}v^{\intercal}Xv\geq c\cdot v^{\intercal}Yv\geq c\cdot e^{-\epsilon}v^{\intercal}Xv
    \]
    which implies $c\cdot X\approx_{\epsilon}c\cdot Y$
    \item Suppose $e^{\epsilon}v^{\intercal}Xv\geq v^{\intercal}Yv\geq e^{-\epsilon}v^{\intercal}Xv$. Then 
    \begin{align*}
        e^{\epsilon}v^{\intercal}(X+W)v &=e^{\epsilon}v^{\intercal}Xv+e^{\epsilon}v^{\intercal}Wv\\
        &\geq v^{\intercal}Yv +v^{\intercal}Wv\\
        &=v^{\intercal}(Y+W)v
    \end{align*}
    and the case with $e^{-\epsilon}$ is symmetric.
    \item Suppose $e^{\epsilon}v^{\intercal}Xv\geq v^{\intercal}Yv\geq e^{-\epsilon}v^{\intercal}Xv$ and $e^{\epsilon}v^{\intercal}Wv\geq v^{\intercal}Zv\geq e^{-\epsilon}v^{\intercal}Wv$. Then 
    \begin{align*}
        e^{\epsilon}v^{\intercal}(X+W)v &=e^{\epsilon}v^{\intercal}Xv+e^{\epsilon}v^{\intercal}Wv\\
        &\geq v^{\intercal}Yv +v^{\intercal}Zv\\
        &=v^{\intercal}(Y+Z)v
    \end{align*}
    and the case with $e^{-\epsilon}$ is symmetric.
    \item Suppose $e^{\epsilon_1}v^{\intercal}Xv\geq v^{\intercal}Yv\geq e^{-\epsilon_1}v^{\intercal}Xv$ and $e^{\epsilon_2}v^{\intercal}Yv\geq v^{\intercal}Zv\geq e^{-\epsilon_2}v^{\intercal}Yv$. Then
    \begin{align*}
        e^{\epsilon_1+\epsilon_2}v^{\intercal}Xv &=e^{\epsilon_2}\cdot\left(e^{\epsilon_1}v^{\intercal}Xv\right)\\
        &\geq e^{\epsilon_2}\cdot\left(v^{\intercal}Yv\right)\\
        &\geq v^{\intercal}Zv
    \end{align*}
    and the case with $e^{-\epsilon_1-\epsilon_2}$ is symmetric.
     \item Suppose that for all $v\in\mathbb{R}$, $e^{\epsilon}v^{\intercal}Xv\geq v^{\intercal}Yv\geq e^{-\epsilon}v^{\intercal}Xv$ and let $M$ be a matrix. Let $y=Mv$. Then
    \begin{align*}
    e^{\epsilon}v^{\intercal}M^{\intercal}XMv&=e^{\epsilon}y^{\intercal}Xy\\ 
    &\geq y^{\intercal}Yy\\
    &=v^{\intercal}M^{\intercal}YMv
    \end{align*}
    and the other inequality follows by symmetry.
    \item  Since $X$ and $Y$ have the same kernel, it suffices to prove the claim restricted to the orthogonal complement of the kernel and so we may assume without loss of generality that $X$ and $Y$ are positive definite and show that $X\approx_{\epsilon}Y\implies X^{-1}\approx_{\epsilon}Y^{-1}$. We will show that in general for positive definite matrices $A$ and $B$ $A\preceq B$ implies $B^{-1}\preceq A^{-1}$. The claim then follows by taking $B=e^{\epsilon}X$ and $A=Y$ and then $B=Y$ and $A=e^{-\epsilon}X$.  

Since $B$ is a positive definite symmetric matrix, it has a positive definite square root $B^{1/2}$. Let $M=B^{-1/2}AB^{-1/2}$ and note that $M^{-1}=B^{1/2}A^{-1}B^{1/2}$. Then $M\preceq I$ if and only if $M^{-1}\succeq I$ because both correspond to $M$ having eigenvalues at most 1. Applying Part 6 to $A\preceq B$ by multiplying the left and right by $B^{-1/2}$ gives $M\preceq I$ and hence $M^{-1}\succeq I$. Applying Part 6 again with $B^{-1/2}$ gives $B^{-1/2}M^{-1}B^{-1/2} \succeq B^{-1/2}B^{-1/2}$, which implies $A^{-1}\succeq B^{-1}$.

\item Suppose $X$ and $Y$ are $n\times n$ matrices and $I$ is $m\times m$. Fix $v\in \mathbb{R}^{m\cdot n}$ and let $v_1$ be a vector of the first $n$ entries of $v$, $v_2$ be a vector of the next $n$ entries of $v$ etc. Observe that
\begin{align*}
v^{\intercal}(I\otimes X)v &= v_1^{\intercal}X v_1 +v_2^{\intercal}Xv_2 +\ldots+v_m^{\intercal}X v_m\\
&\leq e^{\epsilon} \cdot \left(v_1^{\intercal}Y v_1 +\ldots+v_m^{\intercal}Y v_m\right)\\ 
&=e^{\epsilon}\cdot v^{\intercal}(I \otimes Y) v
\end{align*}
and the other inequality follows by symmetry.
\end{enumerate}
\end{section}

\begin{section}{Proof of Proposition \ref{prop:identity}}
\label{app:identity}
\begin{proof}
We'll show that the two sides have the same eigenvectors and eigenvalues. Let $v$ be any eigenvector of $L$ with eigenvalue $\gamma\not=0$ (so $v\perp\vec{1}$). Then $v$ is also an eigenvector of $I,J,M$, and $I-M^2$ with eigenvalues 1,0,$\lambda=1-\gamma$, and $1-\lambda^2\not=0$, respectively. So when we apply the right-hand side to $v$, we get
\begin{align*}
\frac{1}{2}\left(1-0+(1+\lambda)\cdot\frac{1}{1-\lambda^2}\cdot(1+\lambda)\right)\cdot v &=\frac{1}{1-\lambda}\cdot v\\
&=\frac{1}{\gamma}\cdot v\\
&=L^{+}v
\end{align*}
If $v$ has eigenvalue 0 under $L$, then $v$ is a multiple of the all $1$'s vector (since $G$ is connected) and has eigenvalues 1,1,1, and 0 under $I,J, M$, and $I-M^2$, respectively. So applying the right-hand side to $v$ gives 
\begin{align*}
\frac{1}{2}\left(1-1+(1+1)\cdot 0\cdot(1+1)\right)\cdot v&=0\\
&=L^{+}v
\end{align*}
\end{proof}
\end{section}
\begin{section}{Proof of Lemma \ref{lem:equivalence}}
\label{app:equivalence}
\begin{proof}
Note that a $1/2$-lazy transition matrix can be written as $M=\frac{1}{2}(T+I)$ where $T$ is another transition matrix. Suppose $\lambda(G)\leq \lambda$. This means that for all $v\perp \vec{1}$ we have
\begin{align*}
v^{\intercal}Lv&=v^{\intercal}Iv-v^{\intercal}Mv\\
&\geq (1-\lambda)v^{\intercal}v\\
&=e^{-\epsilon}\cdot v^{\intercal}(I-J)v
\end{align*}

We also have
\begin{align*}
v^{\intercal}Lv &=\frac{1}{2}(v^{\intercal}Iv-v^{\intercal}Tv)\\
&\leq \frac{1}{2}(v^{\intercal}v+v^{\intercal}v)\\
&\leq e^{\epsilon}v^{\intercal}(I-J)v\\
\end{align*}
In addition if $v=\vec{1}$, we have 
\begin{align*}
v^{\intercal}Lv&=0\\
&=v^{\intercal}(I-J)v
\end{align*}
It follows that $L\approx_{\epsilon}I-J$

Conversely, suppose that $L\approx_{\epsilon}I-J$. Then for all $v\perp \vec{1}$ we have
\begin{align*}
v^{\intercal}Lv&\geq e^{-\epsilon}v^{\intercal}(I-J)v\\
&=(1-\lambda)v^{\intercal}v
\end{align*}
This means that the nontrivial eigenvalues of $L$ are at least $1-\lambda$, i.e. all the nontrivial eigenvalues of $M$ are at most $\lambda$. Since $G$ is $1/2$-lazy all the eigenvalues are nonnegative and hence $\lambda(G)\leq \lambda$.
\end{proof}
\end{section}
\begin{section}{Proof of Lemma \ref{lem:pseudoinv_implies_sol}}
\label{app:pseudoinv_implies_sol}
\begin{proof}
Let $x^{*}=L^{+}b$, which is a solution to $Lx=b$ by Proposition \ref{prop:pseudoinversesolution}. We want to show that
\[
\|x^{*}-\tilde{x}\|^{2}_{L}\leq 2\epsilon\cdot\|x^{*}\|_{L}^{2}
\]

Note that 
\begin{align}
\label{eq:Lnorm}
\|x^{*}\|_{L}^{2}&= x^{*^{\intercal}}Lx^{*} \nonumber\\ 
&=(L^{+}b)^{\intercal}L(L^{+}b)\nonumber\\
&=b^{\intercal}L^{+}LL^{+}b\nonumber\\
&=b^{\intercal}L^{+}b 
\end{align}
By Proposition \ref{prop:psdfacts} Part 7, $\tilde{L}\approx_{\epsilon}L$ so we can write
\begin{align*}
\|x^{*}-\tilde{x}\|^{2}_{L}&=x^{*^{\intercal}}Lx^{*}+\tilde{x}^{\intercal}L\tilde{x}-2\tilde{x}^{\intercal}Lx^{*}\\
&\leq b^{\intercal}L^{+}b+e^{\epsilon}\tilde{x}^{\intercal}\tilde{L}\tilde{x}-2(\tilde{L}^{+}b)^{\intercal}b\\
&=b^{\intercal}L^{+}b+(e^{\epsilon}-2)b^{\intercal}\tilde{L}^{+}b
\end{align*}
Since $\epsilon\leq \ln(2)$, $(e^{\epsilon}-2)$ is nonpositive. So
\begin{align*}
\|x^{*}-\tilde{x}\|_{L}^{2}&\leq b^{\intercal}L^{+}b+(e^{\epsilon}-2)b^{\intercal}\tilde{L}^{+}b\\
&\leq b^{\intercal}L^{+}b+e^{-\epsilon}(e^{\epsilon}-2)b^{\intercal}L^{+}b\\
&=2(1-e^{-\epsilon})b^{\intercal}L^{+}b\\
&\leq 2\epsilon\|x^{*}\|_{L}^{2}
\end{align*}
where the last line comes from \ref{eq:Lnorm}.
\end{proof}
\end{section}
\end{document}